\newcommand{\indicator}[1]{\boldsymbol{1}_{  \{  #1 \}  }    }
\newcommand{\indc}{\mbox{\large {${\mathds 1}$}}}
\newtheorem{thm}{{\bf Theorem}}
\newtheorem{cor}{Corollary}
\newtheorem{lemma}{{\bf Lemma}}
\newcommand{\kcmnts}[1] {\textcolor{red}{kav: #1}}
\newcommand{\kcmnt}[1]{} % {\textcolor{red}{kav: #1}}
\newcommand{\kt}[1]{{#1}}  %\color{green} #1}}
\newcommand{\eop}{\hfill{$\blacksquare$}}
\newcommand{\mue}{{\mu_\epsilon}}
\newcommand{\lame}{{\lambda\epsilon}}
\newcommand{\eqdist}{\stackrel{d}{=}}
\newcommand{\ra}{\rightarrow} 
\newcommand{\ua}{\uparrow}
\newcommand{\prob}[1]{P\left(#1\right)}
\newcommand{\extrabits}[1]{} %{#1}
\newcommand{\ignore}[1]{}
\renewcommand{\vspace}[1]{} %% Kills all vspace usages for the Tech Report
\newcommand{\g}{ {\bf g}}
\newcommand{\rmin }{{\underline \rho}}
\newcommand{\rmax }{{\bar \rho}}
\newcommand{\dmin }{{\underline d}}
\newcommand{\dmax }{{\bar d}}
\renewcommand{\i} {{ \epsilon }}
\renewcommand{\t}{{\tau}}
\newcommand{\convas}{\stackrel{a.s.}{\to}}
\newcommand{\beq}{\begin{eqnarray*}}
\newcommand{\eeq}{\end{eqnarray*}}
\newcommand{\Exp}[1]{E\left[#1\right]} %Expectation
\renewcommand{\i} {{ \epsilon }}
\renewcommand{\t}{{\tau}}
\newcommand{\mut}{{\mu_\tau}}
\newcommand{\B}{\mathcal{B}}
\renewcommand{\S}{\mathcal{S}}
\newcommand{\N}{\mathbb{N}}
\newcommand{\conftext}[1]{\ifthenelse{\boolean{wiopt}}{#1}{}}
\newcommand{\TRtext}[1]{\ifthenelse{\boolean{wiopt}}{}{{\color{red}#1}}}
\newcommand{\revision}[1]{#1}
\newcommand{\jk}[1]{\ifthenelse{\boolean{showcomments}} {\textcolor{green}{(JK says: #1)}} {}}
\newcommand{\kc}[1]{\ifthenelse{\boolean{showcomments}} {\textcolor{blue}{(KC says: #1)}} {}}
\newcommand{\vk}[1]{\ifthenelse{\boolean{showcomments}} {\textcolor{blue}{(VK says: #1)}} {}}
\newcommand{\h}{h}
\def\BibTeX{{\rm B\kern-.05em{\sc i\kern-.025em b}\kern-.08em
    T\kern-.1667em\lower.7ex\hbox{E}\kern-.125emX}}
\begin{document}

\title{Dynamic scheduling in a partially fluid, partially lossy
  queueing system\thanks{A preliminary version of this work was
    presented at the WiOpt conference, 2019 \cite{Chaudhary2019}.}}

\author{Kiran Chaudhary\thanks{Kiran Chaudhary and Veeraruna Kavitha
    are with the Industrial Engineering and Operations Research (IEOR)
    Department at IIT Bombay.}  \and Veeraruna Kavitha\footnotemark[2]
  \and Jayakrishnan~Nair\thanks{Jayakrishnan Nair is with the
    Electrical Engineering Department at IIT Bombay; he acknowledges
    support from DST and CEFIPRA.}}%

\maketitle

\begin{abstract}

  We consider a single server queueing system with two classes of
  jobs: {\it eager} jobs with small sizes that require service to
  begin almost immediately upon arrival, and {\it tolerant} jobs with
  larger sizes that can wait for service. While blocking probability
  is the relevant performance metric for the eager class, the tolerant
  class seeks to minimize its mean sojourn time. In this paper, we
  analyse the performance of each class under dynamic scheduling
  policies, where the scheduling of both classes depends on the
  instantaneous state of the system. This analysis is carried out
  under a certain fluid limit, where the arrival rate and service rate
  of the eager class are scaled to infinity, holding the offered load
  constant. Our performance characterizations reveal a {\it (dynamic)
    pseudo-conservation law} that ties the performance of both the
  classes to the standalone blocking probabilities associated with the
  scheduling policies for the eager class.
  % , when the eager class is subjected to the same scheduling policy
  % oblivious of the tolerant queue.
  Further, the performance is robust to other specifics of the
  scheduling policies. We also characterize the Pareto frontier of the
  achievable region of performance vectors under the same fluid limit,
  and identify a (two-parameter) class of {\it Pareto-complete}
  scheduling policies.
\end{abstract}

\section{Introduction}
\label{sec:intro}

\ignore{Modern cellular systems handle two distinct classes of traffic
  --- voice and data. Voice calls must be either admitted or dropped
  upon arrival. Thus, the relevant performance metric for voice calls
  is the probability of a call drop. On the other hand, data traffic
  is elastic and can be queued. As a result, the performance of data
  traffic is best captured by the delay statistics. Given that voice
  and data calls share the same underlying service capacity (wireless
  spectrum), the admission control and scheduling in a cellular system
  has to be designed to optimally balance the performance experienced
  by these two (heterogeneous) classes.}

In this paper, we analyse a single server queueing system with two
heterogeneous customer classes. One class of customers is {\it
  eager}---they require service to commence (almost) immediately upon
arrival.
%Moreover, eager customers demand a certain minimum service rate
%throughout their service duration. Thus, eager customers must be
%either admitted or blocked upon arrival.
The performance of the eager class is captured by the blocking
probability, i.e., the long run fraction of eager customers that are
blocked. The second class of customers is {\it tolerant}---these
customers can tolerate delays and may be queued. The performance of
this class is captured via the mean response time of the tolerant
customers.

Service systems of this kind are motivated by modern cellular
networks, which handle voice calls (which must be either admitted or
dropped upon arrival) as well as data traffic (which can be
queued).  \revision{ Another motivation comes from  super-markets, where it is
 common practice to provide  prioritized service to customers with fewer items via dedicated express
counters; these customers have limited patience, and may balk/renege if service is not provided  almost  immediately.}  
However, such part-loss, part-queueing multi-class service
systems are analytically intractable even under the simplest
scheduling disciplines (see \cite{Mahabhashyam05} and the references
therein). In this paper, we derive tractable approximations of the
performance experienced by each class using a certain fluid limit,
referred to as the short-frequent-jobs (SFJ) limit.

The SFJ limit corresponds to scaling the arrival rate as well as the
service rate of the eager class to infinity, such that the offered
load is held constant. This gives rise to a time-scale separation
between the two classes, the eager class operating at a faster
time-scale. Under the SFJ limit, we obtain a closed form
characterization of the performance of both classes under a broad
class of dynamic policies that allow the admission control of the
eager class and the scheduling of both classes to be dependent on the
current state of the system.\footnote{From here on, we follow the
  convention that admission control (if used) is included in the eager
  scheduling policy.\label{footnote:eps-scheduling}} Interestingly, a
dynamic pseudo-conservation law follows from this characterization (in
the SJF limit)---the performance of both classes depends only on the
standalone blocking probabilities (resulting when a single eager
scheduling policy is used oblivious to the tolerant state) associated
with the eager scheduling schemes employed for each occupancy level of
the tolerant queue. In particular, the performance does not depend on
the specific scheduling policies that generate those blocking
probabilities, as well as on the details of the tolerant scheduler
(subject to work conservation and serial, non-anticipative
processing). Conservation laws typically allow one to compute the
performance of a complex system in terms of the performance of simpler
ones. In our case, once the relevant standalone blocking probabilities
are known (these can usually be computed easily as they result from
the analysis of a single-class loss system), one can compute the
performance of both the classes.

%Specifically, the analysis in~\cite{ANOR} applies to a class of
%(partially) static scheduling policies, wherein the eager class is
%scheduled in a manner that is oblivious to the state of the tolerant
%queue, with the tolerant queue utilizing the service capacity left
%unused by the eager class. Remarkably, under this class of static
%policies,~\cite{ANOR} established a pseudo-conservation law relating
%the performance of the tolerant class to the blocking probability of
%the eager class, robust to the specific admission control and
%scheduling policy employed by the eager class.
%
%The goal of this paper is to extend the analysis of \cite{ANOR} to
%{\it dynamic} policies, that allow the admission control and
%scheduling of the eager class to be dependent on the size of the
%tolerant queue.  , which continues to be served using the service
%capacity left unused by the eager class.
%
\ignore{Under the dynamic policies considered in our paper, the
  performance of both the classes are inter-dependent.  Consequently,
  along with seeing the time varying service process, the tolerant
  class experience the service process which depends on the number of
  customers in the tolerant system.  As the tolerant customer receive
  the time and state varying service process the analysis of the
  tolerant class becomes more difficult than that of in the case with
  static policy. However, we showed that under the partial fluid
  regime, the analysis of both the classes is tractable. Considering
  the dynamic policies clearly expands the achievable region of the
  system (i.e., the set of achievable performance vectors across all
  policies under consideration) relative to the restriction to static
  policies.

  This clearly expands the achievable region of the system (i.e., the
  set of achievable performance vectors across all policies under
  consideration) relative to the restriction to static policies.
}
 
We further analyse the Pareto frontier of performance vectors
achievable under the class of dynamic schedulers, which defines the
set of efficient operating points for the system. Remarkably, we are
able to identify a {\it Pareto-complete} family of scheduling policies
(we call a family of schedulers Pareto-complete if it spans the entire
Pareto frontier over its parameter space). This family, parametrized
by $(L,d),$ where $L \in \mathbb{N}$ and $d \in (0,1),$ blocks eager
customers with the minimum blocking probability when the tolerant
occupancy is less than~$L,$ with probability $d$ when the occupancy
equals $L,$ and with the maximum blocking probability when it
exceeds~$L.$

Finally, via numerical experiments, we show that our performance
characterizations under the SFJ limit are extremely accurate in the
pre-limit (i.e., for moderate values of arrival and service rates of
the eager class). This shows that our approximations, which are
provably accurate under the SFJ fluid limit, are also applicable in
practice.
%Further, we demonstrate the expansion of the achievable region
%corresponding to our class of dynamic schedulers, relative to the
%static achievable region.

The remainder of this paper is organised as follows. After a review of
the related literature below, we describe our system model and state
some preliminary results in Section~\ref{sec:model}. We also give some
examples of dynamic schedulers in Section~\ref{sec:model}. Under the
SFJ limit, we characterize the performance of the tolerant class and
the eager class in Section~\ref{sec:tolerant}. We formally define the
dynamic achievable region in Section~\ref{sec:Achievable Region}, and
demonstrate the Pareto-complete family of dynamic schedulers in
Section~\ref{sec:pareto}. We conclude the paper in
Section~\ref{sec:conclusion}. A summary of our notation can be found
in   Table~\ref{Table_notations} in Appendix~\ref{Appendix_sample}.

\subsection*{Related Literature}

\ignore{A study has been done for a similar model in \cite{ANOR} but
  with static policies, in which the admission control of eager class
  does not depend upon the number of customers in the tolerant
  class. . In \cite{ANOR}, a pseudo-conservation law has been
  established which relates the blocking the blocking probability of
  the eager customers and the performance experienced by the tolerant
  class. These conservation laws are derived under a partial fluid
  scaling, in which the arrival and service rate of eager customers
  are scaled to infinity, such that the offered load of this class
  remains constant. This scaling regime is also characterised as the
  SFJ scaling.  The conservation laws derived in \cite{ANOR} implies
  that, under the partial scaling regime mentioned above, the
  performance experienced by the tolerant class depends only on the
  blocking probability of the eager class and not on the admission
  control and scheduling policy that produce that blocking
  probability.} 
The present paper is a follow-up of our prior work \cite{Value,ANOR},
which analyses the same heterogeneous queueing system under the SFJ
limit for a class of (partially) \emph{static} scheduling
policies. Under this class of policies, the scheduling of the eager
class is oblivious to the state of the tolerant queue, with the
tolerant queue simply utilizing the service capacity left unused by
the eager class. Clearly, this class of schedulers is restrictive. In
the present paper, we consider general \emph{dynamic} policies, where
eager scheduling depends on the occupancy of the tolerant queue. This
generalization, which requires a non-trivial analysis, results is a
substantial expansion of the achievable region of feasible performance
vectors (as is shown in Sections~\ref{sec:Achievable Region} \&
\ref{sec:pareto}). Moreover, the generalization to dynamic policies
necessitates the identification of a Pareto-complete family of
schedulers (which is the goal of Section~\ref{sec:pareto}); within the
restricted class of static schedulers analysed in \cite{Value,ANOR},
it turns out that all policies are efficient.

Aside from \cite{ANOR,Value}, the only prior work we are aware of that
analyses a part-queueing, part-loss service system is
\cite{CD_close}. In this paper, the authors obtain the performance
metrics for all classes in closed form, assuming exponential
inter-arrival and service times for all classes, under a certain {\it
  static} priority scheduling discipline. However, we note that
\cite{CD_close} does not attempt to address the \emph{tradeoff}
between the performance of the two classes, which is central to the
present work.
 
From an application standpoint, this paper is also related to the
considerable literature on sharing the capacity of a cellular system
between voice and data traffic; for example,
see~\cite{Li04,Tang,Zhang}. In this line of work, both voice and data
classes are treated as lossy, the focus being on characterizing the
blocking probability of each class under different (static and
dynamic) admission rules. However, to the best of our knowledge, these
papers do not analyse the achievable region of performance vectors, or
characterize its Pareto frontier.

We also note that there is a well-developed literature on multiclass
queueing systems with multiple tolerant classes on a single server
(e.g., conservation laws, pioneered by \cite{Kleinrock1965}). The
achievable region is well understood in such a `homogeneous'
multi-class setting
\cite{coffman,shanthikumar1992multiclass}. Interestingly, in this
case, it is known that the static and dynamic achievable regions
coincide (see \cite{Value}), in contrast with the `heterogeneous'
multi-class setting considered here, where we see that the static
achievable region is a strict subset of the dynamic achievable region.
%In \cite{Value}, authors indicate that the dynamic achievable region,
%with heterogeneous classes, is strictly bigger using one example
%policy.
Moreover, the achievable region in the homogeneous setting is its own
Pareto frontier (i.e., all points of the achievable region are
efficient) under work conserving policies, also in contrast with the
heterogeneous setting considered here.

This paper is an extended version of \cite{Chaudhary2019}, which
analyses a restricted class of dynamic scheduling policies, wherein
the number of distinct eager sub-policies is assumed to be finite. In
this paper, we extend the analysis to general dynamic policies, and
further provide complete proofs of all results.

%%%%%%%%%%%%%%%%%%%%%%%%%%%%%%%%%%%%%%%%%%%%%%%%%%%%%%%%%%
%%%%%%%%%%%%%%%%%%%%%%%%%%%%%%%%%%%%%%%%%%%%%%%%%%%%%%%%%%
%%%%%%%%%%%%%%%%%%%%%%%%%%%%%%%%%%%%%%%%%%%%%%%%%%%%%%%%%%

\section{System Description}
\label{sec:model}

%In this section, we describe our system model.
We consider a single server queueing system with two job (a.k.a.
customer) classes: eager customers (also denoted as
$\epsilon$-customers) demand service immediately upon arrival, whereas
tolerant customers (also denoted as $\tau$-customers) can wait in a
queue (of infinite capacity) to be served.\footnote{\revision{We
    discuss the generalization where eager customers have limited
    patience and abandon/renege after a short wait time in
    Section~\ref{sec:conclusion}; see also~\cite{ANOR}.}} The
$\tau$-customers can be interrupted either partially (i.e., their
service rate may be reduced) or completely by $\i$-customers, but not
by other $\tau$-customers.  Without loss of generality, we assume {\it
  a unit server speed.}
We assume that $\epsilon$-customers (respectively, $\t$-customers)
arrive according to a Poisson process with rate $\lambda_{\i}$
(respectively, $\lambda_\t$). The sequence of job sizes
(a.k.a. service requirements) for both the classes is i.i.d., with
$B_\i$ denoting a generic $\i$ job size, and $B_\t$ denoting a generic
$\t$ job size. Throughout, {\it we assume that $B_\t$ is exponentially
  distributed with mean $1/\mu_\t,$ and that $\Exp{B_\i^2} < \infty.$}
Let $\mu_\i := 1/\Exp{B_\i}.$

\subsection{Dynamic schedulers}

We consider dynamic scheduling, wherein the scheduling policy for each
class depends upon the state (occupancy) of both the classes.  An
eager scheduling (sub)policy (a set of rules that assign fractions of
server capacity to various eager customers, which may also include the
admission control rules) is chosen depending upon the tolerant
state. The overall dynamic scheduler is characterized by a sequence of
such eager sub-policies, one corresponding to each value of tolerant
occupancy---the tolerant queue in turn utilizes the service capacity
left unused by the eager class in a work-conserving manner. Our
dynamic schedulers can thus be viewed to be of nested type: a
top-level policy chooses the sub-policy used for scheduling the
$\i$-class based on the occupancy (state) of the
$\t$-class. \revision{The sub-policy in turn determines, based on the
  number of eager customers in the system, the admission rule for
  subsequent arrivals, and the service rate to allocate to each
  existing eager job. It is important to note that these nested
  schedulers are not restrictive; rather this is simply a convenient
  representation of dynamic schedulers; details and examples follow.}
As a result, the service processes of the two classes are
interdependent (unlike in the case of static scheduling as considered
in \cite{ANOR,Value}).

\revision{Specifically, let $(X_\t (t), X_\i (t))$ represent the
  number of tolerant and eager customers in the system at time~$t$. We
  consider stationary Markov scheduling policies that determine the
  service plan for both classes (including, possibly, admission
  control of eager customers), depending on the tuple
  $(X_\t (t), X_\i (t)).$
  % With new arrival/departure of $\t$ or $\i$ class, the
  % scheduling policy changes the service plan of both the classes.
  Such Markov policies are known to be sufficient for a wide range of
  sequential decision problems (see \cite{Puter}); in the context of
  our model, they are sufficient if one additionally assumes that
  eager customers have exponentially distributed service requirements.
  % We consider stationary Markov
  % scheduling
  % policies, as it is well known that these policies are sufficient
  % for
  % a wide range of sequential decision problems (\cite{Puter}); when
  % one considers exponential service times for eager class also, then
  % these policies are sufficient for the problems considered in
  % section
  % \ref{sec:pareto}.
  % For example, if $X_\t (t) = j$ the $\i$-class is served according
  % to a given sub-policy $j$ and the remaining server capacity (which
  % depends upon the number $\i$-customers) is used to serve
  % $\t$-class.
  As an example, consider the following policy parameterized by
  $(L, N) \in \mathbb{N}^2$: a) when $(X_\t(t), X_\i (t)) = (j, c)$
  with $c \le N$, and $j \le L$ then each $\i$-customer is served with
  rate (speed) $1/N$, while longest waiting $\t$-customer is served at
  rate $(N-c)/N$; and b) if $j > L$, the longest waiting $\t$-customer
  is served with the full service capacity (i.e., at rate 1), and no
  further $\i$-customers are admitted. It is convenient to view such
  policies as being \emph{nested}; for the above example, when
  $X_\t (t) \le L$ the eager class is served according to one
  sub-policy, while it is served according to a second sub-policy when
  tolerant occupancy is more than $L$. Under the former sub-policy,
  the eager class is served as an $M/G/N/N$ queue (with each `server'
  taken to have speed~$1/N$), while under the latter sub-policy, no
  eager customers are admitted. There is of course the issue of how to
  deal with the existing eager customers in the system when there is
  an arrival/departure in the $\t$ queue, leading to change in the
  eager sub-policy. In the above example, this corresponds to the
  handling of any eager customers that remain when the $\t$ occupancy
  increases from $L$ to $L+1.$ This issue is addressed later in this
  section (see Assumption {\bf A}.1 and the discussion in
  Section~\ref{sec:example_models}).}

Note that the eager sub-policy switches whenever there is an
arrival/departure in the tolerant queue, though the further details of
each sub-policy depend only on the eager occupancy.
%the eager scheduling policy also depends upon eager occupancy;
%the scheduling policy of the eager class is (possibly) changed only
%when the tolerant state changes, i.e., eager class can be scheduled
%using new rules, when the tolerant state changes.
We typically refer to the $\i$-sub-policy implemented when there are
$j$ tolerant jobs in the system as sub-policy~$j.$

\noindent {\bf $\i$-schedulers:} Note that while the occupancy of the
tolerant queue dictates the selection of the eager sub-policy, the
sub-policies themselves are oblivious to the state of the tolerant
queue.
%Moreover, what we refer to as a sub-policy includes system
%decisions (e.g., service capacity allocated to the $\i$-class,
%admission control, amount of waiting space, etc.) as well as
%behavioural aspects of the impatient eager customers (e.g., eager
%customers may balk based on the system occupancy).
%
We make the following additional assumptions.  Some examples of
schedulers that satisfy these are presented in
Section~\ref{sec:example_models}:
\begin{enumerate}[{\bf A}.1]
\item To simplify the transition from one $\i$-sub-policy to the next,
  we assume that all $\i$-customers are dropped when there is an
  arrival/departure in the $\t$-queue.
%  \footnote{Assumption {\bf A}.1
%    is a technical condition required in our proofs. However, under
%    the SFJ limit, this `flushing' of the $\i$-system is only
%    performed at a bounded rate (since arrivals/departures in the
%    $\t$-queue occur at a bounded rate), while the arrival rate of the
%    $\i$ system scales to infinity. Thus, we expect that this
%    assumption will not impact the blocking probability of the eager
%    class (this is also evident from the Monte Carlo simulation based
%    study presented in Section~\ref{sec:pareto}).}
\item The scheduling of each sub-policy depends only on the number of
  $\i$-jobs present in the system.
\item On arrival, an eager job is either admitted into service or gets
  dropped/blocked. If admitted, an eager job receives service at a
  minimum rate/speed of $c_{\min} > 0$ until its departure \emph{under
    any sub-policy.}
  % \sout{There exists a uniform upper bound $\mathcal{K}$ on the
  % number of $\i$-jobs in the system at any time across
  % sub-policies.}  
%\item Across sub-policies, the second moment of the $\i$-busy cycle,
%  defined as the interval between the start of two successive $\i$
%  busy periods, is uniformly bounded.
\end{enumerate}

Assumption {\bf A}.1 is a technical condition required in our proofs.
In Section~\ref{sec:pareto}, we show that this assumption has
negligible influence on the performance of either class under our
partial fluid limit.\footnote{ Under the SFJ limit, this `flushing' of
  the $\i$-system is only performed at a bounded rate (since
  arrivals/departures in the $\t$-queue occur at a bounded rate),
  while the arrival rate of the $\i$ system scales to infinity. Thus,
  we expect that this assumption will not impact the blocking
  probability of the eager class (this is also evident from the Monte
  Carlo simulation based study presented in
  Section~\ref{sec:pareto}).} \revision{We are also able to relax
  Assumption {\bf A}.1 for the special case of exponentially
  distributed eager jobs (see Theorem~\ref{Thm_withou_A1}). As
  discussed before, Assumption {\bf A}.2 is not restrictive; it allows
  all (non-size based) Markov policies.
  % This assumption allows
  % a broad class of schedulers even for the general case,
  We provide examples of such schedulers in
  Section~\ref{sec:example_models}.}
Assumption~{\bf A}.3 implies that the eager class operates as a
\emph{loss system}. Indeed, the requirement of a minimum service
rate/speed $c_{\min} > 0$ captures the `eagerness' of eager
customers. This assumption also implies that there exists a uniform
upper bound $\mathcal{K}$ on the number of $\i$-jobs in the system at
any time \emph{across sub-policies} (since the server is assumed to
have a unit speed).

The above assumptions imply the following condition (proved as
Lemma~\ref{lemma:bp_upper_bound} in Appendix~\ref{sec:busycycles});
the same is stated as a (redundant) assumption to emphasize that the
uniform convergence required in our analysis is predominantly due to
the following uniform bound on the eager busy cycles:
\begin{enumerate}[{\bf A}.4]
\item Across sub-policies, the second moment of the $\i$-busy cycle,
  defined as the interval between the start of two successive $\i$
  busy periods, is uniformly bounded.
\end{enumerate}
%

%Assumption~{\bf A}.4 may appear restrictive at first glance, but is
%actually a consequence of Assumptions~{\bf A}.1--3.  (see
%Lemma~\ref{lemma:bp_upper_bound} in
%Appendix~\ref{sec:busycycles}). Despite its redundancy, we state it
%here to highlight the key role it plays in our analysis.
%

% It is important to note that the above assumptions do not impose
% that eager customers must begin service upon arrival. Instead,
% Assumptions~{\bf A}.1--4 allow for a rich class of models for the
% scheduling and behaviour of the eager class, including balking and
% reneging (see Section~\ref{sec:example_models}). Indeed, the
% `eagerness' of the eager class is captured by Assumptions~{\bf A}.3
% and~{\bf A}.4, which impose uniform (across $\i$ sub-policies) upper
% bounds on (i) the number of eager jobs in the system at any time,
% and (ii) the second moment of eager busy cycles. Assumption~{\bf
% A}.4 can usually be shown to hold so long as
% $\Exp{B_\i^2} < \infty$ by upper bounding by the eager busy period
% by the busy period of an associated $M/G/\infty$ queue, which is a
% well understood object \cite{??}; see
% Section~\ref{sec:example_models}.

\noindent {\bf $\t$-schedulers:} Next, we state our assumptions on the
scheduling policy of the tolerant class.
\begin{enumerate}[{\bf B}.1]
\item The $\tau$-scheduler is work conserving, i.e., it utilizes all
  the service capacity left unused by $\epsilon$-jobs, so long as the
  $\tau$-queue is non-empty.
\item The $\tau$-jobs are served in a serial fashion, i.e.,
  $\tau$-jobs cannot pre-empt one another.
\item The $\tau$-scheduler is blind to the size of $\tau$-jobs.
\end{enumerate}

Assumption~{\bf B}.1 implies that the tolerant class experiences a
time varying service process, which depends on both the $\tau$-state
as well as the $\epsilon$-state. Assumptions~\textbf{B}.2-3 imply that
we consider $\tau$-schedulers which are non-pre-emptive and
non-anticipative, for instance, {\it first come first served} (FCFS),
{\it last come first served} (LCFS), and {\it random order of service}
(see
\cite[Chapter~29]{Harchol-Balter}). %Some example models are described
%at the end of this section.

We require another assumption ({\bf B}.4) regarding the stability of
the $\t$-queue under the SFJ limit, when $\i$-customers employ a
single sub-policy (irrespective of the $\t$-state).
% These are referred to as $\t$-static schedulers in \cite{ANOR}.
We provide the required background on these schedulers, define
formally the SFJ scaling, and state the resulting pseudo-conservation
law (see \cite{ANOR} for more details), after which we state
Assumption~{\bf B}.4.

\subsection{$\t$-static schedulers and background }
\label{sec:background}

We now consider the special case of $\t$-static scheduling, where a
single $\i$-sub-policy is used at all times (irrespective of
$\t$-state); this case was analysed in~\cite{ANOR}. Let ${P_B}_j$
represent the blocking probability \kt{(long run fraction of losses)}
of the $\i$-class, if sub-policy-$ j$ is used in a $\t$-static manner
(${P_B}_j$ was referred to as the standalone blocking probability of
sub-policy~$j$ in Section~\ref{sec:intro}); we also refer to these as
$\t$-static blocking probabilities.

\noindent {\bf Short-Frequent Jobs (SFJ) Scaling:} Under the SFJ
scaling (as in \cite{ANOR}), we let $\lambda_\i \ra \infty$ and
$\mu_\i \ra \infty,$ such that $\rho_\i := {\lambda_\i}/{\mu_\i}$
remains constant. This corresponds to scaling the arrival as well as
the service rate of the eager class to infinity proportionately, so
that the offered load (the long term rate at which work arrives into
the system) is held constant. We use $\mu_\i$ as the scale parameter
for this partial scaling.
Specifically, we scale the job size distribution of the eager class
as, $B^{\mu_\i}_\i \eqdist B^1_\i/\mu_{\i}$, where $B^{\mu_\i}_\i$
denotes a generic eager job size at scale $\mu_\i$ and $\eqdist$ is
equality in distribution. This scaling (plus Poisson arrivals) under
{\bf A}.2 ensures that at scale $\mue,$ the occupancy process of the
$\i$-class gets time-scaled (fast-forwarded) by~$\mue;$ the details
can be found in Appendix~A. Note that the tolerant workload remains
unscaled. Thus, the SFJ scaling may be viewed as a time-scale
separation, with the eager class operating at a faster time-scale.

\noindent {\bf Static Pseudo Conservation:} Let
$\Omega^{\mu_\epsilon}_{j}(t)$ represent the total amount of server
capacity left unused by the $\epsilon$-customers in time interval
$[0,t]$, under sub-policy $j$ operating in a $\t$-static manner. Note
that $\Omega^{\mu_\epsilon}_{j}(t)$ is the (cumulative) service
process seen by the $\t$-system. Then by \cite[Lemma 1]{ANOR}, for all
$\mue,$ the asymptotic (in time) growth rate of
$\Omega^{\mu_\epsilon}_{j}(t)$ satisfies:

\vspace{-5mm}
{\small \begin{align}
    \label{Eqn_nujs}
          \lim_{t \to \infty}  \frac{\Omega^{\mu_\epsilon}_{j}(t) }{t}
          \ \to \   \nu_j  := 1 - \rho_\epsilon( 1 - {P_B}_j)
          \text{ almost surely.}
        \end{align}}
      In other words, {\it the long run time average service rate seen by the
      $\t$-queue equals $\nu_j,$ which depends only on the blocking
      probability ${P_B}_j$ of the eager class, and not on the specific
      $\i$-sub-policy that produced that blocking probability.} Further under
      the SFJ limit, the service process seen by the $\t$-class becomes
      uniform. Specifically it follows from \cite[Theorem~1]{ANOR} that, as
      $\mue \to \infty$ in the SFJ limit,
\begin{align}
\nonumber
& \sup_{t \le W}  \left | \Omega^{\mu_\epsilon}_{j}(t)   - \nu_j t \right |  \to   0 \ a.s. \mbox{ for any finite $W,$ and } \\
& \Upsilon_j^{\mu_\epsilon}  \  \convas  \   \frac { B_\tau }{ \nu_j} \mbox{, both  for any  initial $\epsilon$-state, } \label{Eqn_Upsilon_conv}
\end{align}
  where $\Upsilon_j^{\mu_\epsilon}$ denotes the time required to
  finish $B_\tau$ amount of work using the service process
  $\Omega^{\mu_\epsilon}_{j}(\cdot)$.
  %These results were instrumental in deriving the static
  %Pseudo-conservation law.  
  This uniformity of the service process under SFJ limit enables a
  closed form characterization of the performance of the $\t$-class
  (see \cite[Theorems~2,3]{ANOR}). A key feature of the above results
  is a \emph{pseudo-conservation law} that expresses the performance
  of the tolerant class purely in terms of the blocking probability of
  the eager class, independent of the underlying $\i$-policy that
  produced the blocking probability. {\it We show an analogous
  pseudo-conservation for dynamic scheduling policies in this~paper.
}

Finally, we state the following assumption, which ensures that the
$\t$-queue remains stable under each
 %of the $\g$ sub-policies
 eager sub-policy, when
they are applied in a $\t$-static manner.

\begin{enumerate}[\textbf{B}.4] 
\item   {There exists $\delta_\tau  > 0$
  such that $\rho_{j} := \frac{ \lambda_\tau } { \mu_\tau \nu_j } < 1-\delta_\tau
  \mbox{ for all } j.$}
\end{enumerate}
Assumption~\textbf{B}.4 guarantees that the $\t$-system is stable in
the dynamic setting as well, as is shown by
Lemma~\ref{lemma:TauStability}.

\subsection{Some example models}
\label{sec:example_models}

We begin with the description of one example system that satisfies our
assumptions, in which the system capacity is not completely
transferred to one class at any time, but rather a fraction of it is
used by each $\i$-customer, whilst the left is utilized by one
$\t$-customer.
 
\noindent {\bf Capacity Division (CD-$(p,K)$) policy:} Each
$\i$-customer uses {\small $(1/K)$} part of the service capacity.  If
there are {\small $0 \le \ell \le K$} number of $\i$-customers
receiving service, then $\i$-customers are served at a net service
rate of {\small $({\ell}/{K})$}, while the $\t$-customer in service
(if any) is served at rate {\small $((K-\ell)/K)$}. This continues up
to {\small $K$} $\i$-customers, and any further $\i$-arrival departs
without service. Note that whenever an existing $\i$-customer departs,
{\it the service rate of the $\t$-customer gets increased by {\small
    $1/K$}}. Of course, $1/K \geq c_{\min}$ to satisfy Assumption
\textbf{A}.3. Further there is a prior admission control on
$\epsilon$-arrivals, they are admitted with probability $p$
independent of all other events.  This is the description of the
$\i$-sub-policy (as in \cite{Value,ANOR}). Now the top-level policy
varies the probability of admission $p$ (and possibly the maximum $\i$
occupancy $K$) based on the occupancy of the $\t$-queue.
 
We refer to the above $\i$-sub-policy as Capacity Division or briefly
as the CD-$(p,K)$ policy.  Note that the
CD-$(p,K)$ policy captures a multi-server setting for the eager class.
While the $\i$-scheduler need not be work conserving, the $\t$-class
uses all the left over capacity. %One can have other $\i$-sub-policies
% either designed by the system and/or influenced by the impatient
% response of the $\i$-class.  We describe a few~here.
Another example of an $\i$-sub-policy is the following.
    
\noindent {\bf Limited Processor Sharing (LPS-$(p,K)$) policy:} This
sub-policy, denoted by LPS-$(p,K)$ (as in \cite{Value,ANOR}), admits
an incoming eager job into the system with probability $p,$ so long as
the number of eager jobs already in service is less than or equal to
$K.$ The entire service capacity of the server is shared equally
between the eager jobs in service (i.e., each eager job gets served at
rate $1/\ell$ when there are $\ell$ jobs in service). As before,
$1/K \geq c_{\min}.$ Note that under the LPS-$(p,K)$ policy, the
tolerant class receives service only when there are no eager jobs in
the system.

\ignore{
  \noindent {\bf Balking and Reneging:} \sout{As a part of a
    particular $\i$-sub-policy, the system may allocate a certain
    number of servers (recall that the system may be viewed as
    multi-server from the standpoint of the eager class) and might
    allocate a certain amount of waiting space for $\i$-class.  The
    $\i$-customers might respond to the resources allocated based on
    their patience levels: a) an $\i$-customer may not enter the
    system depending upon the $\i$-number already in system according
    to some probabilistic rule, as in {\it balking} models; or b) may
    leave the system after waiting for an exponentially distributed
    patience time of rate $\alpha^{\mue},$ as in {\it reneging}
    models.  In the case of reneging, we will require that the
    parameter $\alpha^{\mue}$ scales linearly with $\mue$, i.e.,
    $\alpha^\mue = \alpha \mue$ for some $\alpha \in (0, \infty)$} (as
  in \cite{ANOR}).
}

  \noindent {\bf Top-level policies:} The top-level policy can chose
  any one of these sub-policies for any $\t$-state. For example, when
  the $\t$-occupancy is greater than a certain threshold $L,$ one may
  allocate fewer individual servers to $\i$-customers (using, for
  example, the CD policy), while one may allocate the entire capacity
  to the $\i$-class and may serve them in LPS mode when the
  $\t$-occupancy is smaller.

  \revision{ \noindent {\bf Switching between sub-policies:} Note that
    under Assumption {\bf A}.1, the transition between sub-policies,
    triggered by an arrival/departure in the $\t$ queue, is
    simplified---any left over eager jobs at the time of the
    transition are simply `flushed'. This simplifies our analysis, and
    as argued before, would have a negligible impact on the
    performance of both classes in the SFJ scaling regime. In
    practice, a more natural implementation strategy would be to
    simply let the new sub-policy dictate the scheduling of the
    existing eager jobs in the system after the change of the $\t$
    occupancy. This implies a possible re-assignment of the service
    rates of eager customers at the time of the $\t$-transition.
    % ongoing $\i$-busy period to complete under the same (previously
    % operational) sub-policy when there is an arrival/departure in
    % the
    % tolerant queue, and only apply the next sub-policy once the
    % eager
    % queue becomes empty.
    We prove formally that doing this does not affect our main results
    under the SFJ limit, for the special case of exponentially
    distributed eager job sizes (see Theorem~\ref{Thm_withou_A1}).}

  % \sout{Alternatively, one may allocate
  % (say) only one server to the $\i$-class when the $\t$-occupancy is
  % high (or low), which might lead to increased levels of
  % balking/reneging by the $\i$-class.}

  % In the following section, we characterize the performance of the
  % tolerant and eager classes under the SFJ limit.

%%%%%%%%%%%%%%%%%%%%%%%%%%%%%%%%%%%%%%%%%%%%%%%%%%%%%%%%%%
%%%%%%%%%%%%%%%%%%%%%%%%%%%%%%%%%%%%%%%%%%%%%%%%%%%%%%%%%%
%%%%%%%%%%%%%%%%%%%%%%%%%%%%%%%%%%%%%%%%%%%%%%%%%%%%%%%%%%

%%%%%%%%%%%%%%%%%%%%%%%%%%%%%%%%%%%%%%%%%%%%%%%%%%%%%
%%%%%%%%%%%%%%%%%%%%%%%%%%%%%%%%%%%%%%%%%%%%%%%%%%%%%

\section{Performance Characterization under the SFJ limit}
\label{sec:tolerant}

In this section, we characterize the performance of the $\t$-class and
the $\i$-class under the SFJ limit.

Recall that in our system, both classes are scheduled in a
\emph{dynamic} manner, with the $\i$-sub-policy being selected based
on current $\t$-occupancy, and the $\t$-queue in turn being served (in
a work conserving fashion) using the unused service process of the
$\i$-sub-policy. Thus, the two classes experience random, time
varying, state-dependent, and also \emph{inter-dependent} service
processes. This makes a precise performance evaluation intractable;
indeed, no closed form characterization of the performance of the
tolerant class is possible even in the simplified setting where the
eager class is scheduled by a policy that is oblivious to the tolerant
state; see \cite{Mahabhashyam05,ANOR}. In this section, we show that
under the (partially fluid) SFJ scaling, tractable performance
characterizations are possible.

To provide intuition for the form of our results, recall that the
timescale separation resulting from the SFJ scaling results in the
tolerant queue obtaining, in the limit, service at a `steady' rate of
$\nu_{j} := 1 - \rho_\epsilon \left(1 - P_{B_j}\right)$ when there are
$j$ tolerant jobs in the system. Here, $P_{B_j}$ is the standalone, or
$\t$-static blocking probability associated with sub-policy~$j.$ One
might then anticipate that the limiting performance of the tolerant
class is described in terms of a system with a state-dependent service
rate, i.e., where the service rate is a (deterministic) function of the
queue occupancy. We prove that this is indeed the case. Before we
state our main results, we first describe this limit system, which we
refer to as a state-dependent service rate M/M/1 (SDSR-M/M/1) queue.

\subsection{State-dependent service rate M/M/1 queue}

An SDSR-M/M/1 queue sees the same arrival process as an M/M/1 queue:
job arrivals are according to a Poisson process (of rate
$\lambda$). Further the job sizes are independent and exponentially
distributed with mean $1/\mu$. However, unlike the standard M/M/1
queue, the SDSR-M/M/1 queue has a state dependent service rate
(a.k.a. server speed). Specifically, the server operates with service
rate $\nu_j$ if the number of jobs in the queue (including the job in
service) equals $j.$ Thus, the SDSR-M/M/1 queue is parametrized by
$(\lambda, \mu, \bm{\nu})$, where $\bm{\nu} = \{\nu_j\}_{j \geq 0}$ is
the vector of service rates.

The number of jobs in the SDSR-M/M/1 queue evolves as a continuous
time Markov process with birth-death structure (see Figure
\ref{Fig_SDSR-system}). 
\begin{figure}
	\begin{center}
		\includegraphics[width= .9 \textwidth]{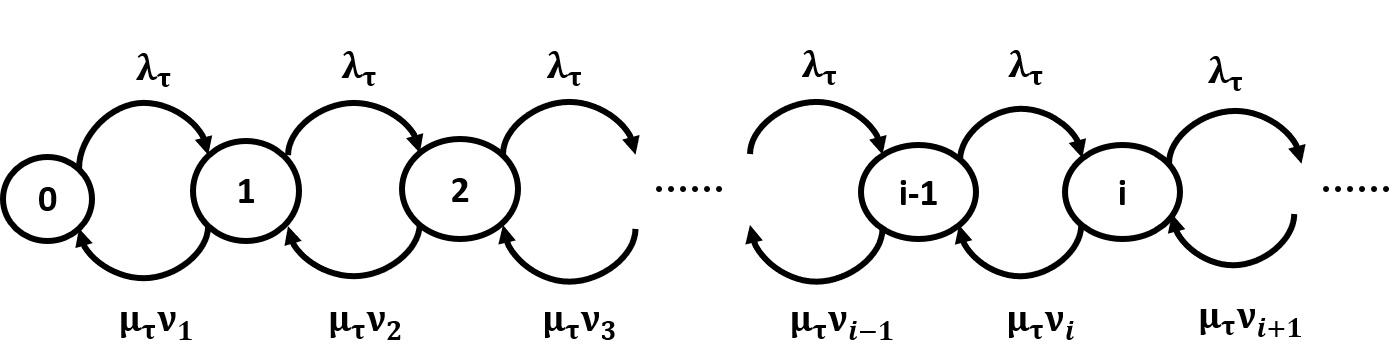}
		\caption{SDSR-M/M/1 queue with parameter $(\lambda_\t, \mu_\t, \bm{\nu} = \{\nu_1, \nu_2, \nu_3, \cdots\}).$}
		\label{Fig_SDSR-system}
	\end{center}
\end{figure}
Moreover, Assumption~{\bf B.}4 ensures that this Markov process is
positive recurrent. Thus, its steady state distribution can be
obtained by elementary techniques. In particular, the stationary
distribution ${\bm \pi} := \{\pi_{i} \}_{i = 0}^{\infty}$, is given by
(see \cite{Gallager2012}):
\begin{eqnarray}
\label{Eqn_SDSR_SD}
\pi_{i}   &=&   \frac{  \indicator{i=0} + \indicator{i \ge 1}  \prod_{\ell=1}^i \rho_{\ell} }{1 + \sum_{k \ge 1}    \prod_{\ell=1}^k \rho_{\ell} } \  
\mbox{,}  \  \rho_{\ell} :=  \frac{\lambda_\t}{\mu_\t \nu_{\ell}}.
\end{eqnarray}

\subsection{Main results}

Recall that the scheduler is specified by a sequence of eager
sub-policies one for each tolerant occupancy; $P_{B_j}$ being the
$\t$-static probability of the eager sub-policy used under tolerant
occupancy $j$. The performance of the tolerant class under the SJF
limit, and under any scheduler, is characterized as follows. The
steady state occupancy of the tolerant queue converges, in
distribution, and also in expectation, to the corresponding quantities
of an SDSR-M/M/1 system parameterized by
$(\lambda_\t, \mu_\t, \bm{\nu}).$

\begin{thm} {\bf{[Stationary distribution of $\t$-occupancy}]}
\label{Thm_Lim_SD_Conv}
Assume $\bm{A}$.1-4 and $\bm{B}$.1-4. Under the SFJ limit, the steady
state number $\t$-occupancy converges in distribution to
the steady state occupancy in an SDSR-M/M/1
$(\lambda_\tau, \mu_\tau, (\nu_1, \nu_2, \cdots ))$ queue, with
  $$
  \hspace{10mm}
  \nu_j :=    ( 1-  \rho_\epsilon (1-{P_B}_j) ) ,  \mbox{ for }  j \geq 0.
  $$
  That is,
  $$
  {\pi}_i^\mue \stackrel{\mue \to \infty}{\longrightarrow} {\pi}_i \ \mbox{for all} \  i \geq 0, 
  $$ where ${\pi}_i$ is given by \eqref{Eqn_SDSR_SD}.
\end{thm}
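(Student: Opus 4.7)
The plan is to view the tolerant occupancy process $N_\tau^{\mu_\epsilon}(\cdot)$ as a semi-Markov process on $\mathbb{N}$ whose state changes only at tolerant arrivals or departures, and then invoke the $\tau$-static SFJ limit result (\ref{Eqn_Upsilon_conv}) one tolerant epoch at a time to read off the limit as the SDSR-M/M/1 chain. The semi-Markov structure is clean here because (i) tolerant inter-arrival times are i.i.d.\ exponential and tolerant job sizes are exponential (memoryless), so the remaining workload resets in distribution at every transition, and (ii) Assumption \textbf{A}.1 flushes the eager system at every tolerant event, so each tolerant epoch begins with the eager state empty, allowing (\ref{Eqn_Upsilon_conv}) to be applied directly inside that epoch with the sub-policy $j$ dictated by the current tolerant state.

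First I would characterize the holding time in tolerant state $j$. When $j \ge 1$, two independent clocks compete: an $\mathrm{Exp}(\lambda_\tau)$ clock for the next arrival and the service-completion time $\Upsilon_j^{\mu_\epsilon}$ under sub-policy $j$, started from an empty eager system. By (\ref{Eqn_Upsilon_conv}), $\Upsilon_j^{\mu_\epsilon}\convas B_\tau/\nu_j$, and since $B_\tau\sim\mathrm{Exp}(\mu_\tau)$ the limit is $\mathrm{Exp}(\mu_\tau\nu_j)$; when $j=0$ the holding time is simply $\mathrm{Exp}(\lambda_\tau)$. In the limit the competing exponentials produce exactly the SDSR-M/M/1 birth--death rates $\lambda_\tau$ up and $\mu_\tau\nu_j$ down in state $j$, giving weak convergence of $N_\tau^{\mu_\epsilon}(\cdot)$ to the SDSR-M/M/1 process on every finite time horizon.

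From here I would promote this to convergence of the stationary distributions via the semi-Markov renewal identity $\pi_i^{\mu_\epsilon} \propto \tilde\pi_i^{\mu_\epsilon}\,\mathbb{E}[H_i^{\mu_\epsilon}]$, where $\tilde\pi^{\mu_\epsilon}$ is the stationary distribution of the embedded birth--death jump chain and $H_i^{\mu_\epsilon}$ is the holding time in state $i$. The embedded transition probabilities depend only on comparisons of the two clocks above and therefore converge to those of the SDSR-M/M/1 embedded chain; the mean holding times $\mathbb{E}[H_i^{\mu_\epsilon}]$ converge to $1/(\lambda_\tau+\mu_\tau\nu_i)$ using (\ref{Eqn_Upsilon_conv}) together with the uniform second-moment bound on eager busy cycles supplied by \textbf{A}.4 (which underwrites uniform integrability of the competing clocks). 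Finally, Assumption \textbf{B}.4 with slack $\delta_\tau > 0$ yields a Foster--Lyapunov drift bound uniformly in $\mu_\epsilon$ and $j$, hence tightness of $\{\pi_i^{\mu_\epsilon}\}$ and uniform summability of the normalizing constants, allowing me to pass to the limit term-by-term and match the formula (\ref{Eqn_SDSR_SD}).

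The main obstacle I expect is precisely this uniform-in-$\mu_\epsilon$ step: although (\ref{Eqn_Upsilon_conv}) gives per-state a.s.\ convergence, concluding that the entire stationary distribution converges requires controlling infinitely many states simultaneously, and the uniform geometric bound $\rho_j\le 1-\delta_\tau$ from \textbf{B}.4 is indispensable because without it tail mass could escape to infinity in the limit. A secondary technical point is showing that Assumption \textbf{A}.1 (flushing at tolerant events) has asymptotically negligible effect on the identity $\nu_j = 1-\rho_\epsilon(1-P_{B_j})$: since tolerant events occur at $O(1)$ rate while eager dynamics are sped up by $\mu_\epsilon$, the post-flush transient occupies a vanishing fraction of each tolerant epoch, so (\ref{Eqn_Upsilon_conv}) still applies in each epoch with an empty eager initial state.
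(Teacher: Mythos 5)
Your proposal is essentially correct and follows the same overall architecture as the paper: pass to the embedded birth--death chain at tolerant arrival/departure epochs (legitimate thanks to \textbf{A}.1 and memorylessness), prove convergence of its transition probabilities uniformly over the tolerant state $j$, use \textbf{B}.4 to get a geometric tail bound that prevents mass escaping to infinity, and then transfer back to the continuous-time stationary distribution. The one genuinely different component is that last transfer step: you invoke the semi-Markov renewal identity $\pi_i^{\mue}\propto\tilde\pi_i^{\mue}\,\Exp{H_i^{\mue}}$ and therefore must also prove convergence of the mean holding times $\Exp{H_i^{\mue}}=\Exp{\min(A_\t,\Upsilon_i^{\mue}(B_\t))}\to 1/(\lambda_\t+\mu_\t\nu_i)$ (easy by dominated convergence, since the minimum is dominated by $A_\t$) together with uniform summability of $\tilde\pi_i^{\mue}\Exp{H_i^{\mue}}$ to handle the normalizing constant. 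The paper instead uses a PASTA/level-crossing identity (Lemma~\ref{lemma:EMC_dist}), $\tilde\pi_0=\tfrac12\pi_0$ and $\tilde\pi_i=\tfrac12(\pi_{i-1}+\pi_i)$, which bypasses holding times entirely and lets the convergence of $\{\pi_i^{\mue}\}$ be read off from that of $\{\tilde\pi_i^{\mue}\}$ by a termwise inversion with no extra summability argument; the two identities are consistent (detailed balance gives $\pi_i(\lambda_\t+\mu_\t\nu_i)=\lambda_\t(\pi_{i-1}+\pi_i)$), so either route closes the proof. The other difference is one of rigor rather than strategy: you obtain the limiting transition probabilities by applying \eqref{Eqn_Upsilon_conv} state by state and appeal to \textbf{A}.4 for uniformity, whereas the paper does not use \eqref{Eqn_Upsilon_conv} here at all but instead derives explicit quantitative bounds $|q_j^{\mue}-q_j^{\infty}|\le\varpi(\mue)$ via a busy-cycle decomposition and moment-generating-function estimates (Lemmas~\ref{Lemma_Intermediate_result} and~\ref{Lemma_Intermediate_result_1}), with $\varpi(\mue)$ controlled by $\sup_j\Exp{\B_j^2}$; you correctly identify this uniformity as the main obstacle and name the right hypothesis, but your sketch would need those quantitative estimates to be carried out, since pointwise a.s.\ convergence of $\Upsilon_j^{\mue}$ does not by itself deliver a rate that is uniform in $j$.
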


\begin{thm} {\bf [Stationary expected $\t$-occupancy]}
\label{Thm_tau}
Assume $\bm{A}$.1-4 and $\bm{B}$.1-4. Under the SFJ limit, the
stationary expected number of $\t$-customers converges to that of the
limit system described in Theorem~\ref{Thm_Lim_SD_Conv}. That is,
\begin{eqnarray*}
\label{Eqn_SDSR_Expected_Number}
   E^\mue[N] \stackrel{\mue \to \infty}{\longrightarrow}  \sum_{i=1}^\infty i \pi_{i},
\end{eqnarray*}
where ${\pi}_i$ is given by \eqref{Eqn_SDSR_SD}.
\end{thm}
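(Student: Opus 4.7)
The plan is to derive Theorem~\ref{Thm_tau} from Theorem~\ref{Thm_Lim_SD_Conv} via a uniform integrability argument. Since Theorem~\ref{Thm_Lim_SD_Conv} already provides weak convergence of the stationary $\t$-occupancy $N^{\mue}$ to $N \sim {\bm \pi}$, convergence of means reduces to showing that the family $\{N^{\mue}\}_{\mue}$ is uniformly integrable under its stationary distributions. I would target the stronger statement
\[
\sup_{\mue \geq \mue_0} E^{\mue}\big[(N^{\mue})^2\big] < \infty
\]
for some $\mue_0 > 0$, which immediately yields uniform integrability of $N^{\mue}$ and hence the desired convergence of expectations.

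For this second-moment bound I would apply a Foster--Lyapunov drift argument on the joint Markov process $(N^{\mue}(t), M^{\mue}(t))$, where $M^{\mue}(t)$ denotes the eager occupancy (uniformly bounded by $\mathcal{K}$ in view of Assumption~\textbf{A}.3), with Lyapunov function $V(n,m) := n^2$. Since $V$ does not depend on $m$, only the $\t$-transitions contribute to its drift: a $\t$-arrival changes $V$ by $2n+1$ at rate $\lambda_\t$, while a $\t$-departure (when $n\ge 1$) changes $V$ by $-(2n-1)$ at rate $\mu_\t\, c(n,m)$, where $c(n,m)\in[0,1]$ is the instantaneous fraction of server capacity left for the $\t$-class under sub-policy~$n$ at eager occupancy~$m$. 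By~\eqref{Eqn_nujs}, the long-run $m$-average of $c(n,\cdot)$ when sub-policy $n$ is run $\t$-statically equals $\nu_n$ exactly; together with Assumption~\textbf{B}.4, which forces $\mu_\t\nu_n - \lambda_\t \ge \lambda_\t\delta_\t/(1-\delta_\t) =: c_0 > 0$ uniformly in $n$, this suggests an averaged drift of order $-c_0\,n$ for large $n$.

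The principal difficulty is that, in the pre-limit, at an arbitrary instant $t$ the eager state $M^{\mue}(t)$ need not be drawn from the stationary distribution of the current sub-policy, so the infinitesimal drift of $V$ does not literally realize the $\nu_n$ average above. I would close this gap by exploiting Assumption~\textbf{A}.1 (the eager state is flushed to $0$ at every $\t$-event) together with the SFJ time-scale separation: between two consecutive $\t$-events, which occur at $O(1)$ rate, the eager system runs under a single sub-policy starting from empty for a duration that is long on its own fast time scale. The uniform second-moment control on eager busy cycles (Assumption~\textbf{A}.4), combined with the uniform convergence in~\eqref{Eqn_Upsilon_conv}, can then be invoked to show that the time-average of $c(n,\cdot)$ over a $\t$-inter-event interval concentrates around $\nu_n$, uniformly in $\mue \geq \mue_0$ and in $n$.

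Assembling these ingredients, I would apply the discrete-time Foster--Lyapunov criterion to the chain embedded at $\t$-event epochs: the expected increment of $V$ over one $\t$-inter-event interval is bounded above by $-c_0\,n$ (plus a $\mue$-independent constant) outside a finite set, which gives a $\mue$-uniform bound on the stationary expectation of $V$ and thus on $E^{\mue}[(N^{\mue})^2]$. The hardest step is expected to be making the pre-limit averaging of $c(n,\cdot)$ over a $\t$-inter-event interval simultaneously uniform across sub-policies $n$ and across $\mue$; this is precisely where Assumptions~\textbf{A}.1, \textbf{A}.4 and~\eqref{Eqn_Upsilon_conv} must be pushed through carefully.
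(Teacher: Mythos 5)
Your overall strategy (weak convergence from Theorem~\ref{Thm_Lim_SD_Conv} plus uniform integrability of the stationary occupancies) is sound and genuinely different from the paper's, but the step that is supposed to deliver uniform integrability fails as written. With $V(n)=n^2$, the one-step drift of the chain embedded at $\t$-events is
\[
p_j^{\mue}\bigl(V(j+1)-V(j)\bigr)+q_j^{\mue}\bigl(V(j-1)-V(j)\bigr)=2j\bigl(p_j^{\mue}-q_j^{\mue}\bigr)+1\le -c_0\,j+b,
\]
and the Foster--Lyapunov/comparison theorem then bounds the stationary expectation of the \emph{negative part of the drift}, i.e.\ it yields $\sup_{\mue\ge\mue_0}E^{\mue}[N^{\mue}]<\infty$, \emph{not} a bound on $E^{\mue}[(N^{\mue})^2]$. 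Your sentence ``which gives a $\mue$-uniform bound on the stationary expectation of $V$'' is exactly the step that is not licensed: a drift condition $\Delta V\le -f+b\indicator{C}$ controls $E_\pi[f]$, not $E_\pi[V]$. A uniform first-moment bound together with weak convergence does not imply convergence of means, so the argument does not close. The fix is easy --- take $V(n)=n^3$ (or $n^{2+\delta}$), whose drift is of order $-c_0 n^2$, giving the second-moment bound you wanted --- but as stated there is a genuine gap.

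Two further remarks. First, once you work with the embedded chain at $\t$-event epochs, the ``hardest step'' you flag (uniform concentration of the time-averaged leftover capacity $c(n,\cdot)$ around $\nu_n$ over a $\t$-inter-event interval) is unnecessary: the embedded transition probabilities are exactly the $p_j^{\mue},q_j^{\mue}$ of Lemma~\ref{Lem_TP_Convergence}, whose uniform convergence (and hence the uniform bound $p_j^{\mue}-q_j^{\mue}\le -c_0<0$ for large $\mue$) is already established there; the entire continuous-time generator discussion can be dropped. You would still need to pass from the embedded stationary distribution back to the time-stationary one, which is handled by Lemma~\ref{lemma:EMC_dist} (in particular $\pi_i^{\mue}\le 2\tilde\pi_i^{\mue}$). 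Second, the paper's own proof is more elementary: it exploits the explicit product form \eqref{Eqn_SD_pre-limit} together with the uniform bound $p_i^{\mue}/q_i^{\mue}\le 1-\delta_\tau+\varepsilon<1$ to dominate $i\pi_i^{\mue}$ by a summable geometric envelope $2i(2+\varepsilon)(1-\delta_\tau+\varepsilon)^{i-1}$ independent of $\mue$, and then applies dominated convergence term by term. Your Lyapunov route, once repaired, is more robust (it does not need the birth--death product form), but for this model it buys nothing over the direct domination argument.
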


Given that ${\bm \pi} := \{\pi_{i} \}_{i = 0}^{\infty}$ is the
stationary distribution of a simple birth-death CTMC,
Theorems~\ref{Thm_Lim_SD_Conv} and~\ref{Thm_tau} provide closed form
characterizations of the limiting performance of the tolerant class
under the SFJ scaling. It is important to note that while the
statements of Theorems~\ref{Thm_Lim_SD_Conv} and~\ref{Thm_tau} might
seem intuitive, their proofs are rather involved. In particular, they
rely crucially on a \emph{uniformity} in the convergence of the
service process seen by the tolerant queue under the SFJ limit,
\emph{across $\i$-sub-policies}; see
Subsection~\ref{sec:tol_perf_proofs}.

Next, we describe the performance of the eager class under the SFJ
limit, which is captured by its blocking probability, i.e., the long
fraction of eager customers blocked. Formally, the blocking
probability is defined as follows. Let $N_B^\mue (t)$ denote the total
number of eager customers that are blocked (i.e., returned without
service), before time $t$ and $N_A^\mue (t)$ be the total number of
eager customers arrived in the same time interval, for the system at
scale $\mu_\i.$
% and under any given dynamic policy.
Then the blocking probability of the eager class is defined as the
long run fraction of customers blocked, i.e.,
\begin{equation}
  \label{eq:PB-def}
P_B^\mue := \lim\limits_{t \to \infty } \frac{N_B^\mue (t)}{N_A^\mue (t)}.
\end{equation}
Note that eager jobs are blocked by different sub-policies, which
operate dynamically based on the occupancy of the tolerant
queue. Thus, it is not a priori even clear that the limit in
\eqref{eq:PB-def} exists almost surely. That it does, and that the
eager blocking probability converges under the SFJ limit to a convex
combination of the $\t$-static blocking probabilities $\{ P_{B_j}\},$
weighted by the (limiting) long run fractions of time the $\t$-system
spends in each state, is established by the following theorem.

\begin{thm} \textbf{[Blocking probability of eager class]} 
\label{Thm_PB}
Assume {\bf A}.1-4, {\bf B}.1-4. Then the steady state blocking
probability of $\epsilon$-jobs in SFJ limit is given by:
{\small
  \begin{align*}
    \hspace{25mm} 
    \hspace{5mm}
    P_B^{\mu_\epsilon} \ \ \stackrel{\mu_\epsilon \to
  \infty}{\longrightarrow} \ \
\sum_{j=1}^{\infty } {P_B}_j {\pi}_j =: P_B^{\infty},
    \hspace{25mm} 
  \end{align*}}
where ${\pi}_i$ is given by \eqref{Eqn_SDSR_SD}.
\end{thm}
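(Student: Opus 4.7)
My plan is to express $P_B^{\mu_\epsilon}$ as a ratio involving cumulative counts broken down by the tolerant occupancy, pass to the long-run limit via an ergodic argument at fixed $\mu_\epsilon$, and then send $\mu_\epsilon \to \infty$ using (i)~Theorem~\ref{Thm_Lim_SD_Conv} for the tolerant stationary distribution and (ii)~a renewal/time-scale-separation argument on each tolerant sojourn for the per-state eager blocking fraction.

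\textbf{Step 1 (Decompose by tolerant state).} For each $j \ge 0$, let $T_j^{\mu_\epsilon}(t)$ be the total time in $[0,t]$ during which the tolerant queue has occupancy $j$, and let $N_{A,j}^{\mu_\epsilon}(t)$ and $N_{B,j}^{\mu_\epsilon}(t)$ be, respectively, the number of eager arrivals and the number of blocked eager arrivals during that sub-interval. Writing $N_A^{\mu_\epsilon}(t) = \sum_j N_{A,j}^{\mu_\epsilon}(t)$ and similarly for $N_B^{\mu_\epsilon}(t)$, one has
\begin{equation*}
P_B^{\mu_\epsilon}
\;=\;
\lim_{t\to\infty}
\frac{\sum_{j\ge 0} N_{B,j}^{\mu_\epsilon}(t)}{\sum_{j\ge 0} N_{A,j}^{\mu_\epsilon}(t)}
\;=\;
\lim_{t\to\infty}
\frac{\sum_{j\ge 0}\frac{N_{B,j}^{\mu_\epsilon}(t)}{T_j^{\mu_\epsilon}(t)}\,\frac{T_j^{\mu_\epsilon}(t)}{t}}
{\sum_{j\ge 0}\frac{N_{A,j}^{\mu_\epsilon}(t)}{T_j^{\mu_\epsilon}(t)}\,\frac{T_j^{\mu_\epsilon}(t)}{t}}.
\end{equation*}
Since the joint (tolerant, eager) process is a positive recurrent regenerative process (cf.~Lemma~\ref{lemma:TauStability} for the tolerant stability and Assumption~\textbf{A}.4 for the eager busy cycles), each ratio on the right has an almost sure limit; in particular $T_j^{\mu_\epsilon}(t)/t \to \pi_j^{\mu_\epsilon}$ and, by Poisson arrivals to the eager class, $N_{A,j}^{\mu_\epsilon}(t)/T_j^{\mu_\epsilon}(t) \to \lambda_\epsilon$ (this is just the conditional arrival rate).

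\textbf{Step 2 (Per-state eager blocking at fixed $\mu_\epsilon$).} Define the ``effective'' per-state blocking probability $q_j^{\mu_\epsilon} := \lim_{t\to\infty} N_{B,j}^{\mu_\epsilon}(t)/N_{A,j}^{\mu_\epsilon}(t)$, obtained by applying the renewal reward theorem to the embedded regenerative structure of the joint process (the regenerations being, for example, returns of the tolerant queue to~$0$ with the eager system also reset per Assumption~\textbf{A}.1). Combining Step~1 with these limits yields, at each fixed~$\mu_\epsilon$,
\begin{equation*}
P_B^{\mu_\epsilon} \;=\; \sum_{j\ge 0} \pi_j^{\mu_\epsilon}\, q_j^{\mu_\epsilon}.
\end{equation*}

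\textbf{Step 3 (Convergence of $q_j^{\mu_\epsilon}$ to $P_{B_j}$).} This is the heart of the argument. Each visit of the tolerant queue to state~$j$ lasts an $O(1)$ amount of wall-clock time (exponential holding times with rates that do not depend on $\mu_\epsilon$), during which, by Assumption~\textbf{A}.1, the eager system is started from the empty state and runs under sub-policy~$j$. Under the SFJ scaling the eager clock is sped up by a factor of $\mu_\epsilon$, so during a typical tolerant sojourn the eager system completes $\Theta(\mu_\epsilon)$ busy cycles. Assumption~\textbf{A}.4, which gives a uniform (in $\mu_\epsilon$ and in $j$) second-moment bound on eager busy cycles, together with the renewal reward theorem applied over this sped-up window, implies that the empirical blocking fraction during any such sojourn converges in $L^1$ to the standalone value $P_{B_j}$, with a remainder of order $O(\mu_\epsilon^{-1/2})$. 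Averaging over the stationary distribution of sojourn lengths and over all $j\le L$ (for arbitrary truncation $L$) gives $q_j^{\mu_\epsilon}\to P_{B_j}$ as $\mu_\epsilon\to\infty$ for each fixed~$j$. The hardest technical point is controlling tail contributions from large~$j$; this is handled by noting that $0\le q_j^{\mu_\epsilon}\le 1$, invoking the uniform geometric tail of $\pi_j^{\mu_\epsilon}$ that follows from $\rho_j \le 1-\delta_\tau$ (Assumption~\textbf{B}.4), and applying a dominated-convergence truncation to the series.

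\textbf{Step 4 (Assembly).} Combine the formula $P_B^{\mu_\epsilon}=\sum_j \pi_j^{\mu_\epsilon} q_j^{\mu_\epsilon}$ with $\pi_j^{\mu_\epsilon}\to \pi_j$ (Theorem~\ref{Thm_Lim_SD_Conv}) and $q_j^{\mu_\epsilon}\to P_{B_j}$ (Step~3), using the tail bound from Step~3 to justify interchanging the limit and the infinite sum. The result is $P_B^{\mu_\epsilon}\to \sum_{j\ge 1} P_{B_j}\,\pi_j$, as claimed. I expect the main obstacle to be Step~3: making the ``many busy cycles per tolerant sojourn'' heuristic rigorous uniformly across $j$, which is exactly why the uniform busy-cycle moment bound of Assumption~\textbf{A}.4 was isolated.
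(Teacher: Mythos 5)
Your overall architecture coincides with the paper's: decompose the blocking count by tolerant occupancy, show that the per-state blocking fraction times the fraction of arrivals seen in state $j$ converges to $P_{B_j}\pi_j$, and control the tail of the series over $j$ using Assumption \textbf{B}.4 together with Theorems~\ref{Thm_Lim_SD_Conv} and~\ref{Thm_tau} (the paper does this via a truncation at $\bar g$, PASTA, and the convergence of $\sum_j j\pi_j^\mue$). Steps 1, 2 and 4 are therefore sound and essentially reproduce the skeleton of the paper's argument (though note your $q_j^\mue$ clashes with the paper's notation for the backward transition probability of the tolerant chain).

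The gap is in Step 3, which you correctly identify as the heart but whose actual difficulty you do not name. The issue is not merely that ``many busy cycles fit in a sojourn'': the $\i$-busy cycles that complete \emph{within} a given tolerant sojourn are not i.i.d.\ copies of the unconditioned busy cycle $\B_j^\mue$ --- they are precisely those cycles during which neither a $\t$-arrival nor a $\t$-service completion occurred, i.e.\ they are distributed as $\B_j^\mue$ conditioned on the event $\{\B_j^\mue < A_\t\} \cap \{\S_j^\mue < B_\t\}$. Applying the renewal reward theorem ``over the sped-up window'' to the unconditioned cycles, as your sketch implicitly does, is therefore not justified; the paper instead applies RRT to these conditioned (``special'') cycles and then uses dominated convergence, with the indicator of the conditioning event tending to $1$ a.s.\ as $\mue\to\infty$, to show that the conditioning washes out in the limit and the ratio converges to $\Exp{N_{B_j}^1(\B_j^1)}/\Exp{N_A^1(\B_j^1)} = P_{B_j}$ via the sample-path coupling of Appendix~\ref{Appendix_sample}. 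In addition, two boundary contributions must be shown negligible and your sketch omits both: (i) the interrupted partial $\i$-cycle at the end of each tolerant sojourn (the paper's $\Psi_j(t)$ term, handled by upper-bounding each residual cycle by a full conditioned cycle and showing via Lemma~\ref{Lemma_Psij} that $\lim_t \Psi_j(t)/t \to 0$ as $\mue \to \infty$), and (ii) the eager customers flushed at every $\t$-transition under Assumption \textbf{A}.1 (bounded by $\mathcal{K}$ per transition, hence by $O(\lambda_\t t)$ overall, which is negligible against $N_A^\mue(t)$ since $\lambda_\i\to\infty$). Your claimed $O(\mue^{-1/2})$ $L^1$ rate is also unsubstantiated and unnecessary. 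Without the conditioned-cycle argument and the two boundary estimates, Step 3 does not go through as written.
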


A key takeaway from Theorems~\ref{Thm_Lim_SD_Conv}--\ref{Thm_PB} is
that the performance of both classes under the SFJ limit can be
characterized in terms of the $\t$-static blocking probabilities
$\{ P_{B_j} \}_{j \geq 0}$. The probabilities
$\{ P_{B_j} \}_{j \geq 0}$ themselves are typically easy to compute,
since they involve the analysis of a single-class (stationary) loss
system. Thus, by virtue of
Theorems~\ref{Thm_Lim_SD_Conv}--\ref{Thm_PB}, we have the following
conservation law.

\noindent {\bf Dynamic Pseudo Conservation}: Under the SFJ limit, the
performance of \emph{both} classes depends only on the $\t$-static
blocking probabilities $\{ P_{B_j} \}_{j \geq 0}$ and not the
specifics of the $\i$-sub-policies that produced these blocking
probabilities.

\revision{Finally, since Assumption {\bf A}.1 may seem unreasonable,
  we show below that the conclusions of
  Theorems~\ref{Thm_Lim_SD_Conv}-\ref{Thm_PB} hold even without this
  assumption, if eager job sizes are taken to be exponentially
  distributed.
  % hence we provide another result without that
  % assumption.  We instead assume that the $\i$-service times are also
  % exponentially distributed.  We show below that all the previous
  % results are valid for this case also (proof is in Appendix).
  \begin{thm}
    \label{Thm_withou_A1}
    Assume {\bf A}.2-4 and {\bf B}.1-4. Also assume that the
    $\i$-service times are exponentially distributed. 
The
    conclusions of Theorems~\ref{Thm_Lim_SD_Conv}-\ref{Thm_PB} hold, once   the
    transitions between eager sub-policies are handled  as described in
    Section~\ref{sec:example_models} (i.e., following an
    arrival/departure in the tolerant queue, the new eager sub-policy
    dictates the scheduling of the eager queue from then on).
  \end{thm}
  Theorem~\ref{Thm_withou_A1} shows that Assumption {\bf A}.1 is
  benign, i.e., it does not affect the performance of either class
  under the SFJ limit. As expected, the proof of
  Theorem~\ref{Thm_withou_A1} is considerably more involved as
  compared to the proofs of
  Theorems~\ref{Thm_Lim_SD_Conv}-\ref{Thm_PB}. Specifically,
  Assumption {\bf A}.1 allows the evolution of the $\t$ queue (across
  arrival/departure epochs) to be described by a one-dimensional
  birth-death Markov chain (see
  Section~\ref{sec:tol_perf_proofs}). This is no longer possible in
  the absence of Assumption {\bf A}.1; the system evolution has to be
  captured via a certain two-dimensional Markov chain, whose long-run
  time averages must be shown to match those corresponding to the
  former birth-death chain as $\mue \ra \infty$ (see
  Appendix~\ref{app:withoutA1}).
%  We then show that this Markov chain in positive recurrent for large
%  enough $\mue,$ and that the transition probabilities converge
%  uniformly as $\mue \ra \infty.$
}

We note that while our performance characterization is derived under
the SFJ limit, we show in Section~\ref{sec:pareto} that they provide
accurate approximations of the performance experienced in the
pre-limit. The remainder of this section is devoted to highlighting
the main steps in the proofs of Theorems~\ref{Thm_Lim_SD_Conv}
and~\ref{Thm_tau}. Most details are relegated to
Appendix~\ref{Appendix_tolerant}. The proof of Theorem~\ref{Thm_PB} is
presented in Appendix~\ref{Appendix_eager}. \revision{The proof of
  Theorem~\ref{Thm_withou_A1} can be found in
  Appendix~\ref{app:withoutA1}.}

\subsection{Analysing tolerant performance under the SFJ limit}
\label{sec:tol_perf_proofs}

\begin{figure}
\begin{center}
\includegraphics[width= .9 \textwidth]{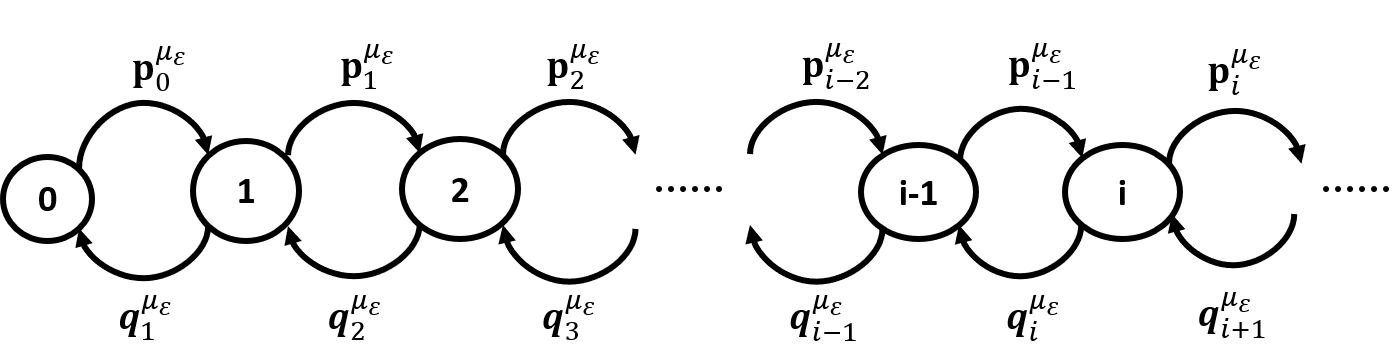}
\caption{State Transition $\tau$-queue corresponding to eager service
  rate $\mue$.}
\label{Fig_StateTrans_PreLimit}
\end{center}
\end{figure}

\revision{We now sketch the main steps in the proofs of
  Theorems~\ref{Thm_Lim_SD_Conv} and~\ref{Thm_tau} (i.e., under
  Assumption {\bf A}.1).} We characterize the $\t$-performance by
first analysing the $\t$-queue at its arrival/departure epochs. Let
$X_n$ denote the occupancy of the $\t$-queue immediately following the
$n$th arrival/departure. Under Assumption~{\bf A}.1 (dropping of
existing $\i$-customers at $\t$-transitions) and because of
exponentially distributed tolerant job sizes, $\{X_n\}$ is a
discrete-time Markov chain with birth-death (BD) structure; see
Figure~\ref{Fig_StateTrans_PreLimit}. Let $p_j^\mue$ and $q_j^\mue$,
respectively, denote the forward and backward transition probabilities
of this (pre-limit) tolerant BD chain, when there are $j$ tolerant
customers in the system.  Let $A_\tau$ and $B_\tau$ represent, a
generic inter-arrival time associated with the tolerant queue, and a
generic tolerant job size, respectively. Also,
recall~$\Upsilon_j^{\mu_\epsilon}(B_\t)$ denotes the time required to
finish $B_\tau$ amount of work using the service
process~$\Omega^{\mu_\epsilon}_{j}(\cdot)$ (see
Section~\ref{sec:background}), then we have:
\begin{equation*}
  \label{eq:q_j-def}
q^{\mue}_j = 1 - p^{\mue}_j:= \prob{A_{\t} > \Upsilon^{\mu_\epsilon}_j(B_\t)}.
\end{equation*}
Observe by {\bf A}.1 that the job completion times, distributed as
$\Upsilon_j^{\mu_\epsilon}(B_{\t }) $, are i.i.d.  across all those
customers that were served when the tolerant occupancy
equals~$j$. Moreover, if such a tolerant service is interrupted by a
$\t$-arrival, the remaining completion time is distributed as
$\Upsilon_{j+1}^{\mu_\epsilon}(B_{\t} ),$ thanks to the memoryless
property of tolerant job sizes and~{\bf A}.1.

Our first step is to prove that the above transition probabilities
converge to the transition probabilities associated with the embedded
BD chain corresponding to the
$(\lambda_\tau, \mu_\tau, (\nu_1, \nu_2, \cdots ))$ SDSR-M/M/1
system. Moreover, we show that \emph{the above convergence takes place
  uniformly over sub-policies~$j.$} Before stating this result, we
recall the definition of uniform convergence (which we denote by
$u.f.$):

\textbf{Definition:} $\left[\textbf{Uniform
    $(u.f.)$ convergence }\right]$ A parameterized family of sequences
$\{x_{j}^{\mue}\}_{j \geq 0}$ is said to be uniformly convergent to
the limit $\{x_j\}_{j \geq 0}$ as $\mue \ra \infty$ if, for every
$\epsilon >0$ there exists $\bar{\mu}$, such that
$$
\left| x_{j}^{\mue} - x_j \right| < \epsilon \ \ \mbox{for all} \ \
\mue > \bar{\mu} \mbox{ and for all } j \geq 0.
$$ 
\begin{lemma}
	\label{Lem_TP_Convergence}
	$\left[\textbf{Uniform convergence of transition
            probabilities}\right]$ i) If $p_j^\mue$ denotes the
        probability of a $\t$-arrival before a $\t$-departure in the
        (pre-limit) system when the number of tolerant customers in
        the system equals $j$, then
	\begin{eqnarray}
	\label{eq:q_conv}
	p_j^\mue \stackrel{\mue \to \infty}{\rightarrow} p_j^{\infty} := \frac{\lambda_\t}{\lambda_\tau + \nu_j \mu_\tau} \ \mbox{and } \ q_j^\mue := 1 - p_j^\mue   \stackrel{\mue \to \infty}{\rightarrow} q_j^{\infty}:= \frac{\nu_j \mu_\t}{\lambda_\tau + \nu_j \mu_\tau} . 
	\end{eqnarray}
	\label{lemma:uniform}
	ii) The convergence in \eqref{eq:q_conv} occurs u.f. over the
        sub-policies $j$.  Precisely, for every $\epsilon > 0$ there
        exists $\bar{\mu} > 0$, such that for all
        $\mu_\epsilon \geq \bar{\mu}$ and for all $j \geq 0,$ we have:
	\begin{eqnarray}
	\bigg| q_j^{\mu_\epsilon} - \frac{\nu_j \mu_\tau}{\lambda_\tau + \nu_j \mu_\tau}  \bigg| < \epsilon.
	\end{eqnarray} 
	%$ \in \g.$
\end{lemma}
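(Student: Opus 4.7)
The plan is to express both $q_j^\mue$ and the candidate limit $q_j^\infty$ as Laplace transforms of tolerant-completion times, derive pointwise convergence from~\eqref{Eqn_Upsilon_conv}, and then upgrade to uniform-in-$j$ convergence using the uniform busy-cycle second moment bound from Assumption~\textbf{A}.4.

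First, because $A_\t \sim \mathrm{Exp}(\lambda_\t)$ is independent of $B_\t$ and of the eager sub-policy dynamics, conditioning on $\Upsilon_j^\mue(B_\t)$ gives
$$ q_j^\mue = \prob{A_\t > \Upsilon_j^\mue(B_\t)} = E\!\left[e^{-\lambda_\t \Upsilon_j^\mue(B_\t)}\right]. $$
A one-line integration of $e^{-\lambda_\t b/\nu_j}$ against the $\mathrm{Exp}(\mu_\t)$-density of $B_\t$ checks that $q_j^\infty = E[e^{-\lambda_\t B_\t/\nu_j}] = \nu_j\mu_\t/(\lambda_\t + \nu_j\mu_\t)$, which is the stated limit. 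Part (i) then follows from~\eqref{Eqn_Upsilon_conv} and bounded convergence, since the integrand lies in $[0,1]$.

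Second, to get part (ii), apply the Lipschitz estimate $|e^{-\lambda_\t x}-e^{-\lambda_\t y}| \leq \lambda_\t|x-y|$ and reduce to showing
$$ \sup_{j\geq 0}\, \int_0^\infty \mu_\t e^{-\mu_\t b}\, E\!\left[\,|\Upsilon_j^\mue(b) - b/\nu_j|\,\right] db \;\longrightarrow\; 0 \quad \text{as } \mue \to \infty. $$
The integrand is bounded uniformly in $j$ by a linear function of $b$ (using the crude deterministic bound $\Upsilon_j^\mue(b) \leq b/c_{\min}$ that follows from Assumption~\textbf{A}.3 together with the uniform cap $\mathcal{K}$ on eager occupancy, and using that $\nu_j$ is bounded below by Assumption~\textbf{B}.4), so dominated convergence reduces matters to an inner uniform-in-$j$ claim: $E[|\Upsilon_j^\mue(b) - b/\nu_j|] \to 0$ for each $b$, uniformly in $j$.

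Third (the main obstacle): observe that $\Upsilon_j^\mue(b) = \inf\{t\geq 0 : \Omega_j^\mue(t) \geq b\}$ is the hitting time of level $b$ by the cumulative unused-service process $\Omega_j^\mue(\cdot)$, which under Assumption~\textbf{A}.1 is a renewal-reward process whose renewal epochs are eager busy cycles and whose rewards are the idle portions of those cycles. Assumption~\textbf{A}.4 furnishes a second-moment bound on the cycle length that is uniform over sub-policies $j$, and under SFJ scaling the cycle length itself is $O(1/\mue)$. Combined via Chebyshev applied to the renewal-counting process (or a Lorden-type estimate), these inputs yield uniform-in-$j$ bounds of the form $|E\,\Omega_j^\mue(t) - \nu_j t| \leq C/\mue$ and $\mathrm{Var}(\Omega_j^\mue(t)) \leq C t/\mue$. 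Inverting through the identity $\Omega_j^\mue(\Upsilon_j^\mue(b)-) \leq b \leq \Omega_j^\mue(\Upsilon_j^\mue(b))$ and using the uniform positivity of $\nu_j$ from Assumption~\textbf{B}.4 converts these into a uniform-in-$j$ $L^1$ control on $|\Upsilon_j^\mue(b) - b/\nu_j|$, completing the argument. The essential technical difficulty is precisely this upgrade from a per-policy LLN to a quantitative bound whose constants do not depend on $j$; Assumption~\textbf{A}.4 is tailored to supply exactly the missing uniformity.
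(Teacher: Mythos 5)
Your reformulation $q_j^\mue = \Exp{e^{-\lambda_\t \Upsilon_j^\mue(B_\t)}}$ and the computation of the candidate limit are correct, and part (i) does follow from \eqref{Eqn_Upsilon_conv} by bounded convergence. The route to part (ii), however, has two genuine gaps. First, the ``crude deterministic bound'' $\Upsilon_j^\mue(b) \le b/c_{\min}$ is false: Assumption {\bf A}.3 guarantees a minimum rate $c_{\min}$ to each \emph{eager} job, not to the tolerant class, and the unused service rate seen by the $\t$-queue can be identically zero throughout an eager busy period (under LPS the tolerant class is served only when no eager jobs are present). Since eager busy periods are unbounded, $\Upsilon_j^\mue(b)$ admits no deterministic bound in terms of $b$ alone, so the dominating function you invoke for the outer integral over $b$ does not exist as stated; any domination must itself come from a renewal-theoretic bound on $\Exp{\Upsilon_j^\mue(b)}$, which is exactly what you have not yet established at that point.

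Second, the central step --- converting the moment bounds $|\Exp{\Omega_j^\mue(t)} - \nu_j t| \le C/\mue$ and $\mathrm{Var}(\Omega_j^\mue(t)) \le Ct/\mue$ into uniform-in-$j$ $L^1$ control of $|\Upsilon_j^\mue(b) - b/\nu_j|$ --- is only asserted, and the Chebyshev route you indicate does not close: writing $\Exp{\Upsilon_j^\mue(b)} = \int_0^\infty \prob{\Omega_j^\mue(t) < b}\,dt$ and bounding the tail probability by $\mathrm{Var}(\Omega_j^\mue(t))/(\nu_j t - b - C/\mue)^2$, which decays only like $C/(\mue\,\nu_j^2\, t)$, yields a divergent integral, so no bound on the mean hitting time follows this way. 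A Lorden/Wald argument (which you mention parenthetically) would be needed and is not carried out, and $L^1$ convergence of the absolute deviation additionally requires uniform integrability, again uniformly in $j$. The paper sidesteps the hitting-time analysis entirely: it sandwiches $q_j^\mue$ between $\bar q_j = \prob{\S_j^\mue > B_\t,\ \B_j^\mue < A_\t}/\left(\prob{\S_j^\mue > B_\t,\ \B_j^\mue < A_\t} + \prob{\B_j^\mue > A_\t}\right)$ and $\bar q_j + \prob{\B_j^\mue > B_\t \mid \B_j^\mue > A_\t}$, both of which involve only a \emph{single} eager busy cycle, and then applies elementary moment-generating-function bounds whose error terms are proportional to $\Exp{\B_j^2}/\mue$ and hence uniform in $j$ by Lemma~\ref{lemma:bp_upper_bound}. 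If you wish to keep the Laplace-transform framing, you should replace your second and third steps by such a single-cycle sandwich or by a fully worked Lorden-type estimate.
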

The proof of Lemma~\ref{Lem_TP_Convergence} is provided in
Appendix~\ref{sec:lemma:uniform}.

The next step is to show that the embedded BD chain characterizing the
evolution of the occupancy of tolerant queue is positive recurrent for
large enough $\mue.$
\begin{lemma}
  \label{lemma:TauStability}
  There exists $\bar{\mu} > 0$ such that for $\mue > \bar{\mu},$ the
  (embedded) Markov chain $\{X_n\}$ is positive recurrent.
\end{lemma}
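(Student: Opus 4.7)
The plan is to verify the classical summability criterion for positive recurrence of an irreducible birth-death chain on $\mathbb{N}_0$: writing $r_j^{\mu_\epsilon} := p_j^{\mu_\epsilon}/q_j^{\mu_\epsilon}$, the chain $\{X_n\}$ is positive recurrent if and only if $\sum_{k=1}^\infty \prod_{\ell=1}^k r_\ell^{\mu_\epsilon} < \infty$. In the limit $\mu_\epsilon \to \infty$, Lemma~\ref{Lem_TP_Convergence} gives $r_j^{\mu_\epsilon} \to \rho_j = \lambda_\tau/(\nu_j \mu_\tau)$, and Assumption~\textbf{B}.4 ensures $\rho_j < 1 - \delta_\tau$ for every $j$, so the limiting criterion holds with a geometric bound. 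The task is then to transfer this bound to the pre-limit, uniformly in $j$.

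To do so, I first derive from \textbf{B}.4 the uniform lower bound $\nu_j > \lambda_\tau/[\mu_\tau(1-\delta_\tau)] > 0$, which yields a uniform positive lower bound $q_j^\infty \geq q_{\min}$ on the limiting backward probabilities. Lemma~\ref{Lem_TP_Convergence}(ii) then upgrades this to $q_j^{\mu_\epsilon} \geq q_{\min}/2$ uniformly in $j$ for all $\mu_\epsilon$ sufficiently large. Because division by a quantity bounded away from zero is locally Lipschitz, the uniform convergence of $p_j^{\mu_\epsilon}$ and $q_j^{\mu_\epsilon}$ propagates to uniform convergence $r_j^{\mu_\epsilon} \to \rho_j$ in $j$. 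Choosing $\mu_\epsilon$ large enough that $|r_j^{\mu_\epsilon} - \rho_j| < \delta_\tau/2$ for all $j$ then gives $r_j^{\mu_\epsilon} < 1 - \delta_\tau/2$ uniformly, so $\prod_{\ell=1}^k r_\ell^{\mu_\epsilon} \leq (1 - \delta_\tau/2)^k$ and the summability criterion holds. Irreducibility is immediate, since for any finite $\mu_\epsilon$ both $p_j^{\mu_\epsilon}$ and (for $j \geq 1$) $q_j^{\mu_\epsilon}$ are strictly positive by their definitions through the strictly positive random variables $A_\tau$ and $\Upsilon_j^{\mu_\epsilon}(B_\tau)$.

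The real content of the argument, and the step I expect to be the main obstacle, is ensuring that the uniform-in-$j$ convergence of Lemma~\ref{Lem_TP_Convergence} survives the passage through the ratio $p/q$. This is precisely what the uniform lower bound on $\nu_j$ from Assumption~\textbf{B}.4 buys; without it, $q_j^\infty$ could approach zero along some subsequence of tolerant states, small pre-limit errors in $q_j^{\mu_\epsilon}$ could be amplified arbitrarily by the division, and the uniform geometric bound on $r_j^{\mu_\epsilon}$ needed for positive recurrence could fail.
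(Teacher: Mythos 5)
Your proposal is correct and follows essentially the same route as the paper: the paper also verifies the birth--death summability criterion by combining the uniform convergence of Lemma~\ref{Lem_TP_Convergence} with the uniform lower bound on $q_j^\infty$ implied by Assumption~\textbf{B}.4 (so that the ratios $p_j^{\mu_\epsilon}/q_j^{\mu_\epsilon}$ converge uniformly to $\rho_j < 1-\delta_\tau$ and the products are dominated geometrically); this is packaged there as Lemma~\ref{Lemma_Intermediate_result_2}(ii)--(iv). The only cosmetic difference is that the paper phrases the last step as convergence of the pre-limit sum to the finite limit sum via dominated convergence for series, whereas you bound the pre-limit sum geometrically and directly; the underlying geometric domination is identical.
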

The proof of Lemma~\ref{lemma:TauStability} is provided in
Appendix~\ref{sec:lemma:TauStability}. Lemma~\ref{lemma:TauStability}
also implies that for large enough $\mue,$ the $\tau$-queue is stable
and has a well defined stationary behaviour.\footnote{The occupancy of
  the tolerant queue evolves as a semi-Markov process in continuous
  time. The regularity of this process (i.e., each state is visited
  only finitely often in any finite interval of time with
  probability~1) follows by noting that the number of state
  transitions is lower bounded by those in a standard M/M/1 queue with
  arrival rate $\lambda_\t$ and service rate $\mu_\t$. The positive
  recurrence of this process follows from Theorem~5.9 in
  \cite{Gallager2012}, noting that mean residence time in any state is
  uniformly upper bounded by $1/\lambda_\t.$} The stationary
distribution for $\mue > \bar{\mu}$ can be calculated using standard
techniques and is given by (see, for example,~\cite{Hoel}),
\begin{eqnarray}
\label{Eqn_SD_pre-limit}
\tilde{\pi}_i^\mue = \frac{p_0^\mue p_1^\mue \cdots p_{i-1}^\mue}{q_1^\mue q_2^\mue \cdots q_i^\mue} \tilde{\pi}_0^\mue, \ \ \ \ \mbox{for} \ i \geq 1,
\end{eqnarray}with $\tilde{\pi}_0^\mue$  defined as:
\begin{equation}
\label{Eqn_pi_0_pre-limit}
\tilde{\pi}_0^\mue = \frac{1}{1 + \sum_{k = 1}^\infty \frac{p_0^\mue p_1^\mue \cdots p_{k-1}^\mue}{q_1^\mue q_2^\mue \cdots q_k^\mue}}.
\end{equation}

Now, given that the embedded BD chain capturing the evolution of the
tolerant queue is positive recurrent for large enough $\mue$
(Lemma~\ref{lemma:TauStability}), and has transition probabilities
that converge (under the SFJ limit) uniformly to those of the embedded
chain of the $(\lambda_\tau, \mu_\tau, (\nu_1, \nu_2, \cdots ))$
SDSR-M/M/1 queue (Lemma~\ref{Lem_TP_Convergence}), we can now
establish convergence of the stationary distribution of this embedded
process to that of the SDSR-M/M/1 system as well.

\begin{thm}
  $\left[\textbf{Stationary number in the embedded tolerant
      chain}\right]$
  \label{Thm_SD_Convergence}
  Under SFJ limit,
  % (as $\mue \to \infty$)
  the stationary distribution of the embedded (BD) chain corresponding
  to the tolerant Markov process given by equation
  (\ref{Eqn_SD_pre-limit}) and (\ref{Eqn_pi_0_pre-limit}) converges to
  that of the SDSR-M/M/1 queue with variable service rates. That is,
  $$
  \tilde{\pi}_i^\mue \stackrel{\mue \to \infty}{\longrightarrow} \tilde{\pi}_i \ \ \mbox{for all} \  i \geq 0
  $$ where, $\tilde{\pi}_i$ is given by \eqref{Eqn_SDSR_SD}.  
\end{thm}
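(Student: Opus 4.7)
The plan is to exploit the explicit ratio-product formulas~\eqref{Eqn_SD_pre-limit}--\eqref{Eqn_pi_0_pre-limit}, which Lemma~\ref{lemma:TauStability} justifies for $\mue$ large, and pass to the limit by applying Lemma~\ref{Lem_TP_Convergence} term-by-term while using Assumption~\textbf{B}.4 to dominate the infinite normalizing series. Setting $r_j^{\mue} := p_{j-1}^{\mue}/q_j^{\mue}$ (with $p_0^\mue \equiv 1$) and $r_j^{\infty} := p_{j-1}^{\infty}/q_j^{\infty}$, so that $\tilde\pi_i^{\mue} = \tilde\pi_0^{\mue}\prod_{j=1}^i r_j^{\mue}$, the proof reduces to establishing (i) pointwise convergence of the finite products and (ii) convergence of the infinite normalizer $1 + \sum_{k\geq 1}\prod_{j=1}^k r_j^\mue$.

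For (i) I would first promote the uniform convergence in Lemma~\ref{Lem_TP_Convergence} to uniform convergence of the ratios. From the identity
\[
r_j^\mue - r_j^\infty \;=\; \frac{(p_{j-1}^\mue - p_{j-1}^\infty)\,q_j^\infty + p_{j-1}^\infty\,(q_j^\infty - q_j^\mue)}{q_j^\mue\,q_j^\infty},
\]
and the uniform lower bound $q_j^\infty = 1/(1+\rho_j) \geq 1/(2-\delta_\tau)$ from \textbf{B}.4 (which in turn forces $q_j^\mue$ to be uniformly bounded away from zero for $\mue$ large), Lemma~\ref{Lem_TP_Convergence} gives $\sup_j |r_j^\mue - r_j^\infty| \to 0$, and hence $\prod_{j=1}^i r_j^\mue \to \prod_{j=1}^i r_j^\infty$ for every fixed $i$.

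For (ii), the crux is a uniform (in $\mue \geq \bar\mu$) geometric tail bound. In the limit, the telescoping computation (using $p_0^\infty = 1$) yields
\[
\prod_{j=1}^i r_j^\infty \;=\; \frac{\lambda_\tau + \nu_i\mu_\tau}{\lambda_\tau}\,\prod_{j=1}^i \rho_j \;\leq\; C_0\,(1-\delta_\tau)^i, \qquad C_0 := 1 + \mu_\tau/\lambda_\tau,
\]
by \textbf{B}.4 and $\nu_j \leq 1$. To carry this to the pre-limit I would note $r_j^\infty \geq \lambda_\tau/(\lambda_\tau + \mu_\tau) =: r_* > 0$ uniformly in $j$, so the uniform closeness $|r_j^\mue - r_j^\infty| \leq \epsilon_0$ (valid for $\mue \geq \bar\mu$) gives $\log(r_j^\mue/r_j^\infty) \leq \log(1+\epsilon_0/r_*) \leq \epsilon_0/r_*$; summing over $j$ and exponentiating produces
\[
\prod_{j=1}^i r_j^\mue \;\leq\; C_0\,\theta^i \qquad \text{for all } i\geq 1,\ \mue \geq \bar\mu,
\]
with $\theta := (1-\delta_\tau)\,e^{\epsilon_0/r_*} < 1$ provided $\epsilon_0$ is chosen sufficiently small. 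Dominated convergence against this geometric envelope yields $\sum_{k\geq 1}\prod_{j=1}^k r_j^\mue \to \sum_{k\geq 1}\prod_{j=1}^k r_j^\infty$, so $\tilde\pi_0^\mue \to \tilde\pi_0$ by~\eqref{Eqn_pi_0_pre-limit}, and combining with (i) gives $\tilde\pi_i^\mue \to \tilde\pi_i$ for each $i$.

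The principal obstacle is the geometric domination in the pre-limit: individual ratios $r_j^\infty$ need not be strictly below one, so a naive term-wise bound is insufficient. The fix is to combine the telescoping identity from the limit with the uniformity in $j$ from Lemma~\ref{Lem_TP_Convergence} and the uniform lower bound $r_* > 0$, thereby transferring the geometric decay established in the limit to the pre-limit and licensing the interchange of $\lim_{\mue\to\infty}$ with the infinite sum in the normalizer.
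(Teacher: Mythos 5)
Your proposal is correct and follows essentially the same route as the paper: the paper's proof likewise passes to the limit in the explicit product formulas, invoking Lemma~\ref{Lemma_Intermediate_result_2} (built on the uniform convergence of Lemma~\ref{Lem_TP_Convergence} and Assumption~\textbf{B}.4) to get uniform convergence of the ratios, a uniform geometric envelope on the products, and the dominated convergence theorem for series applied to the normalizer. The only real difference is the grouping: the paper pairs same-index terms, writing each product as $p_0^\mue\prod_{i=1}^{k-1}\bigl(p_i^\mue/q_i^\mue\bigr)\cdot\bigl(1/q_k^\mue\bigr)$ so that each factor $p_i^\mue/q_i^\mue$ converges uniformly to $\rho_i<1-\delta_\tau$ and the geometric bound $(2+\varepsilon)(\varepsilon+1-\delta_\tau)^{k-1}$ falls out term by term, which sidesteps the obstacle you correctly identify for your grouping $r_j=p_{j-1}/q_j$ (whose limits need not lie below one) and renders your telescoping-plus-exponential repair unnecessary, though that repair is itself valid.
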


Note that Theorem~\ref{Thm_SD_Convergence} deals with the limiting
behavior of the \emph{embedded} chain corresponding to the (continuous
time) occupancy process of the tolerant queue. Our main results
Theorems~\ref{Thm_Lim_SD_Conv} and~\ref{Thm_tau} can be proved from
Theorem~\ref{Thm_SD_Convergence} by invoking a relationship between
the stationary distribution of an embedded Markov chain and that of
the corresponding continuous time process (see
Lemma~\ref{lemma:EMC_dist} in Appendix~\ref{sec:Thm_Lim_SD_Conv}); the
details can be found in Appendices~\ref{sec:Thm_Lim_SD_Conv}
and~\ref{sec:Thm_tau}.

%%%%%%%%%%%%%%%%%%%%%%%%%%%%%%%%%%%%%%%%%%%%%%%%%%%
%%%%%%%%%%%%%%%%%%%%%%%%%%%%%%%%%%%%%%%%%%%%%%%%%%%
%%%%%%%%%%%%%%%%%%%%%%%%%%%%%%%%%%%%%%%%%%%%%%%%%%%
\ignore{ %OLD TEXT
Recall that under the SFJ limit,
$\Upsilon_j^\mue(B_\t) \stackrel{\mue \to \infty}{\longrightarrow}
\frac{B_\t}{\nu_j}$ almost surely. Thus, with $A_\t$ being
exponentially distributed with mean $1/\lambda_\t,$ and $B_\t$ being
exponentially distributed with mean $1/\mu_\t,$ one would expect that
$$q^{\mue}_j = \prob{A_{\t} > \Upsilon^{\mu_\epsilon}_j(B_\t)}
\stackrel{\mue \to \infty}{\longrightarrow} \frac{\nu_j
  \mu_\t}{\lambda_\t + \nu_j \mu_\t}.$$ Lemma~\ref{Lem_TP_Convergence}
below shows that the above statement is true. But more importantly, it
establishes that the above convergence takes place \emph{uniformly}
over sub-policies~$j.$ Indeed, establishing uniform convergence of the
transition probabilites of the embedded BD chain (capturing the
evolution of the $\t$-occupancy) under the SFJ limit is essential for
our performance characterizations.

Recall that under $\t$-static policies the left over service capacity
to the tolerant class converges uniformly, under SFJ limit, to the
rate $\nu := 1 - \rho_\epsilon \left(1 - P_B\right)$, where $P_B$
denotes the blocking probability of the eager class. In a similar
fashion, for the dynamic setting of our paper, {\it we anticipate that
  the left over (by eager class) service process converges to the
  constant rate processes with rate
  $\nu_{j} := 1 - \rho_\epsilon \left(1 - P_{B_j}\right)$ (as in
  (\ref{Eqn_Upsilon_conv})), when $\tau$-state is $j$, once again by
  the resulting time scale separation}; recall the standalone blocking
probability corresponding to the sub-policy $j$ equals $P_{B_j}$.

Thus, under SFJ limit, the tolerant customers may see uniform service
process (see Figure(?)) but with different service rates depending
upon the $\t$-occupancy.  We further claim that the service times
experienced by the tolerant customers, at SFJ limit and with
$j$-occupancy, follows Exponential distribution with rate
$ \nu_j \mut$; that is, we claim that
$\Upsilon_j^\mue(B_\t) \stackrel{\mue \to \infty}{\longrightarrow}
\frac{B_\t}{\nu_j}$, which is distributed according to
$ Exp(\nu_j \mut)$.

We will indeed prove that the above claims are true.  In fact, we
would directly prove the convergence of transition probabilities in
the Lemma given below; recall that $\{q^{\mue}_j \}$ precisely depend
upon the service process available to the tolerant customers.  To be
precise, we would show the following:
\begin{equation}
q^{\mue}_j 
\stackrel{\mue \ua \infty}{\longrightarrow} \frac{\nu_j
	\mu_\t}{\lambda_\t + \nu_j
	\mu_\t}, \ \ \mbox{uniformly (u.f.) over sub-policies} \ \  j,
\end{equation}
when the second moment of the $\i$ busy-cycles (for $\mue = 1$),
represented by $\{{\B}_j \}_j$, are uniformly bounded across all the
sub-policies (proof in Appendix~\ref{Appendix_tolerant}).
}%END IGNORE
%%%%%%%%%%%%%%%%%%%%%%%%%%%%%%%%%%%%%%%%%%%%%%%%%%%
%%%%%%%%%%%%%%%%%%%%%%%%%%%%%%%%%%%%%%%%%%%%%%%%%%%
%%%%%%%%%%%%%%%%%%%%%%%%%%%%%%%%%%%%%%%%%%%%%%%%%%%

%%%%%%%%%%%%%%%%%%%%%%%%%%%%%%%%%%%%%%%%%%%%%%%%%%%
%%%%%%%%%%%%%%%%%%%%%%%%%%%%%%%%%%%%%%%%%%%%%%%%%%%
%%%%%%%%%%%%%%%%%%%%%%%%%%%%%%%%%%%%%%%%%%%%%%%%%%%
\ignore{
\noindent
{\bf Convergence of stationary distribution:} The stationary distribution of the \textcolor{red}{(embedded)} tolerant BD chain,  given by Equation (\ref{Eqn_SD_pre-limit}),  converges to that of the SDSR-M/M/1 queue with variable service rate defined above (Proof in Appendix~\ref{Appendix_tolerant}). The same thing is true for steady state distribution of $\t$-process, (under ergodicity) which is also the time average of the number in the system as defined below:
$$
\pi^\mue_i = \lim_{T \to \infty}  \frac{1}{T}  \int_0^T N^\mue_\tau (t) dt \mbox{ almost surely  and for any }  i \ge 0.  
$$
	\noindent
	{\bf Remark:} Note that the tolerant class experience time
        varying service process corresponding to different
        $\i$-sub-policy, which further depends on the
        $\t$-state. Also, under the fluid regime we considered, a
        constant service rate is experienced by the tolerant customers
        for the each fixed $\t$ occupancy. We have demonstrated, when
        the eager rate tends to infinity, tolerant B-D chain converges
        to that of SDSR M/M/1 chain and hence their
        performance(stationary distribution) will also converge. In
        view of Lemma \ref{lemma:EMC_dist}, once the stationary
        distribution of enbedded chain is known, the stationary
        distribution of the corresponding process can be calculated
        easily (should we write equation relating the two SD
        here?). Thus, convergence of SD of enbedded process implies
        the convergence of SD of the process, which we will prove in
        the next theorem (with proof in
        Appendix~\ref{Appendix_tolerant}).

To provide intuition for Theorem~\ref{Thm_tau}, note from
(\ref{Eqn_Upsilon_conv}) that $\Upsilon_j^{\mue}$, the time required
to complete $B_\t$ amount of job (when uninterrupted) converges to
$B_\t / \nu_j$, which is exponentially distributed.  Thus one can
anticipate the following $\t$-system at SFJ limit (further because the
residual service times are exponentially distributed): a) Poisson
arrivals; b) exponential service times, whose rate depends upon the
$\t$-number in the system. And this is precisely the SDSR-M/M/1 queue.
%of  above theorem.
%\vspace{-3mm}
%

Theorem \ref{Thm_tau} tells that the performance of $\t$-class can be characterised via SDSR-M/M/1 queuing system with appropriately defined service rates. This concludes a remarkable result saying  the $\t$-perfomance depends  
only on the $\t$-static blocking probabilities of the eager class and not on any other detail of the system.
} %end ignore
%%%%%%%%%%%%%%%%%%%%%%%%%%%%%%%%%%
%%%%%%%%%%%%%%%%%%%%%%%%%%%%%%%%%%
%%%%%%%%%%%%%%%%%%%%%%%%%%%%%%%%%%

\kcmnt{
When $\g = \infty$:   For each $j$, as $\mue \to \infty$
$$
\rho_j^\mue \to \rho_j :=  \frac{\lambda}{\mu \nu_j}, \mbox { so does this, }
 \prod_{i=1}^j \rho_i^\mue  \to  \prod_{i=1}^j \rho_i 
$$
Then by DCT, as applied to series with counting measure:
$$
\sum_{k \ge 1}    \prod_{j=1}^k \rho_j^\mue   \to \sum_{k \ge 1}    \prod_{j=1}^k \rho_j ,
$$
if we can say that  there exists an ${\bar \mu} < \infty $  such that  
$$
\boxed{
\prod_{j=1}^k \rho_j^\mue   \le  (\alpha+\delta)^k \mbox{ for all }  k \mbox{ and for all }   \mue \ge {\bar \mu} ,}
$$
where $\alpha$ is the upper bound on limit as defined below and $\delta >0 $ is a constant such that $\alpha + \delta < 1.$
And  the limit is a finite sum, if we have an upper bound 
$$
\mbox{i.e., if }  \boxed{ \sup_j  \frac{\lambda_\tau}{\mu_\t \nu_j} := \alpha < 1}, \mbox{ then, }
\sum_{k \ge 1}    \prod_{j=1}^k \rho_j  \le \sum_{k \ge 1}     \alpha^k < \frac{1}{1-\alpha}  < \infty.
$$
Then 
$$
\pi_{ i}^\mue \to \pi_{ i}  \mbox{ for each } i.
$$
Then by MCT again 
$$
E^\mue [N  ]  \to E^\infty[N ],
$$
and the limit is finite b/c the SDSR-M/M/1  is stable. 
}

% \begin{figure}
% \begin{center}
% \includegraphics[scale=.8]{StateTransition} 
% \vspace{-5mm}
%  \caption{SDSR-M/M/1 $(\lambda_\tau,  1,  \{\mu_1, \mu_2\},  \{{\mathcal G}_1, {\mathcal G}_2\} ) $  queue: Birth death chain- 
%  with ${\mathcal G}_1 = \{0, \cdots, L-1\}$   and ${\mathcal G}_2 = \{L, L+1, \cdots \}  .$ 
% \label{Fig_BD_SDSR} }
% \vspace{-4mm}
% \end{center}
% \end{figure}
%%%%%%%%%%%%%%%%%%%%%%%%%%%%%%%%%%%%%%%%%%%%%%%%%%%%%
%%%%%%%%%%%%%%%%%%%%%%%%%%%%%%%%%%%%%%%%%%%%%%%%%%%%%
% THE FOLLOWING MATERIAL HAS BEEN INCORPORATED INTO THE PREVIOUS
% SECTION.
\ignore{
\section{Performance of Eager Class}
\label{sec:eager}

We now focus on the performance of eager class.  As already discussed
the $\epsilon$ sub-policy changes dynamically 
%among $\g$ sub-policies
depending only upon the $\tau$-number in the system\conftext{~(we assume $\g < \infty$ in this section; the generalization to
  arbitrary partitions  is in \cite{arxiv})}. To be more
precise, if the number of $\t$-customers at $\t$-transition
(arrival/departure) is $j$
%in group $\mathcal{G}_j$ of states,
sub-policy $j$ is used till the next $\t$-transition.

The $\i$-class is a lossy system, and the blocking probability would
be ${P_B}_j$ if $j$-th sub-policy is used in $\t$-static manner.  We
now derive the `dynamic' blocking probability, when these sub-policies
are selected based on $\t$-dynamics.  We show that, in SFJ limit, the
overall blocking probability of the eager class is a convex
combination of the $\t$-static blocking probabilities $\{ P_{B_j}\},$
weighted by the long run fractions of time the $\t$-system spends in
the groups~$\{\mathcal{G}_j\}.$ We begin with the definition, followed by the required result. 

{\color{red}\noindent {\bf Blocking probability:} Let $N_B^\mue (t)$ denote the total number of eager customers that are blocked (i.e., returned without service),  before  time $t$ and  $N_A^\mue (t)$ be  the total number of eager customers arrived in the same time interval, for system with  the  given  $\mu_\i$ and  under any given dynamic policy.   
Then the blocking probability of the eager class is defined as the long run fraction of customers blocked, i.e.,  
	$$
P_B^\mue :=	\lim\limits_{t \to \infty } \frac{N_B^\mue (t)}{N_A^\mue (t)}.
$$The  $\t$-static blocking probability $P_{B_j}^\mue $ is defined in a similar way, when sub-policy $j$ is used for all $\t$-occupancies (i.e., in $\t$-static manner) and  $P^\infty_{B_j}$ represents the limit of this  $\t$-static blocking probability under SFJ limit.  One first needs to ensure that the above limits exist and then consider the limit of the same under SFJ limit, both these tasks are considered in the following theorem.
}
% (with proof in Appendix A):

%We show that, in SFJ limit, the overall blocking probability of the
%eager class can be written in terms of the `$\t$-static blocking
%probabilities' $\{ P_{B_j} \}$ and the fraction of times the system
%($\t$-state) spends in each of the groups $\{\mathcal{G}_j\}$ (with
%proof in Appendix A):

%Eager customers are of higher priority, if admitted into the system,
% and, require service immediately.  Also by assumption {\bf{A.}2}, at
% most $K$ $\epsilon$-jobs can be served at any time. Depending upon
% the admission and further scheduling rules (depicted by the $\g$
% sub-policies), some $\epsilon$-jobs may be dropped leading to a lossy
% system.  %One may have more drops, depending upon further rules of
% the scheduling sub-policies.  As mentioned earlier, ${P_B}_j $ represents
% the blocking probability of $\epsilon$-customers, if system used
% sub-policy $j$ in $\t$-static manner and $\nu_{j}$ is the corresponding
% asymptotic (as $\mue \to \infty$) effective service rate available
% to $\t$-customers. % when their number in the system is one among
% those in ${\mathcal G}_j$.  Also by Theorem \ref{Thm_Stable}, the
% system is stable and stationary distribution ${\bm
% \pi}^{\mu_\epsilon}_\tau $ exists for all $\mu_\epsilon \ge {\bar
% \mu}$ for some ${\bar \mu} < \infty.$
\begin{thm} \textbf{[Blocking probability of eager class]} 
\label{Thm_PB}
Assume {\bf A}.1-4, {\bf B}.1-4.\conftext{~Also assume $\g < \infty$.}
Let ${{\bm \pi}} := \{{\pi}_{i} \}_{i \geq
  0}$ denote the stationary distribution of SDSR-M/M/1 limit tolerant
system, given by Theorem~\ref{Thm_tau}.
%(\ref{Eqn_SDSR_Expected_Number}) and $\rho_j =
%\frac{\lambda_\tau}{\mu_\tau \nu_j} = \frac{\rho_\tau}{\nu_j}$.  Let $
%\pi_{\tau}^{\infty} (j) := \sum_{i \in {\mathcal G}_j}
%\pi_{\tau,i}^{\infty } $ represent the stationary probability that
%$\t$-system is in $j$-cycle, in the limit system.
Then the steady state blocking probability of $\epsilon$-jobs in SFJ
limit is given by:

\vspace{-5mm}
\conftext{
{\small\begin{align*}
\hspace{5mm} P_B^{\mu_\epsilon} \ \ \stackrel{\mu_\epsilon \to
  \infty}{\to} \ \
%\sum_{j=1}^{\g } \pi_{\tau}^{\infty} (j) P_{B_{j}}
\sum_{j=1}^{\g } {P_B}_j \left(\sum_{i \in {\mathcal G}_j} \pi_{\t}^\infty (i) \right) =: P_B^{\infty}.
\hspace{5mm}   \mbox{ \eop }
\end{align*}}}
\TRtext{
{\small\begin{align*}
\hspace{25mm} 
\hspace{5mm} P_B^{\mu_\epsilon} \ \ \stackrel{\mu_\epsilon \to
  \infty}{\longrightarrow} \ \
\sum_{j=1}^{\infty } {P^\infty_B}_j {\pi}_j =: P_B^{\infty} \mbox{ almost surely. }
\hspace{25mm}   \mbox{ \eop }
\end{align*}}}
\end{thm}
\conftext{A sketch of the proof of Theorem~\ref{Thm_PB} can be found
  in the appendix; the complete proof, which also generalizes to  countably many partitions, can be found in \cite{arxiv}.}\TRtext{The proof of
  Theorem~\ref{Thm_PB} can be found in the Appendix~\ref{Appendix_eager}.}

\subsubsection*{Dynamic Pseudo Conservation and its relevance}  
  
  The key challenge in the performance evaluation of our multi-class
  system is the interdependence between the service processes of the
  two classes. However, Theorems~\ref{Thm_tau}-\ref{Thm_PB} show that
  one may approximate the performance of both classes (the
  approximations being accurate under the SFJ limit) using only the
  $\t$-static blocking probabilities \conftext{$\{ P_{B_j} \}_{j \leq
      \g}$}\TRtext{$\{ P_{B_j} \}_{j \geq 0}$}. The probabilities
  \conftext{$\{ P_{B_j} \}_{j \leq \g}$}\TRtext{$\{ P_{B_j} \}_{j \geq
      0}$} themselves are typically easy to compute, since they
  involve the analysis of a single-class (stationary) loss
  system. Finally, we note that by virtue of
  Theorems~\ref{Thm_tau}-\ref{Thm_PB}, we have a {\it`Dynamic Pseudo
    Conservation'}: Under the SFJ limit, the performance of
  \emph{both} classes depends only on the $\t$-static blocking
  probabilities \conftext{$\{ P_{B_j} \}_{j \leq \g}$}\TRtext{$\{
    P_{B_j} \}_{j \geq 0}$}\conftext{~and the partitions
    $\{\mathcal{G}_j\}_{j \leq \g},$} and not on other specifics of
  the\conftext{~$\g$} sub-policies.

} %end ignore

\section{Dynamic Achievable Region}
\label{sec:Achievable Region}

A queuing system can be analysed using several performance metrics;
for example, number of customers in the system, sojourn time (the
total time spent by the customer), waiting time (of the customer
before the service starts), fraction of the customers blocked (in a
loss system), etc. The achievable region of a multi-class system is
defined as the region of all possible vectors (one component for one
class) of the relevant performance metrics.\footnote{In this paper we consider the achievable region corresponding to schedulers which satisfy Assumptions {\bf A.}1-4 and {\bf B.}1-4.}  In our model,
corresponding to the eager class we have a lossy system, thus we
consider blocking probability as the performance metric. For the
tolerant class, one can consider the steady state expected number of
customers in the system as the performance metric.

By Lemma \ref{lemma:TauStability}, the system is stable for all
$\mu_\epsilon \geq \bar{\mu}$, for some $\bar{\mu} < \infty$.  Thus
for all such $\mu_\epsilon$, by Little's Law, $E^{\mu_\epsilon} [S],$
the stationary expected sojourn time of a typical $\tau$-customer, and
$E^{\mu_\epsilon} [N],$ the stationary expected number of
$\tau$-customers in the system are related as $ E^{\mu_\epsilon} [ N ]
= \lambda_\tau E^{\mu_\epsilon} [S ].$ Thus it is sufficient to
consider any one of these metrics.\\

{\bf Stationary Markov top-level policies:} In any general sequential
decision problem, a Stationary Markov (SM) policy is a sequence of
decisions, in which one decision is chosen for each value of the state
and the same decision is applicable in any time slot.  In our case we
consider {\it the top-level policies among the Stationary Markov (SM)
  family.}  This means, a top level policy $\phi$ is a sequence of
$\i$-sub-policies, and that if the $\t$-state equals $j$ at any time
slot, then the $j$-th sub-policy of~$\phi$ is used for scheduling the
$\i$-class.

\conftext{While the statements of Theorems \ref{Thm_tau}-\ref{Thm_PB}
  in the present paper assume finitely many $\i$-sub-policies (one for
  each of the subsets in $\{{\mathcal G}_j \}$), the study of SM
  strategies requires us to move on to the general case of countably infinite
  $\i$-sub-policies, one for each value of $\t$-occupancy. However, as
  stated before, the statements of Theorems \ref{Thm_tau}-\ref{Thm_PB}
  do extend to general SM top-level policies (see
  \cite{arxiv}). {\it Accordingly, in the remainder of this paper, we
  proceed with our analysis of the achievable region and its Pareto
  frontier disregarding the restriction to finitely many
  $\i$-sub-policies.}}

As understood from Theorems~\ref{Thm_tau}-\ref{Thm_PB}, the {\it only
  characteristic of the $\i$-sub-policies that influences the system
  (dynamic) performance are the $\t$-static blocking probabilities $\{
  {P_B}_j\}_j$}, obtained when respective sub-policies are used in
$\t$-static manner. Thus to define an efficient dynamic system, one
effectively needs to choose (based on the $\t$-state), one among these
blocking probabilities (and no further details of the sub-policy are
important).  This {\it is a consequence of the `dynamic
  pseudo-conservation' mentioned in the previous section.}

Any Stationary Markov (SM) top-level policy is generally given by a
sequence of $\epsilon$-sub-policies, one for each
$\tau$-state. However, in view of the above observation, a stationary
Markov policy can be thought of as a sequence of $\epsilon$-blocking
probabilities (derived when the corresponding sub-policies are applied
in $\t$-static manner.  In other words, a SM top-level policy is
defined by $\phi = (d_0, d_1, \cdots )$, where decision $d_j$
specifies a `$\t$-static blocking probability' to be chosen when
number of $\t$-customers equals $j$.  {\it Towards this we implicitly
  require the existence of at least one sub-policy, that achieves the
  given value of `$\t$-static blocking probability', which is any
  value between the system specified limits $\dmin:={\underline
    {P_B}}$ (minimum possible blocking probability) and $\dmax :=
  {\overline {P_B}}$} (the maximum possible blocking probability).
This for example, is achieved by CD-$(p, K)$/LPS-$(p,K)$ policies
mentioned in Section \ref{sec:model}, when one considers all possible
values of $\{(p, K)\}$ (see \cite{Value, ANOR} for more details). In
the rest of the paper, we refer to the top-level policies simply as
policies for brevity.\\

\ignore{
{\bf Achievable region:} For any $\mue > {\bar \mu}$ (given by Lemma
\ref{lemma:TauStability}), the system is stable and one can define the
achievable region under stationary policies.  The achievable region is
the set of all possible pairs of blocking probability of the eager
class and the stationary expected number of tolerant customers in the
system, i.e., \kcmnts{we may not have common ${\bar \mu}$}
\begin{align*}
\mathcal{A}^{\mu_\epsilon} =\left  \{  \left (   P_{B, \phi}^{\mu_\epsilon} , \   \  E^{\mu_\epsilon}_\phi [N ]  \right ): \phi\ \ \mbox{ is an SM    policy }  \right \}.
\end{align*}
}

{\bf Limit Achievable region} Our focus from here on will be the
dynamic achievable region $\mathcal{A}^{\infty}$ of performance
vectors under the SFJ limit. Recall that for tolerant class, the limit
is an SDSR-M/M/1 queue. The $\epsilon$-limit can be seen as a mixture
model made up of many lossy systems, each described by their
$\t$-static blocking probabilities, and mixed independently according
the stationary distribution of the limit SDSR-M/M/1 queue.  Thus, we
define the limit achievable region as follows:

{\small \begin{align*} \hspace{15mm}
\mathcal{A}^{\infty} =\left  \{  \left  ( \ P_{B, \phi}^\infty, E_\phi^{\infty}[N ]  \right  )  : \phi \ \ \mbox{is an SM  policy} \right  \}.
\end{align*}}
Note that $\mathcal{A}^{\infty}$ is the set of limiting performance
vectors under SM policies. In this sense, one may view
$\mathcal{A}^{\infty}$ as the limit of achievable region of our
multi-class system as $\mue \ra \infty.$

For simplicity of notations {\it we avoid the super-script $\infty$}
when the discussion is clearly about the limit system.  {\it At times
  we also drop $\phi$, the SM policy, when there is no ambiguity.}
\\ 

  {\bf A Numerical Example:} To visualize the limit achievable region,
  we consider a system with a top-level policy parametrized by
  ($(p_1, p_2, L, K)$). In this system, the CD-$(p_1,K)$ policy is
  employed when the $\t$-occupancy is less than $L,$ and the
  CD-$(p_2,K)$ policy is employed when the $\t$-occupancy is greater
  than or equal to $L.$
    The tolerant customers are served serially with total capacity of
  all the leftover servers.
 Using the Erlang-B formula, the two $\t$-static blocking probabilities
  of $\i$-customers equal   
  \beq \hspace{15mm} {P_B}_i &=& (1-p_i) + p_i
    \frac{\frac{ (K\rho_\epsilon p_i)^K} {K!}  }{\sum_{k=0}^K \frac{
        (K \rho_\epsilon p_i)^k} {k!  }} , \mbox{ for } i = 1, 2,  
 \eeq  where ${P_B}_1$ is the $\t$-static blocking probability when $\t$-occupancy is less than $L$, while  ${P_B}_2$ is the $\t$-static blocking probability  for the rest of the tolerant states. The performance of such a system at
  limit can be obtained using the results of Theorems
  \ref{Thm_tau}-\ref{Thm_PB}.  We set $ K = 5$, $\rho_\i = 0.4$,
  $\lambda_\tau = 4 $ and $\mu_\tau = 8$, generate the three
  parameters $(p_1, p_2, L)$ randomly. The scatter plot of the
  corresponding values of $E^\infty [N]$ and $P^\infty_B$ is shown in
  Figure \ref{Fig_SM_AR_PF}. The resulting figure is a part of the
  limit achievable region. {\it As seen from the figure, the achievable
  region is a non-zero measure set; thus we need the `efficient' Pareto frontier.} Further, the plot indicates that
  the achievable region is bounded. We will now address the Pareto
  frontier associated with this system. % in the following. % section. 

  \begin{figure}[h]
 
\vspace{-34mm}
\begin{center}

 \includegraphics[width=7cm, height=5.4cm]{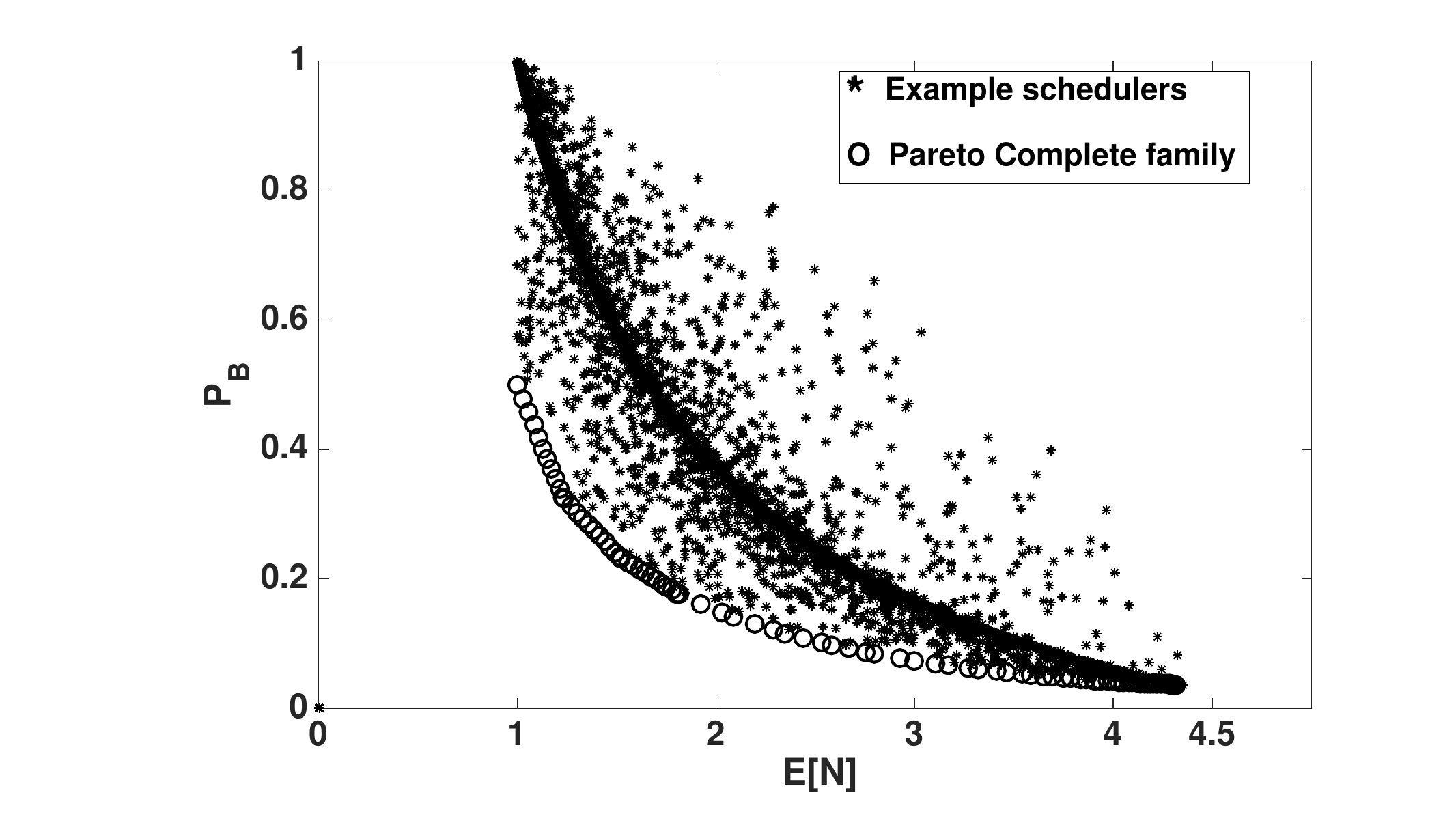} 
 \vspace{-29mm}
  \caption{Achievable region,   Pareto Frontier   $\rmax =  0.8134$, $\rmin = .5$ \label{Fig_SM_AR_PF} }
 
\end{center}

\end{figure}
\vspace{-2mm}
\section{Limit Pareto frontier}
\label{sec:pareto}

 The Pareto frontier is the efficient sub-region of an achievable
 region which consists of dominating performance vectors.
{\it  A pair $(P_{B, \phi^*}, E_{\phi^*}[N])$ (produced by a  
 policy~$\phi^*$) is on Pareto frontier of the limit system, if there
 exists no other SM policy $\phi$ that achieves a better performance
 pair $(P_{B, \phi}, E_\phi [N])$ (in the limit system),} i.e., $P_{B,
   \phi} \le P_{B, \phi^*} \mbox{ and } E_{\phi}[N] \le
 E_{\phi^*}[N],$ one of the inequalities being strict.

The Pareto frontier of the limit system is obtained by solving an
appropriate set of parametrized optimization problems.  Prior to that,
we discuss the limit performance, under any
given SM policy. Invoking Theorems~\ref{Thm_tau}
and~\ref{Thm_PB}\conftext{~(specifically, their generalizations in
  \cite{arxiv})}, under any $\phi$, 
\begin{eqnarray}
\label{Eqn_EN}
E^{\mu_\epsilon}_\phi[N]    \stackrel{\mu_\epsilon \to \infty }{ \longrightarrow}  E^{\infty}_\phi[N]  \hspace{-1mm}&  \hspace{-1mm} = \hspace{-1mm}&  \hspace{-1mm}\sum_{i = 1}^\infty   i \frac{ \h^\phi_i }{1 + \sum_{l\ge 1} \h^\phi_l},  \mbox{ and } \\
\label{Eqn_PB_under_phi}
P_{B, \phi}^{\mu_\epsilon}  \stackrel{\mu_\epsilon \to \infty }{ \longrightarrow}    P^\infty_{B, \phi}  \hspace{-1mm}&  \hspace{-1mm} = \hspace{-1mm}&  \hspace{-1mm} \frac{d_0}{1 + \sum_{i \ge 1}  \h^\phi_i } +  \sum_{i=1}^\infty d_i  
\frac{ \h^\phi_i}{1  + \sum_{l \ge 1} \h_l^\phi}, \mbox{ where }
\\ 
\label{Eqn_pi_under_phi}
   \h^\phi_j  \hspace{-1mm}&  \hspace{-1mm} = \hspace{-1mm}&  \hspace{-1mm}  \indicator{j = 0} + \indicator{j >0} \prod_{i=1}^j \rho^\phi_i
   , \ \mbox{with } \rho^\phi_i := \frac { \lambda_\tau }{\mu_\tau (1
     - \rho_\epsilon (1-d_i) )}.
\end{eqnarray}

\ignore{We would like to iterate again here that Theorems
  \ref{Thm_tau}-\ref{Thm_PB} are proved only for finite SM policies;
  the above {\it is only a conjecture for infinite policies.}  However
  we will notice that the {\it Pareto-optimal policies are of finite
    nature and the theorems are true for them.}}

 \vspace{-4mm}
\subsection{Pareto-complete family}
%Both the classes aim to minimize their respective performance
%metrics. Thus,
We now derive a family of {\it Pareto-complete} policies, i.e., a
parametrized family of policies that span the entire Pareto-frontier
of our system.
One can obtain all the points on the Pareto frontier by considering
the following parametrized (by $C$) constrained optimization problems
(with $\h_i^\phi $ defined in (\ref{Eqn_pi_under_phi})). 
l\begin{align}
\label{Eqn_Pareto_optimal} 
\hspace{10mm}
\min_{\phi}    P_{B, \phi} \mbox{ such that }   E_\phi [N]  \le C \mbox{, i.e., equivalently } \hspace{1mm} \\
\min_{\phi  }   \sum_{i=0}^\infty d_i  \frac{ \h_i^\phi }{1 + \sum_{l \ge 1} \h_l^\phi  }   \mbox{ such that }    \sum_{i=0}^\infty i \frac{ \h_i^\phi } {1 + \sum_{l \ge 1} \h_l^\phi  } \le C.
\nonumber
\end{align}
 \extrabits{
 \begin{figure}
 \vspace{-5mm}
 \hspace{-16mm}
 \includegraphics[width=8cm, height=5.2cm]{SM_AR_withParetoFrontier-eps-converted-to.pdf} 
 \vspace{-3mm}
  \caption{Achievable region,   Pareto Frontier   $\rmax =  0.83$, $\rmin = .5$ }
   \includegraphics[width=9cm, height=5.2cm]{ParetoForTwoConfig-eps-converted-to.pdf}
  \caption{  Pareto Frontier  via Pareto complete family }
    
 \end{figure}}
Recall $\dmin$, $\dmax$ respectively represent the best and worst
sub-policy (with respect to $\i$-customers), in that these represent
the minimum and maximum possible blocking probabilities.  Define:
 
 \vspace{-6mm}
 {\small 
 \begin{align}
 \label{Eqn_rhomax_min}\hspace{5mm}
 \rmax := \frac {\lambda_\tau }{ \mu_\tau  (1-\rho_\epsilon  (1- \dmin)) }, %\rho =\frac {\lambda_\tau }{ \mu_\tau  (1-\rho_\epsilon  (1- d)) },  
  \mbox{  and }  \rmin = \frac { \lambda_\tau } { \mu_\tau  (1-\rho_\epsilon (1- \dmax))  }.
 \end{align}}
 The terms $(\rmax, \rmin)$ represent the (worst and best) load factor of the $\t$-customers in the limit system,  when $\i$-customers are scheduled respectively with the 
best  (blocking probability $\dmin$) and  worst  (blocking probability $\dmax$)
   sub-policies, in $\t$-static manner.

Suppose that the constraint $C$ on the expected $\t$-number satisfies
$C \ge \rmax / (1-\rmax)$ (observe $\rmax / (1-\rmax)$ is the expected number
in M/M/1 queue with maximum load factor $\rmax$);
%, the one derived by $\t$-customers in limit system with the maximum
%possible load.
then the problem (\ref{Eqn_Pareto_optimal}) becomes an unconstrained
problem and the optimal policy clearly equals $\phi^*= (\dmin, \dmin,
\cdots)$.  We show that the optimal policy for any given $C < \rmax /
(1-\rmax)$, is monotone (but not strictly monotone) in $\tau$-state and further derive its closed
form expression (proof in Appendix~\ref{Appendix_pareto}):
\begin{thm}
\label{Thm_Opt_Policy} 
The  policy $\phi^* = \{d_0^*, d_1^* ,  \cdots\}$ that optimizes the   problem defined in (\ref{Eqn_Pareto_optimal})  is monotone and is given by:

\kt{
\begin{align}
\label{Eqn_pareto_optimal_policy}
d_i^* 
=  \indicator{i < L^*} \dmin + \indicator{i = L^*} d^* + \indicator{i > L^*} \dmax
%  \left \{ \begin{array}{llll}
%\dmin   &\mbox{ if }  i < L^* \\
%d^*          & \mbox{ if }   i = L^* \\
%\dmax  & \mbox{ if }  i > L^*,
%\end{array} \right .
\end{align}}which is parametrized by two parameters $(L^*, d^*)$. The  expressions for $(L^*, d^*)$ are given by 
%\conftext{in \cite{arxiv}.}
~(see (\ref{Eqn_rhomax_min})):

\vspace{-6mm}
{\small \begin{align*}
L^*  
=&  \sup \left  \{ i :  \frac {  \sum_{j \le i}  \rmax^j j +    \rmax^i    \sum_{j > i} j \rmin^{j-i} }{
\sum_{j \le i} \rmax^j  +    \rmax^i    \sum_{j > i}  \rmin^{j-i}
} < C \right  \},   \\
 d^* =& \frac{1}{\mu_\tau \rho_\epsilon}\bigg[ \frac{- \lambda_\tau   \rmax^{L^*}    \left  (L^*+1 - C +   \sum_{j > L^*+1} (j -C )\rmin^{j-L^*-1}  \right )}{ \sum_{j \le L^*}  \rmax^j ( j - C) }  - \mu_\tau(1 - \rho_\epsilon) \bigg]  \indicator{L^* < \infty } + \ \     \underline{d} \indicator{L^* = \infty }.
\end{align*}}In the above $L^* \ge 1$,  is set to  $\infty$ when the inequality defining the $\sup$ is satisfied for all $i$ (i.e., when $C \ge \rmax /
(1-\rmax)$). 

\end{thm}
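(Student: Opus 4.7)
The plan is to tackle the constrained problem via Lagrangian duality, leveraging the rational structure of the objective in each coordinate. For each $\lambda \ge 0$, introduce $\mathcal{L}_\lambda(\phi) := P_{B,\phi} + \lambda E_\phi[N]$. Since the achievable region (after admitting randomization between policies) is convex in the $(P_B, E[N])$ plane, every Pareto-optimal point is a minimizer of some $\mathcal{L}_\lambda$; it thus suffices to characterize these $\mathcal{L}_\lambda$-minimizers and then select the appropriate $\lambda^*$ to enforce the constraint $E[N] = C$.

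For fixed $\lambda$ and all $d_j$ with $j \ne i$, a direct computation shows that $\mathcal{L}_\lambda$ is a M\"obius function of $\alpha_i := 1 - \rho_\epsilon(1-d_i)$: writing $\rho_i^\phi = (\lambda_\tau/\mu_\tau)/\alpha_i$, both the numerator and denominator of $\mathcal{L}_\lambda$ become affine in $\alpha_i$. Since a M\"obius function has no interior extremum, the coordinate-wise minimum on $\alpha_i \in [\underline{\alpha}, \overline{\alpha}]$ lies at a boundary; i.e., $d_i^* \in \{\underline{d}, \overline{d}\}$ (or $\mathcal{L}_\lambda$ is constant in $d_i$, permitting interior values). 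For monotonicity, I would use an interchange argument: if $d_i^* = \overline{d}$ and $d_{i+1}^* = \underline{d}$ for some $i$, swapping these two decisions alters only the single term $h_i^\phi$ (because $h_j^\phi$ for $j \ge i+1$ depends on the product $\rho_i^\phi \rho_{i+1}^\phi$, which is invariant under the swap). Using the identity $\rho_i^\phi(1-\rho_\epsilon+\rho_\epsilon d_i) = \lambda_\tau/\mu_\tau$, the resulting trade-off ratio $\Delta N_{\mathcal L}/\Delta D_{\mathcal L}$ simplifies to $1 + \lambda i + (\lambda_\tau/\mu_\tau - 1)/\rho_\epsilon$; comparing this against $\mathcal{L}_\lambda^{\mathrm{old}}$ yields a contradiction unless $d_i^* \le d_{i+1}^*$, establishing monotonicity.

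Combining bang-bang with monotonicity, every Lagrangian-optimal policy has the threshold form $d_i = \underline{d}$ for $i < L$ and $d_i = \overline{d}$ for $i > L$, for some $L \in \{0,1,\ldots\} \cup \{\infty\}$. The critical multiplier $\lambda^*$ for the constrained problem produces a threshold $L^*$ at which the unconstrained optimum transitions, and the interior value $d^* \in [\underline{d}, \overline{d}]$ at state $L^*$ is chosen so that $E[N] = C$. Plugging the threshold form into (\ref{Eqn_EN})--(\ref{Eqn_pi_under_phi}) (with $h_j^\phi = \overline{\rho}^j$ for $j < L^*$, $h_{L^*}^\phi = \overline{\rho}^{L^*-1}\rho_{L^*}^\phi$, and $h_j^\phi = \overline{\rho}^{L^*-1}\rho_{L^*}^\phi \underline{\rho}^{j-L^*}$ for $j > L^*$) reduces $E[N] = C$ to a scalar equation in $d^*$, whose solution yields the stated formula; $L^*$ is then identified as the largest $i$ such that the pure-threshold policy (corresponding to $d^* = \underline{d}$) still achieves $E[N] < C$.

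The main obstacle is the monotonicity step: while bang-bang follows immediately from the M\"obius observation, showing that the bang-bang minimizer must be globally monotone requires the swap calculation above combined with a global comparison against $\mathcal{L}_\lambda^{\mathrm{old}}$, since local coordinate-wise optimality does not by itself preclude a monotonicity-violating configuration. A secondary technical issue is handling the infinite state space: justifying strong Lagrangian duality and the exchange of limits in the series defining $E[N]$ and $P_B$ is handled by a finite-state truncation that exploits the uniform stability guaranteed by Assumption~\textbf{B}.4 together with the uniform convergence in Lemma~\ref{Lem_TP_Convergence}, passing to the limit after establishing the threshold structure at each finite level.
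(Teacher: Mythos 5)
Your proposal rests on a Lagrangian-duality reduction that the paper deliberately avoids, and this is where the genuine gap lies. You assert that ``since the achievable region (after admitting randomization between policies) is convex, every Pareto-optimal point is a minimizer of some $\mathcal{L}_\lambda$.'' Neither half of this is available here. The performance functionals $P_{B,\phi}$ and $E_\phi[N]$ are ratios of infinite series in the products $\h_i^\phi=\prod_{j\le i}\rho_j^\phi$ (see \eqref{Eqn_EN}--\eqref{Eqn_pi_under_phi}); they are not affine in any natural parametrization of $\phi$, and a randomized mixture of two SM policies is not an SM policy, nor is its performance pair the convex combination of the two performance pairs (the stationary distribution of the mixed chain is not the mixture of the stationary distributions). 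So convexifying by randomization takes you outside the feasible class of the optimization problem \eqref{Eqn_Pareto_optimal}, and without convexity of $\mathcal{A}^\infty$ (the paper's own Figure~\ref{Fig_SM_AR_PF} gives no reason to presume it a priori) there may be Pareto points of the constrained problems that are not exposed by any linear functional, i.e., a duality gap. Your closing remark that strong duality is ``handled by a finite-state truncation'' does not address this: truncation controls the tails of the series, not the non-convexity of the image set. The paper instead works directly with the constrained problem: it reformulates it (part (a)) as $\max f(\phi)=\sum_i\prod_{j\le i}\rho_j^\phi$ subject to $g(\phi)=\sum_i(i-C)\prod_{j\le i}\rho_j^\phi\le 0$, shows the constraint is tight at the optimum (part (b)), and then exhibits, for any non-threshold policy, a two-coordinate perturbation $(\rho_i,\rho_{i+1})\mapsto(\rho_i+\epsilon,\rho_{i+1}-\tilde\epsilon)$ calibrated so that $g$ does not increase while $f$ strictly increases (part (c)), supplemented by the swap lemma for the regime $i<C$. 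This bypasses duality entirely.

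Two of your ingredients are sound and in fact parallel the paper's mechanics: the observation that $\sum_l \h_l^\phi$, $\sum_l l\h_l^\phi$ and $\sum_l d_l\h_l^\phi$ are all affine in a single $\rho_i$ (equivalently in $\alpha_i$, via $d_i\rho_i$ being affine in $\rho_i$) is exactly the identity $c+\mu_\tau\rho_\epsilon d_i=\lambda_\tau \h_{i-1}^\phi/\h_i^\phi$ that drives the paper's reformulation, and the fact that swapping adjacent decisions alters only the single term $\h_i^\phi$ is precisely Lemma~\ref{Lemma_swap}. But your monotonicity step also has an internal problem: the claimed trade-off ratio $1+\lambda i+(\lambda_\tau/\mu_\tau-1)/\rho_\epsilon$ ignores that the swap permutes the weights $d_i,d_{i+1}$ multiplying the (changed and unchanged) terms $\h_i,\h_{i+1}$ in the numerator of $P_B$, so the net change in the Lagrangian ratio is not captured by a single-term calculation; and the paper's swap argument is sign-sensitive (it yields improvement only for $i<C$, with a separate $\epsilon$-addition argument, Lemma~\ref{Lemma_epsilon}, and a case analysis needed for $i\ge C$ and for the boundary cases $\rho_i^\phi=\rho_{i+1}^\phi$). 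To salvage your route you would need either to prove zero duality gap for every $C<\rmax/(1-\rmax)$, or to abandon the Lagrangian and run the bang-bang/interchange analysis directly on the constrained problem --- which is what the paper does.
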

An immediate consequence of the above theorem is the following:
\begin{cor}[{\bf Pareto-Complete family}]
	The family of schedulers given by (\ref{Eqn_pareto_optimal_policy}), parametrized by $(L^*, d^*)$ with $1 \le L^* \le \infty $ and
$\dmin \le d^* \le \dmax$, 
is  Pareto-complete.  
\end{cor}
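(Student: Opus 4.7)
The plan is to invoke Theorem~\ref{Thm_Opt_Policy} via the standard correspondence between Pareto-optimality and the $\epsilon$-constraint scalarization of a bi-objective problem. I would show that every Pareto-optimal performance vector arises as the optimum of the constrained problem (\ref{Eqn_Pareto_optimal}) for some value of $C$, and then appeal to Theorem~\ref{Thm_Opt_Policy} to conclude that this optimum is realized within the two-parameter family (\ref{Eqn_pareto_optimal_policy}).

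In detail, let $(\tilde P_B, \tilde N)$ be an arbitrary Pareto-optimal pair achieved by some SM policy $\tilde\phi$, and set $C := \tilde N$ in (\ref{Eqn_Pareto_optimal}). Let $\phi^*$ be an optimal policy produced by Theorem~\ref{Thm_Opt_Policy}; by construction $\phi^*$ lies in the family (\ref{Eqn_pareto_optimal_policy}) for some $(L^*, d^*)$, with the unconstrained regime $C \geq \rmax/(1-\rmax)$ corresponding to $L^* = \infty$ and $d^* = \dmin$, i.e., the all-$\dmin$ policy. Feasibility of $\tilde\phi$ together with optimality of $\phi^*$ gives $P_{B,\phi^*} \leq \tilde P_B$, and feasibility of $\phi^*$ gives $E_{\phi^*}[N] \leq \tilde N$. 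Pareto-optimality of $(\tilde P_B, \tilde N)$ then forces both inequalities to be equalities, so $\phi^*$ realises the same performance vector. Hence every Pareto-optimal pair is attained by a member of the family.

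The main non-routine content has already been absorbed into Theorem~\ref{Thm_Opt_Policy}, so what remains is essentially bookkeeping. The one subtlety worth flagging is that one should check that sweeping the parameter $C$ of (\ref{Eqn_Pareto_optimal}) over $(\rmin/(1-\rmin), \rmax/(1-\rmax))$ makes the pair $(L^*, d^*)$ traverse the full declared parameter space: $L^*$ is non-decreasing in $C$ by inspection of the $\sup$ in Theorem~\ref{Thm_Opt_Policy}, and on each interval of $C$ on which $L^*$ is constant, the closed form for $d^*$ is continuous and monotone in $C$ and ranges over $[\dmin,\dmax]$. This gives the continuous interpolation between the two extreme corners $(\dmin, \rmax/(1-\rmax))$ and $(\dmax, \rmin/(1-\rmin))$ of the frontier, so the family (\ref{Eqn_pareto_optimal_policy}) indeed sweeps out the entire Pareto frontier. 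The anticipated obstacle — establishing that the minimiser of (\ref{Eqn_Pareto_optimal}) has the monotone two-threshold form — is exactly what Theorem~\ref{Thm_Opt_Policy} provides, so no further technical work is required for the corollary itself.
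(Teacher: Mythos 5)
Your proposal is correct and follows essentially the same route as the paper, which states the corollary as an immediate consequence of Theorem~\ref{Thm_Opt_Policy}; your $\epsilon$-constraint scalarization argument (set $C=\tilde N$, use feasibility of $\tilde\phi$ and optimality of $\phi^*$, then let Pareto-optimality force equality in both coordinates) is exactly the standard way to make that implication precise. The additional remark about sweeping $C$ so that $(L^*,d^*)$ traverses the declared parameter range is not needed for Pareto-completeness as the paper defines it (spanning the frontier), but it is harmless and consistent with Theorem~\ref{Thm_Opt_Policy}.
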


\extrabits{
\textcolor{red}{
\beq
\rho^* =   
   \frac {  \sum_{j \le L^*}  \rmax^j ( j - C) \ \   \indc_{L^* < \infty }  }{
   -  \rmax^{L^*}    \left  ( L^* +1 - C +   \sum_{j > L^*+1} (j -C )\rmin^{j-L^*-1}  \right )  }  + \rmax \indc_{L^* = \infty }
\eeq 
From  (\eqref{Eqn_pi_under_phi}) we get:
\beq
\rho^* &=&   
   \frac {  \sum_{j \le L^*}  \rmax^j ( j - C) \ \   \indc_{L^* < \infty }  }{
   -  \rmax^{L^*}    \left  (L^* +1 - C +   \sum_{j > L^* +1} (j -C )\rmin^{j-L^* -1}  \right )  } = \frac{\lambda_\tau}{\mu_\tau(1 - \rho_\epsilon(1 - d^*))}\\
 & \Rightarrow & \mu_\tau(1 - \rho_\epsilon(1 - d^*)) = \frac{- \lambda_\tau   \rmax^{L^*}    \left  (L^*+1 - C +   \sum_{j > L^* +1} (j -C )\rmin^{j-L^*-1}  \right )}{ \sum_{j \le L^*}  \rmax^j ( j - C) \ \   \indc_{L^* < \infty }} \\
& \Rightarrow & \mu_\tau(1 - \rho_\epsilon) + \mu_\tau \rho_\epsilon d^* = \frac{- \lambda_\tau   \rmax^{L^*}    \left  (L^* +1 - C +   \sum_{j > L^* +1} (j -C )\rmin^{j-L^*-1}  \right )}{ \sum_{j \le L^*}  \rmax^j ( j - C) \ \   \indc_{L^* < \infty }} \\
& \Rightarrow &  \mu_\tau \rho_\epsilon d^* = \bigg[ \frac{- \lambda_\tau   \rmax^{L^*}   \left  (L^*+1 - C +   \sum_{j > L^*+1} (j -C )\rmin^{j-L^*-1}  \right )}{ \sum_{j \le L^*}  \rmax^j ( j - C) \ \   \indc_{L^* < \infty }} - \mu_\tau(1 - \rho_\epsilon) \bigg] \\
& \Rightarrow &   d^* = \frac{1}{\mu_\tau \rho_\epsilon}\bigg[ \frac{- \lambda_\tau   \rmax^{L^*}    \left  (L^*+1 - C +   \sum_{j > L^*+1} (j -C )\rmin^{j-L^*-1}  \right )}{ \sum_{j \le L^*}  \rmax^j ( j - C) \ \   \indc_{L^* < \infty }} - \mu_\tau(1 - \rho_\epsilon) \bigg].
\eeq
}
}
\extrabits{
\subsection*{Important sub family (One threshold policies)}
Consider a sub-family of schedulers parametrized by $L$, that choose  $\epsilon$-scheduling sub-policy with maximum blocking (i.e., $d=1$) when the $\tau$-number is greater than  $L$ and choose the sub-policy with minimum blocking (i.e., $d = \dmin$) for the rest. Define:
 $$
 \rmax := \frac {\lambda_\tau }{ \mu_\tau  (1-\rho_\epsilon  (1- \dmin)) } \mbox{  and }  \rmin = \frac { \lambda_\tau } { \mu_\tau  (1-\rho_\epsilon)  }.$$
 By Theorem \ref{Thm_Opt_Policy}, this sub-family is a Pareto-optimal family and one can easily compute the performance under these policies:  

\vspace{-4mm}
{\small\begin{align}
EN &=& \frac { 1 }
{
\psi
} \left ( \rmax  \frac{1-\rmax^L}{(1-\rmax)^2}   - \frac{ L \rmax^{L} }{ 1- \rmax } + \frac{ \rmin^2 \rmax^{L-1} }{ (1- \rmin)^2 } + \frac{ L \rmax^{L-1} \rmin }{ 1- \rmin }
\right )
  \nonumber \\
&=& \frac { 1   }
{
\psi 
} \left ( \rmax  \frac{1-\rmax^L}{(1-\rmax)^2}   + \frac{ L \rmax^{L-1} (\rmax-\rmin) }{ (1- \rmax) (1-\rmin) } + \frac{ \rmin^2 \rmax^{L-1} }{ (1- \rmin)^2 }  \right )
\nonumber
\\
PB &=& \frac{1}
{
\psi
} \left( \frac {
 \rmax^{L-1} \rmin}{1 -\rmin} \dmax  + \frac{\rmax^{L}-1  }{\rmax- 1 }  \dmin  \right )  \mbox{ with }\\
 \psi &:=&\frac{\rmax^L - 1}{\rmax- 1}  + \frac{\rmax^{L-1} \rmin}{1 -\rmin} .
\end{align}}
}

The policies in this family choose the
`worst' $\epsilon$-sub-policy (i.e., with $d=\dmax$) when the
$\tau$-number is greater than or equal to $L+1$, choose a sub-policy
with intermediate blocking $d$ when $\tau$-number equals $L$ and
choose the `best' sub-policy (i.e., with $d = \dmin$) for the rest
(see (\ref{Eqn_pareto_optimal_policy})).  One can easily compute the
performance under these policies, as below (see \ref{Eqn_rhomax_min}):
{\begin{eqnarray}
\label{Eqn_pareto_perf}
E_{(L,d)}[N] &=& \psi \bigg ( \rmax \frac{1-\rmax^L}{(1-\rmax)^2} -
\frac{ L \rmax^{L - 1} }{ 1- \rmax } + \frac{\rho \rmax^{L-1} (\rmin +
  L - L \rmin) }{ (1- \rmin)^2 } \bigg ) \nonumber \\
%  && + \rho \frac{ \rmin^2 \rmax^{L-1} }{ (1- \rmin)^2 } 
%+ \frac{\rho \rmax^{L-1} (\rmin + L) }{ 1- \rmin } \bigg  ) 
% \\ &&
%+\rho \frac{ (L+1) \rmax^{L-1} \rmin }{ 1- \rmin } + \rho L \rmax^{L-1} 
%\bigg  )
  \nonumber \\
%&=& \psi \left ( \rmax  \frac{1-\rmax^L}{(1-\rmax)^2}   + \frac{ L \rmax^{L-1} (\rmax-\rmin) }{ (1- \rmax) (1-\rmin) } + \frac{ \rmin^2 \rmax^{L-1} }{ (1- \rmin)^2 }  \right )
%\nonumber
%\\
P_{B,(L,d)} &=&  \psi \left( d \rho \rmax^{L-1} + \rho \dmax  \frac {
 \rmax^{L-1} \rmin}{1 -\rmin}  + \frac{\rmax^{L}-1  }{\rmax- 1 }  \dmin    \right )   \mbox{ with  }  \nonumber  \\
&& \hspace{-19mm}
 \psi \ = \ \frac{1}
{
\frac{\rmax^L - 1}{\rmax- 1}+ \rho \rmax^{L-1}  + \rho  \frac{ \rmax^{L-1} \rmin  }{1 -\rmin}    
},  \  \   \rho =  \frac { \lambda_\tau } { \mu_\tau  (1-\rho_\epsilon (1- d))  } \hspace{2mm}.
\end{eqnarray}}
Thus {\it we derived Pareto complete family as well as the performance under this family, which can readily be used for any relevant optimization problem. }
\\

{\bf Numerical example:} We continue with the numerical example of
Figure \ref{Fig_SM_AR_PF}.  For this example, one can easily compute
that $\rmax  =  0.8134$ (no admission control on eager class, i.e., with $p_i = 1$ for all $i$) and $\rmin = 0.5$ (eager class is completely blocked with $\dmax = 1$ 
and hence $\rmin = \lambda_\tau/ \mu_\tau$), when the system can at maximum serve 5 eager customers in parallel.  By substituting these values
into (\ref{Eqn_pareto_perf}), one can obtain the Pareto frontier. The
circles in the figure represent this Pareto frontier, and are obtained
by varying $(L, d)$ appropriately.  It is clear from the figure that
the derived set of points are indeed dominating and are on the Pareto
frontier.

\extrabits{ 
\subsection*{Dynamic  Pseudo Conservation law}
 One can replace any pareto optimal  policy  (\ref{Eqn_pareto_optimal_policy})  with a mixed policy as below:
\begin{align}
\label{Eqn_pareto_alt_policy}
d_i^* 
= \left \{ \begin{array}{llll}
\dmin   &\mbox{ if }  i < L \\
(\dmin,  \dmax )  \mbox{ with probabilities } q \mbox{ and }      (1-q)         & \mbox{ if }   i = L \\
\dmax  & \mbox{ if }  i > L,
\end{array} \right .
\end{align}
where exact value of $q$  can be computed given $(L, d)$ or constraint $C.$ One can also derive the same Pareto frontier by considering a constraint on $P_B$ and then minimizing the  $E[N]$
$$
\min_{\phi}   E_\phi[N]  \mbox{ such   that  }  P_{B, \phi}  \le \eta \mbox{ for some }  0 <  \eta < \infty.
$$
 and  the same family of schedulers (given by (\ref{Eqn_pareto_alt_policy})) again optimize.  
 
  A close look at (\ref{Eqn_pareto_alt_policy})  indicates that any Pareto optimal policy schedules using maximum $\t$-static blocking probability for all $\t$-states above a threshold, while during the rest  minimum $\t$-static blocking probability  is used.  
  Thus we have the following\\
   \underline{Dynamic (Pareto) Pseduo conservation},  at SFJ limit:
{\it"  For any given system, specified by $(\rmax, \rmin, \rho_\i)$ the  maximum and minimum $\t$-load factors (derived  respectively  with the minimum  and the  maximum  $\t$-static blocking probabilities and the load factor of $\i$-class), 
   the time asymptotic fraction of  $\epsilon$-customers to be  blocked $P_B$,  completely determines the best performance of the tolerant class."  
The   best performance of the tolerant class equals that of the state dependent service rate M/M/1 system operating at either of the load factors,  $(\rmax, \rmin)$, and further such that best rate service is provided when the $\t$-strength is higher than a threshold.  The threshold is uniquely 
determined by $P_B$. }

 Thus $P_B$ alone determines the performance of the $\t$-class and no other details of the two scheduling policies have any influence on this relation. This is exactly as in  static Pseudo conservation law, but the difference between the two conservations is the placement of the fraction of $\i$-customers to be blocked: a) In static case, they are blocked independent of $\t$-state; and b) in dynamic case, they are blocked  only when the $\t$-queue is above a threshold.  Note here that the inherent blocking of $\i$-customers due to presence of fellow $\i$-customers, captured by the best blocking probability $\dmin$,  cannot be avoided. But the extra blocking to provide a controlled fair service to $\t$-customers can be placed towards the higher state values of $\t$-customers. 

 Given $P_{B, \phi}$, $E_\phi[N]$ is unique fixed point of the following equation:
 $$
 \pi_{\t, \ge  L} = P_B \mbox { and  }  \pi_\t = s.d. ( SDSR (\lambda_\t, \mu_\t,  \{\dmin, \dmax\},  \{0, \cdots, L\}, \ \{L, \cdots, \} ) ).
 $$
}

\subsection{Monte Carlo  based  case study in pre-limit}

We consider an example case-study with CD-$(p, K)$ sub-policies of
Subsection~\ref{sec:example_models}. Specifically, for fixed $K,$ we
consider top-level policies that perform CD-$(1,K)$ when the
$\t$-occupancy is less than certain $L $ (with $L\ge 1$) and perform
CD-$(0,K)$ (thus blocking all $\i$-jobs) when the $\t$-occupancy is
greater than or equal to $L.$ In view of Theorem~\ref{Thm_Opt_Policy},
by stepping over $L$ as above\footnote{ Here again, $\dmin$
  (respectively $\dmax$) equals the blocking probability without eager
  admission control (respectively if eager class is admitted only when
  $\t$-queue is empty).}, we sample performance vectors from the limit
Pareto-frontier of the system.

%This scheduling policy is similar to the static version (same
%admission control for all $\t$-states) CD schedulers considered in
%\cite{Value,ANOR}.

It is very complicated to obtain an exact analysis of this
heterogeneous system. However by Theorems \ref{Thm_tau}-\ref{Thm_PB},
one can obtain an approximate analysis for this system. \revision{In
  this section, we validate these approximations against Monte Carlo
  (MC) simulations of the actual system.} Importantly, {\it the Monte
  Carlo simulations do not even drop $\i$-customers at
  $\t$-transitions as required by {\bf A}.1}.

\begin{figure}[h]

\begin{minipage}{7cm}
\vspace{-34mm}
\begin{center}
\includegraphics[width=5.7cm, height =6cm]{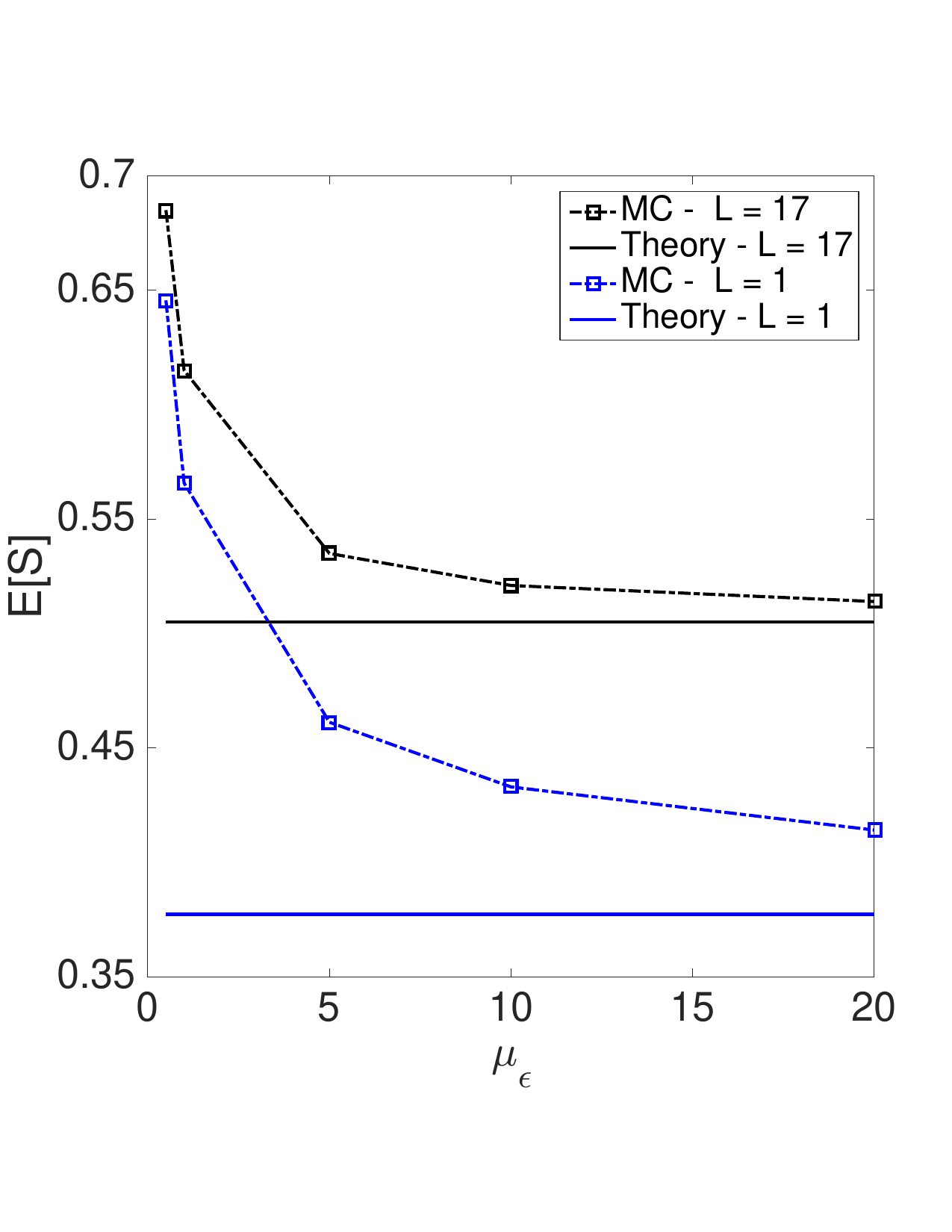}
\vspace{-76mm}
%\caption{   Expected sojourn time versus    $\mu_\epsilon$.    \label{Fig_ES_mue}}
\end{center}
\end{minipage} \hspace{5mm}
\begin{minipage}{7cm}
\begin{center}

\includegraphics[width=5.7cm, height = 6cm]{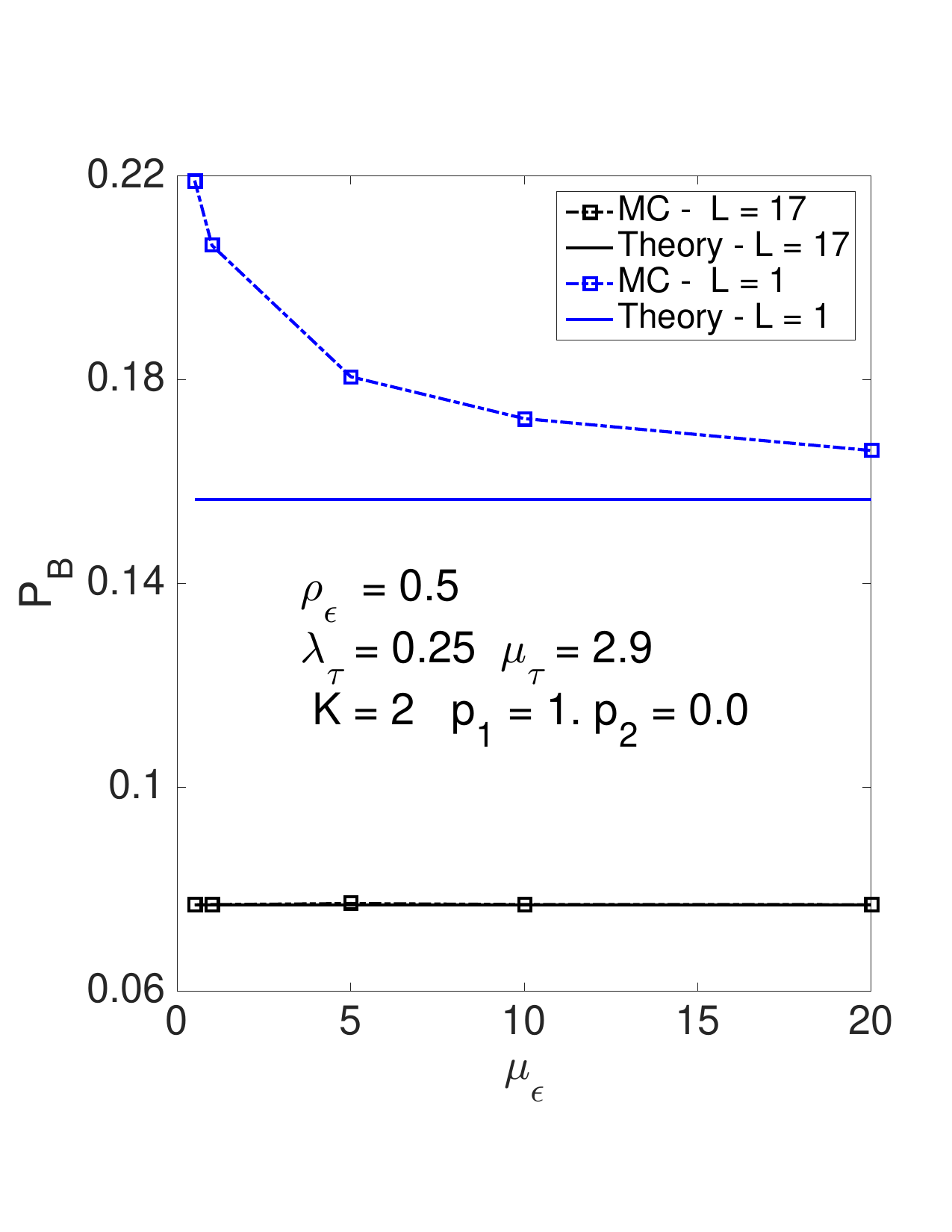}
\vspace{-76mm}

\vspace{-10mm}

\end{center}
\end{minipage}
\vspace{-10mm}
\caption{  $E[S]$, $P_B$ versus   $\mu_\epsilon$ for light traffic.  \label{Fig_PB_mue_light}}

\begin{minipage}{7cm}
\vspace{-34mm}
\begin{center}
\includegraphics[width=5.7cm, height =6cm]{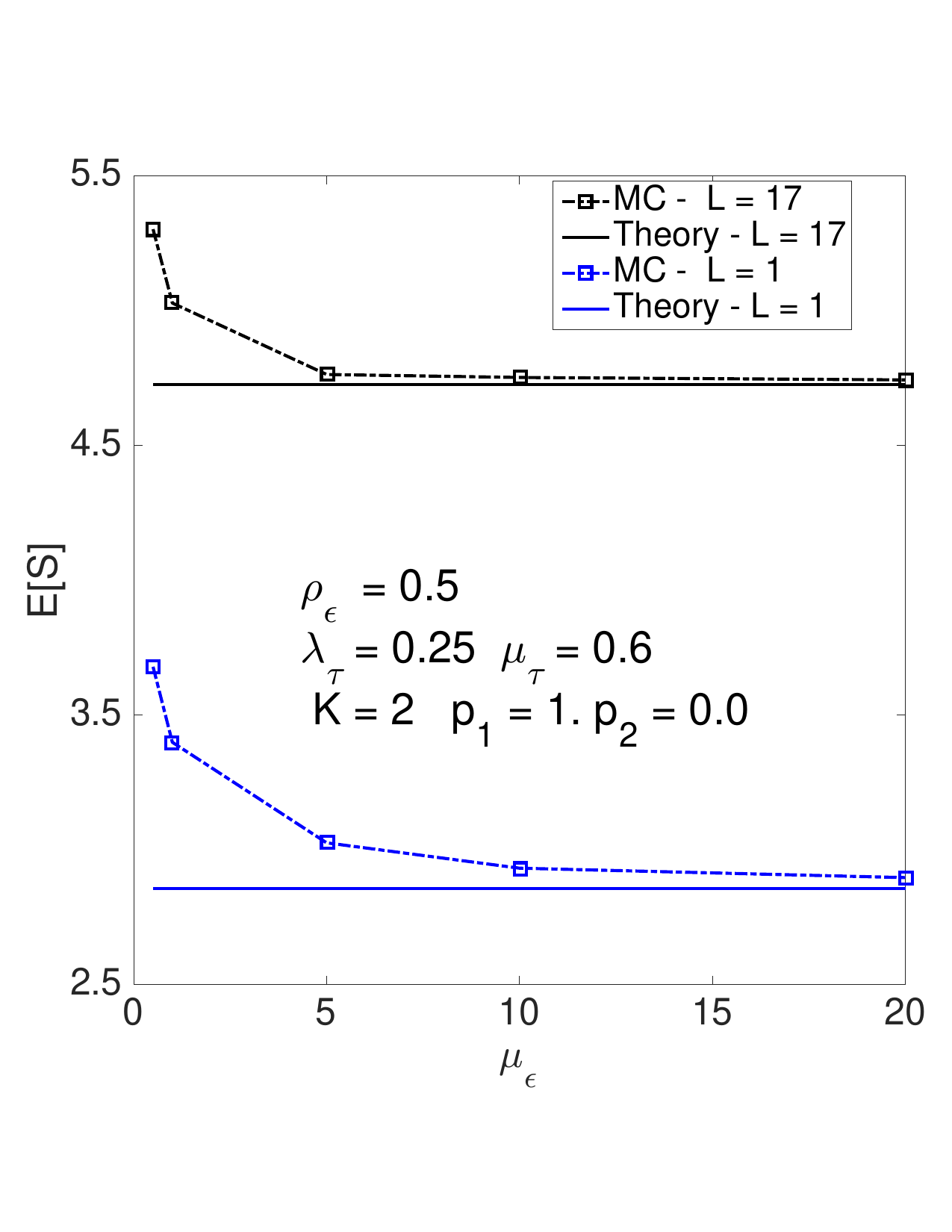}
\vspace{-76mm}
%\caption{   Expected sojourn time versus    $\mu_\epsilon$.    \label{Fig_ES_mue}}
\end{center}
\end{minipage} \hspace{5mm}
\begin{minipage}{7cm}
\begin{center}

\includegraphics[width=5.7cm, height = 6cm]{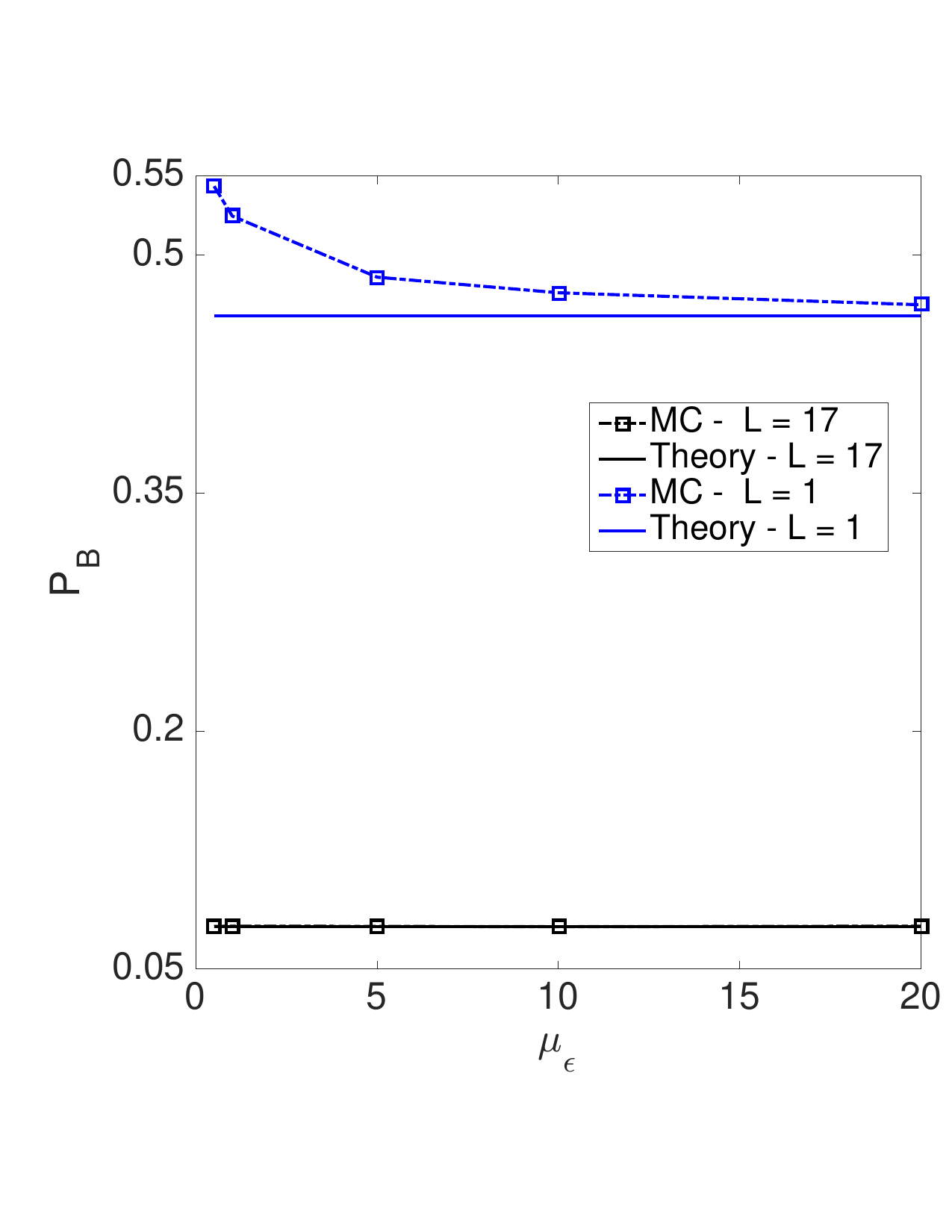}
\vspace{-76mm}
%\caption{  Blocking Probability versus   $\mu_\epsilon$.  \label{Fig_PB_mue}}
 
%
%
\vspace{-10mm}

\end{center}
\end{minipage}
\vspace{-10mm}
\caption{  $E[S]$, $P_B$ versus   $\mu_\epsilon$ for moderate traffic.  \label{Fig_PB_mue}}
\end{figure}

\revision{
In all the case studies presented in this section, we consider exponentially distributed job sizes  and Poisson arrivals for both the classes. The parameters used for a particular case study are described in the corresponding figure itself. 
}

\revision{
Our first case studies are presented in
Figures~\ref{Fig_PB_mue_light} and~\ref{Fig_PB_mue}.
% with an aim to study the rate of convergence.
We plot two performance measures, mean number of tolerant jobs, and
blocking probability of eager jobs, versus $\mu_\epsilon$ for two
different tolerant load factors; Figure~\ref{Fig_PB_mue_light}
corresponds to a light tolerant load, while Figure~\ref{Fig_PB_mue}
corresponds to a moderate tolerant load. (The system parameters used
are mentioned in the figures directly.) Note that as expected, the
simulated system performance metrics approach their SFJ approximations
as $\mue$ increases. Interestingly, our approximations tend to
\emph{under-estimate} both metrics (perhaps because the partial fluid
limit we consider `washes away' the stochasticity of the eager
workload). Moreover, note that in the case of moderate tolerant load,
the approximation is quite accurate for all values of $\mue \geq 1$;
the normalized difference between the theoretical approximation and
the corresponding MC estimate is within $6\%$ for $\mue \ge 1$ with
$L=17$ and for $\mue \ge 5$ with $L=1,$ and the error is negligible
for higher values of $\mue$ (see Figure~\ref{Fig_PB_mue}).  On the
other hand, with light tolerant traffic ($\rho_\tau = 0.0862$) as in
Figure \ref{Fig_PB_mue_light}, the normalized difference is within
$6\%$ only for $\mue \ge 5$ for $L=17$ and only for $\mue \ge 20$ for
$L=1$. In general, we observe that the approximation error is smaller
when the system operates closer to its stability limit. But even for
the case with light traffic, the normalized difference is within
$10\%$ in most cases once $\mue \ge 10.$}

\revision{Next, we consider the Pareto frontier of performance
  vectors, and compare our SFJ approximation with the frontier
  obtained via MC simulations; see Figure~\ref{Fig_MC} .
  %We plot both
  %Monte-Carlo estimates as well as the corresponding theoretical
  %limits for different values of $L.$
  We observe that our analytical results corresponding to the SFJ
  limit provide a very accurate approximation of the Pareto frontier,
  even for $\mu_\i$ as small as 1, when $\mu_\t = 0.54$. We reiterate
  that the theory approximates the system performance well even when
  the system does not `flush' $\i$-customers at $\t$-transitions.}

%The different schemes in this case study are differing only in
%admission control and one can smoothly pass from one scheme to other
%scheme at $\t$-transitions without having to drop some
%$\i$-customers\footnote{Think of two sub-policies (with zero waiting
%space), one of which supports $K_1$ customers at maximum and the
%other supports $K_2$ at maximum. Then one needs to drop atleast the
%extra customers (if any) when the scheme changes.}.

\begin{figure}[h]

\begin{minipage}{7cm}
\vspace{-34mm}
\begin{center}
\includegraphics[width=6.8cm, height =5.7cm]{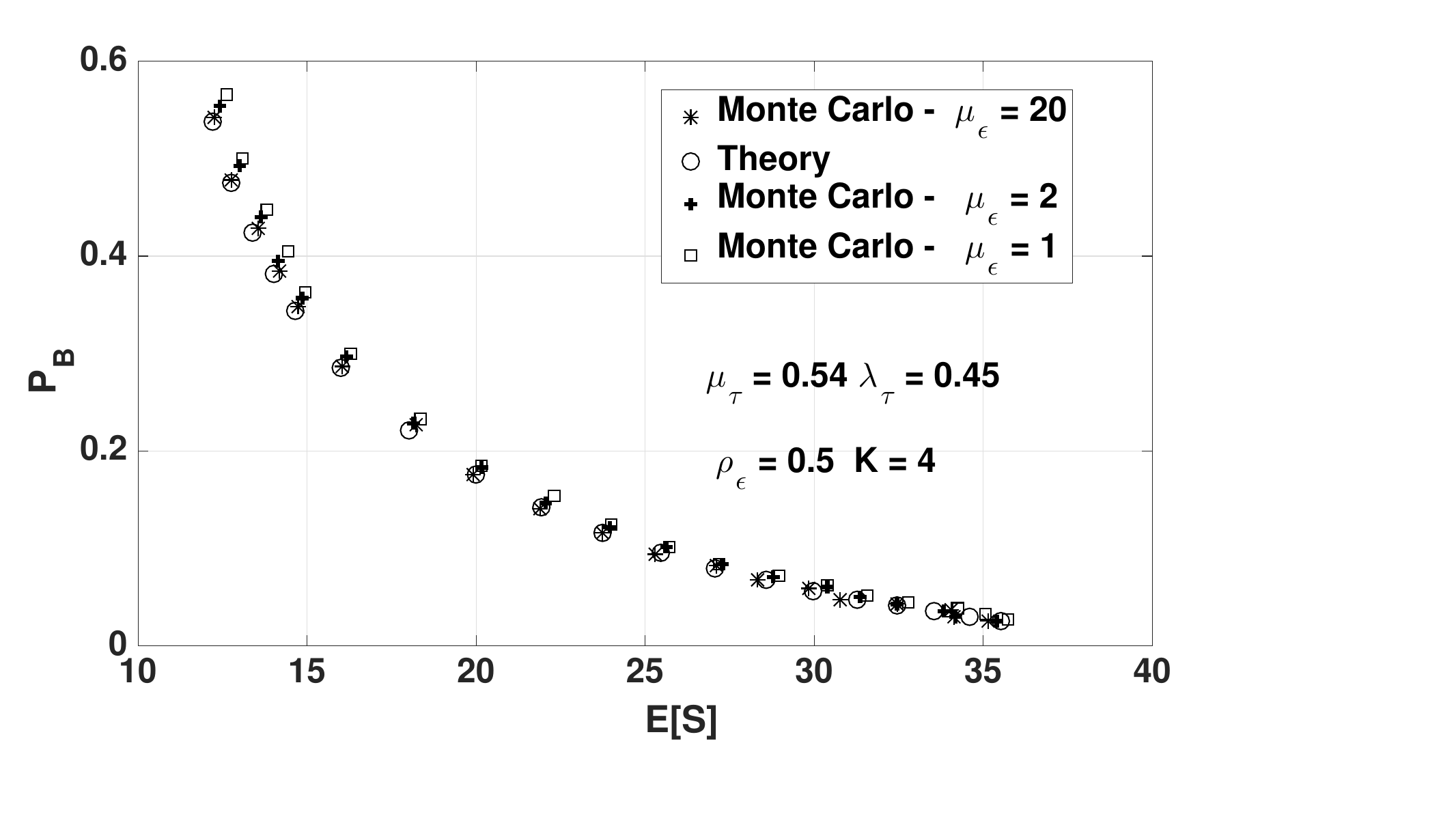}
\vspace{-44mm}
\caption{  Comparison of Theory with Monte Carlo (MC) estimates.  Good  approximation  even for  small   $\mu_\epsilon$.    \label{Fig_MC}}
\end{center}
\end{minipage} \hspace{5mm}
\begin{minipage}{7cm}
\begin{center}

\includegraphics[width=7.5cm, height = 5.6cm]{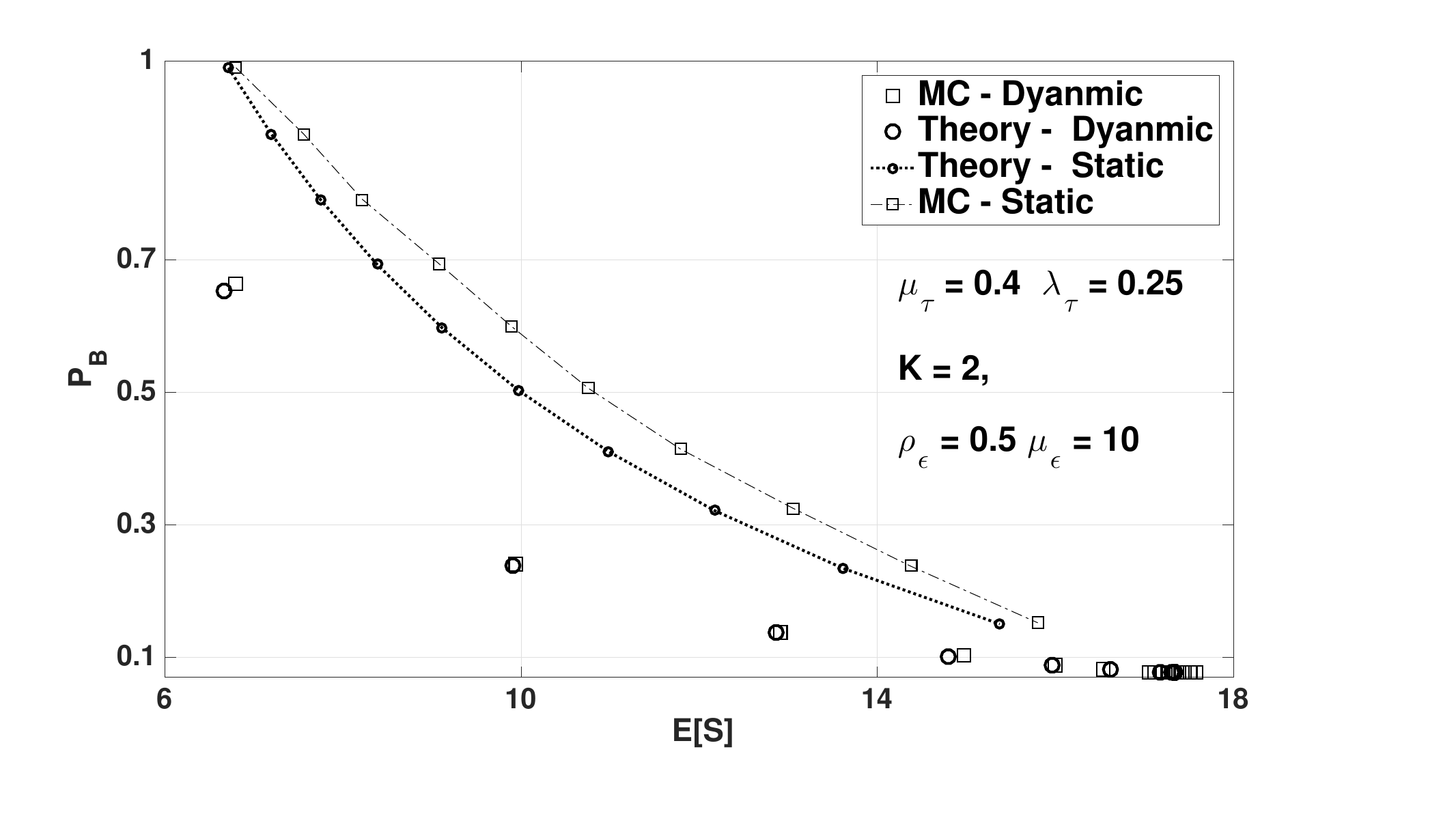}
\vspace{-46mm}
\caption{  Comparison of  static and dynamic policies. Better approximation for dynamic policies.  \label{Fig_static_dynamic}}

\vspace{-10mm}

\end{center}
\end{minipage}
\end{figure} 
 
 Another example is plotted in Figure \ref{Fig_static_dynamic}, where
 $\t$-static policies of \cite{ANOR} are considered along with dynamic
 policies.  We again observe a good approximation between the theory
 and MC estimates for dynamic policies. Interestingly, the
 approximation error is bigger in the static case. One possible
 explanation for this is the following. 
% Note that in the static
% setting, the approximation error between the system performance and
% the SFJ approximation is the least for extreme values of $\t$-static
% blocking probability. The superior approximation of the SFJ limit
% under the Pareto-optimal dynamic policies considered here could be
% because these policies toggle between the extreme $\t$-static
% blocking probabilities depending on the $\t$-state. \jk{Is this OK?}
 It is clear that the approximation error gets smaller as the
 $\i$-load factor reduces, under $\t$-static policies.  Under Pareto
 optimal family of schedulers, the $\i$-load equals 0 for all
 $\t$-states greater than $L$.  Thus we see the approximation is
 almost zero towards the right of the two figures (as $P_B$ gets
 smaller, $L$ gets smaller).  We also observe that the dynamic
 policies perform far superior than $\t$-static policies.

\section{Concluding remarks}
\label{sec:conclusion}

In this paper, we analyse a multi-class, single server queueing system
with an eager (lossy) class and a tolerant (queueing) class, under
dynamic scheduling. While the inter-dependence between the service
processes of the two classes makes an exact analysis of this system
difficult, we obtain tractable performance approximations under a
certain (partial) fluid scaling regime. A key feature of our
approximations, proved to be accurate under the  fluid limit,  is a pseudo-conservation law:  the  approximate performance
of both classes is expressed in terms of the standalone blocking
probabilities of the eager schedulers, which are themselves easy to
compute in several cases. 
Further,  the accuracy of our approximations in the pre-limit is validated  via Monte Carlo
simulations.

Finally, we focus on the achievable region
of the limiting performance vectors for our system.
%Subject to a
%conjecture on the applicability of our approximations to Stationary
%Markov policies,
Remarkably, we are able to obtain an explicit family of Pareto-optimal
policies (these resemble  threshold  policies).

There are natural extensions of our results to models where the eager
class exhibits limited patience; for example, models with balking
and/or reneging. For example, the system might include (limited)
waiting room for eager customers.  Alternatively, eager customers
might respond to the resources allocated based on their patience
levels: a) an $\i$-customer may not enter the system depending upon
the $\i$-number already in system according to some probabilistic
rule, as in {\it balking} models; or b) may leave the system after
waiting for a random `patience time', as in {\it reneging} models. At
their core, our proofs require the property that the occupancy process
of the eager class gets time-scaled (fast-forwarded) with increasing
$\mue$ under the SFJ scaling. The analyses apply to the above
mentioned generalizations so long as this scaling property holds. For
example, in the case of reneging, we would require that the patience
time distribution is scaled suitably with $\mue,$ (as is discussed in
the context of static scheduling in \cite{ANOR}).

This work also motivates extensions in other directions. One
interesting extension would be to the multi-server setting, where the
tolerant class is no longer work conserving. Another promising
direction is to consider static/dynamic pricing for such heterogeneous
service systems. Finally, specializing our models to particular
application scenarios, including supermarkets, cognitive radio, and
cloud computing environments, would be of independent interest.

\ignore{
 
  \noindent {\bf Balking and Reneging:}  As a part of a
    particular $\i$-sub-policy, the system may allocate a certain
    number of servers (recall that the system may be viewed as
    multi-server from the standpoint of the eager class) and might
    allocate a certain amount of waiting space for $\i$-class.  The
    $\i$-customers might respond to the resources allocated based on
    their patience levels: a) an $\i$-customer may not enter the
    system depending upon the $\i$-number already in system according
    to some probabilistic rule, as in {\it balking} models; or b) may
    leave the system after waiting for an exponentially distributed
    patience time of rate $\alpha^{\mue},$ as in {\it reneging}
    models.  In the case of reneging, we will require that the
    parameter $\alpha^{\mue}$ scales linearly with $\mue$, i.e.,
    $\alpha^\mue = \alpha \mue$ for some $\alpha \in (0, \infty)$ (as
  in \cite{ANOR}).  We believe that our results can easily be extended to include these cases;  one can model super market based systems more realistically using these extensions  and this would be another interesting future direction.  
 }

%\vspace{-4mm}
%{\small
\bibliographystyle{IEEEtran}
\bibliography{refs}
%}

%\vspace{-2mm}

%\input{appendix-wiopt}
%\input{appendix-SFJ}
\appendix

\begin{table}[]
	\begin{center}
		\begin{tabular}{|l|}
			\hline
			\\
			\centerline{\bf Basic Notations}
            \\	\hline \hline
			$\lame:$   Poisson arrival  rate of eager customer.
			\\ \hline
			$\lambda_\t :$ Poisson arrival rate of tolerant customer. 
			\\ \hline
			$B_\i :$  generic $\i$ job size. 
			\\ \hline 
			$B_\i^\mue \eqdist \frac{B_\i^1}{\mue} :$ $\i$ job-size at scale $\mue$.
			\\ \hline
			$\rho_{\i} :$  the load factor eager class.   
			\\ \hline 
			$A_\t :$ inter-arrival time of tolerant customers.
			\\ \hline
			$B_\t :$   Exponential  $\t$ job-size. 
		    \\ \hline
			$\nu_i :$ tolerant service rate when $i$-sup-policy used for eager. 
%			\\ \hline
%			\textcolor{red}{should we write $\dmin$  and $\dmax$ notations??}
			\\ \hline

			\\ 
			\centerline{\bf Embedded chain}  \\
			\hline \hline
			$p_i^\mue$ : probability that tolerant arrival is before departure,  in  $\mue$-system at $\t$ occupancy $i$.
			\\ \hline
			 $q_i^\mue = 1- p_i^\mue$ : probability that $\t$-departure is before arrival.
		    \\ \hline
	     	$p_i^\infty$ : probability that $\t$-arrival is before departure, under limit system at $\t$ occupancy $i$.
	     	\\ \hline
			  $q_i^\infty = 1-  p_i^\infty$ :   probability that $\t$-departure is before arrival. in limit system. 
			 \\ \hline  
			  	$\tilde{\pi}^\mue_i: $ steady state probability  that embedded $\t$-chain is in state $i$, in $\mue$-system. 
			 \\ \hline
			  		$\tilde{\pi}_i : $ steady state probability  that embedded $\t$-chain is in state $i$, in limit system.
			  \\  \hline
%			  		 	$\pi^\phi_i :$ steady state probability that $\t$-process is in state $i$ in the limit system, when policy $\phi$ is used. \\ \hline  		 
			  \hline 
			  \\
			 \centerline{ \bf Continuous time process  }
			  \\
			  \hline \hline 
			$\pi^\mue_i  :$ steady state   probability  that  $\t$-process is in state $i$, in $\mue$ system. \\ \hline
%			$\bbb_{l}^i \triangleq \gamma_l^i-  \rho \gamma_{l-1}^i {\indc}_{l>0}
%			\mbox{ and }
%			\varpi_k^i =   \left \{
%			\begin{array}{llll}
%			{\indc}_{i \geq k} \sum_{j=1}^{i-k+1}\gamma_{k-1}^{i-j+1} s^{(j)}
%			\\- {\indc}_{i \geq k-1} \rho   \sum_{j=1}^{i-k+2}\gamma_{k-2}^{i-j+1}
%			s^{(j)}      & \mbox{ if }   k \le  M \\ \\
%			\rho \sum_{j=1}^{i-M+1} \gamma_{M-1}^{i-j+1}s^{(j)}   & \mbox{ if }   k =
%			M+1 \\ \\
%			0   & \mbox{ if }  k  > M+1.
%			\end{array} \right .   $
%			\\ \hline
			$\pi_i : $ steady state   probability  that  $\t$-process is in state $i$, in limit system.
		    \\ \hline
			$\Omega_i^\mue (\cdot):$ unused service process by $\i$ class. 
			\\ \hline
			$\rho_{i}  =  \rho^\phi_i   := \frac{\lambda_{\t}}{\mut \nu_i} : $ tolerant load when $i$-sub -policy ($\phi$)  used for eager. 
			\\ \hline 
			$\Upsilon_i^\mue : $ time require to finish $B_\t$ amount of work using $\Omega_i^\mue (\cdot)$. 
			\\ \hline
			$P_{B_i} :$ standalone $\t$ blocking probability when $i$ sub-policy used for eager class. 
			\\ \hline
			$E^\mue[N] :$ stationary expected number of $\t$-customer at $\mue$ system.
			\\ \hline
		    $P_B^\mue :$ overall $\i$ blocking probability at $\mue$ system. 
		    \\ \hline
		    $P_B^\infty :$ overall $\i$ blocking probability in limit system.
			\\ \hline
			$\B^\mue_i :$ $\i$-busy cycle when sub-policy $i$ is used.
			\\ \hline
			$\S^\mue_i :$ total service available for $\t$-class during $\B^\mue_i$.
			\\ \hline
			$\B^\mue_{i,k} :$ $k$-th $\i$-busy cycle when sup-policy $i$ is used. 
			\\ \hline
			$\S^\mue_{i,k} :$ total service available for $\t$-class during $\B^\mue_{i,k}$.
			\\ \hline
		\end{tabular}
		\caption{Summary of Notations   \label{Table_notations}}
	\end{center}
\end{table}

\section{Sample path coupling   and  SFJ limit}
\label{Appendix_sample}
\label{sec:SFJ_scale_details}

In this appendix, we detail the sample path coupling that results from
the SFJ limit, which is central to our arguments.

We consider a family of queuing systems,
parametrized by $\mue \geq 1$ (the service rate of the $\i$-class).
Under SFJ limit, the eager service rate $\mue \to \infty$, and the
eager arrival rate $\lambda_\i \ra \infty,$ while maintaining the
eager load $\rho_\i = \lambda_\i / \mu_\i$ to a constant value.
% In the proofs given below, we derive the limits of different
% quantities/performance measures under this limit.
Specifically, we scale the job size distribution of the eager class
as, $B^{\mu_\i}_\i \eqdist B^1_\i/\mu_{\i}$, where $B^{\mu_\i}_\i$
denotes a generic eager job size at scale $\mu_\i$ and $\eqdist$ is
equality in distribution.

\revision{Consider now this family of queueing systems, immediately
  following an arrival/departure from the tolerant
  queue. Specifically, suppose that the arrival/departure resulted in
  $j$ jobs remaining in the tolerant queue. We couple the sample paths
  across different values of the scale parameter $\mue$ as
  follows. Let $A_{\i, k}^{\mu_\i}$ and $B_{\i, k}^{\mu_\i}$ are
  respectively the $k$th inter-arrival time and $k$th job size of the
  eager class under sub-policy~$j$ (for notational convenience, the
  dependence on $j$ is suppressed here). We relate these quantities
  sample path wise to the $\mu_\i = 1$ system as below:
\begin{equation}
  \label{eq:sample_path_job_ia_coupling}
  A_{\i, k}^{\mu_\i} = \frac { A_{\i, k}^{1} }{\mu_\i} \mbox{ and }
  B_{\i, k}^{\mu_\i} = \frac{ B_{\i, k}^{1} }{\mu_\i}.  
\end{equation}
Note that this coupling is consistent with the SFJ scaling:
$A_{\i, k}^{\mu_\i}$ is indeed exponentially distributed with mean
$1/\rho_{\i} \mue$ given that $A_{\i, k}^{1}$ is exponentially
distributed with mean $1/\rho_{\i}.$ Similarly, note that we have
$B_{\i, k}^{\mu_\i} \eqdist B_{\i}^{\mu_\i} \eqdist \frac{ B_{\i}^{1}
}{\mu_\i},$ as required.

An immediate consequence of \eqref{eq:sample_path_job_ia_coupling} is
that the occupancy process of the eager class at scale $\mue$ can be
viewed as a $\mue$-time-scaled (or fast-forwarded) version of
occupancy process at scale~1; see \cite{ANOR} for more details. In
particular, let $\B_{j,k}^\mue$ and $\S_{j,k}^\mue$ denote the length
of the $k$th $\i$-busy cycle (the interval between the start of two
successive $\i$-busy periods), and total service available for the
$\t$-class during the $k$th $\i$-busy cycle, respectively, when
tolerant queue occupancy equals $j.$ It now follows that
\begin{eqnarray}
  \B^\mue_{j,k} = \B_{j,k}^1 /\mue  \label{Eqn_Bmue}, \\
  \S^\mue_{j,k} = \S_{j,k}^1 / \mue  \label{Eqn_Smue}. 
\end{eqnarray}
}

\ignore{Now, holding the occupancy of the tolerant queue frozen
  at~$j,$ via sample path coupling, the occupancy process of the eager
  class at scale $\mue$ can be viewed as a $\mue$-time-scaled (or
  fast-forwarded) version of occupancy process at scale~1; see
  \cite{ANOR} for more details.  This type of sample path-wise
  coupling allows us to derive almost sure results. With this
  construction, it is easy to see for any sub-policy $j$ (when both
  processes are not interrupted by $\t$-transitions due to assumption
  {\bf A}.3) that
  $$
  \B^\mue_j  =  \B_j^1 /\mue  \mbox { and }   \S^\mue_j  =  \S_j^1 / \mue   \mbox{ almost surely. }
  $$ 
  In the following, for simplicity, we also denote $\B_j^1$ and
  $\S_j^1$ as $\B_j$ and $\S_j,$ respectively.}

\revision{
We conclude by describing another consequence of the sample path
coupling \eqref{eq:sample_path_job_ia_coupling} described above. Let $N_A^\mue ([a,b])$ represent the number
of $\i$-arrivals in time interval $[a,b]$ and let
$N_{B_j}^\mue( [a,b])$ represent the number of drops (or the
$\i$-customers that exit the system without service) in the interval
$[a,b],$ when sub-policy $j$ is used in a $\t$-static manner, at
scale~$\mu_\i$. More compactly, let $N_{B_j}^{\mue}(t)$, $N_A^\mue(t)$
represent $N_{B_j}^{\mue}([0,t])$ and $N_A^\mue([0, t]),$
respectively. Then, by the way of construction:
\begin{eqnarray}
\label{Eqn_Compare_mu_epsilon_schemes}
  N_A^\mue ({\mathcal B}_{j,1}^\mue  ) = N_A^1 ( {\mathcal B}_{j,1}^1 )   \   \mbox{and}   \
  N_{B_j}^\mue ({\mathcal B}_{j,1}^\mue  ) = N_{B_j}^1 ( {\mathcal B}_{j,1}^1 ),
\end{eqnarray}
and a similar equality holds for subsequent $\i$-busy cycles as well.
These kind of consequences play a central role in deriving the results
of the next two appendices.}

\section{Tolerant Performance}  
\label{Appendix_tolerant}

In this appendix, we provide proofs of the results stated in
Section~\ref{sec:tolerant}, corresponding to the performance
evaluation of the tolerant class. 

\ignore{
We begin with some remarks on a
sample path coupling that results from the SFJ limit, which is central
to our arguments.

Under the SFJ scaling, we consider a family of queuing systems,
parameterized by $\mue \geq 1$ (the service rate of the $\i$-class).
Under SFJ limit, the eager service rate $\mue \to \infty$, and the
eager arrival rate $\lambda_\i \ra \infty,$ while maintaining the
eager load $\rho_\i = \lambda_\i / \mu_\i$ to a constant value.
% In the proofs given below, we derive the limits of different
% quantities/performance measures under this limit.
Specifically, we scale the job size distribution of the eager class
as, $B^{\mu_\i}_\i \eqdist B^1_\i/\mu_{\i}$, where $B^{\mu_\i}_\i$
denotes a generic eager job size at scale $\mu_\i$ and $\eqdist$ is
equality in distribution. Now, holding the occupancy of the tolerant
queue frozen at~$j,$ via sample path coupling, the occupancy process
of the eager class at scale $\mue$ can be viewed as a
$\mue$-time-scaled (or fast-forwarded) version of occupancy process at
scale~1; see \cite{ANOR} for more details. Let $\B_j^\mue$ and
$\S_j^\mue$ denote a generic $\i$-busy cycle (the interval between the
start of two successive busy periods), and total service available for
the $\t$-class during an $\i$-busy cycle, respectively, when the
tolerant occupancy equals $j.$ It thus follows that
\begin{eqnarray}
  \B^\mue_j \eqdist \B_j^1 /\mue  \label{Eqn_Bmue}, \\
  \S^\mue_j \eqdist \S_j^1 / \mue  \label{Eqn_Smue}. 
\end{eqnarray}
In the following, for simplicity, we also denote $\B_j^1$ and $\S_j^1$
as $\B_j$ and $\S_j,$ respectively. }

This appendix is organized as follows. In
Section~\ref{sec:busycycles}, we prove a technical result that is used
to prove Lemma~\ref{lemma:uniform}. The proofs of
Lemmas~\ref{lemma:uniform} and~\ref{lemma:TauStability} are provided
in Sections~\ref{sec:lemma:uniform} and~\ref{sec:lemma:TauStability},
respectively. The proofs of
Theorems~\ref{Thm_SD_Convergence},~\ref{Thm_Lim_SD_Conv}
and~\ref{Thm_tau} are provided in
Sections~\ref{sec:Thm_SD_Convergence},~\ref{sec:Thm_Lim_SD_Conv},
and~\ref{sec:Thm_tau}, respectively.

\subsection{Bounding the second moment of $\i$-busy cycles}
\label{sec:busycycles}
To prove Lemma~\ref{lemma:uniform}, we need the following technical
lemma, which states that second moment of the the eager busy cycles
$\B_j,$ corresponding to $\t$-occupancy $j$ and scale $\mue = 1,$ are
uniformly bounded from above over sub-policies $j.$

\begin{lemma}
  \label{lemma:bp_upper_bound}
  Under Assumptions {\bf A.}1-3,  Assumption {\bf A}.4 is true, i.e., there exists a constant $\mathcal{M}$
  such that $\Exp{\B_j^2} \leq \mathcal{M}$ for any~$\i$
  sub-policy~$j.$
\end{lemma}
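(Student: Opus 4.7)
The plan is to produce a single dominating random variable with finite second moment that does not depend on the sub-policy $j$. Concretely, I would couple the scale-$1$ eager queue, operating under any sub-policy, with an auxiliary $M/G/\infty$ queue that dominates it sample-path-wise, and then invoke the standard finite-moment property of an $M/G/\infty$ busy period.

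\textbf{Coupling and domination.} On a common probability space, fix a Poisson arrival process of rate $\lambda_\i$ and an i.i.d.\ sequence $\{B_{\i,k}\}$ of eager job sizes. Construct an auxiliary $M/G/\infty$ queue in which every arrival is admitted and each job is served at the constant rate $c_{\min}$, so the $k$-th arrival's sojourn in the auxiliary queue equals $B_{\i,k}/c_{\min}$. Let $\tilde \B$ be a generic busy cycle of this auxiliary queue; by construction its distribution does not depend on $j$. Now Assumption {\bf A}.3 forces every admitted eager job in the original system to receive rate $\ge c_{\min}$, hence to depart by $A_k + B_{\i,k}/c_{\min}$, while any blocking by the sub-policy can only remove jobs relative to the auxiliary queue. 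Therefore, at every epoch $t$, the set of eager jobs present in the original system under sub-policy $j$ is a subset of those present in the auxiliary $M/G/\infty$. In particular, every emptiness epoch of the auxiliary queue is also an emptiness epoch of the original, so each $\i$-busy period of the original is contained inside a single busy period of the auxiliary queue. Since both systems share the same $\mathrm{Exp}(\lambda_\i)$ idle law between cycles, this yields $\B_j \leqas \tilde \B$ uniformly in $j$.

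\textbf{Second moment and main obstacle.} The cycle $\tilde \B$ decomposes as the sum of a busy period and an independent $\mathrm{Exp}(\lambda_\i)$ idle period; the latter trivially has finite second moment, so only the busy period remains. I would invoke (or establish, via the standard recursive/Laplace-functional construction of the $M/G/\infty$ busy period) the fact that this busy period has finite $k$-th moment whenever the generic service time does. Combined with $\Exp{(B_\i/c_{\min})^2} = c_{\min}^{-2}\Exp{B_\i^2} < \infty$, this yields $\Exp{\tilde \B^2} < \infty$, and setting $\mathcal{M} := \Exp{\tilde \B^2}$ finishes the proof. The genuinely technical step is this last moment bound, since the $M/G/\infty$ busy-period distribution is not elementary. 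A backup route, if the off-the-shelf moment property is not directly available, would be to control the number $N$ of arrivals in an auxiliary busy period by a Poisson-thinning/geometric argument on gap lengths (gaps exceeding $B_{\i,k}/c_{\min}$ terminate the cycle), bound $\tilde \B^{bp}$ by the last arrival time plus $\max_{k \le N} B_{\i,k}/c_{\min}$, and reduce everything to $\Exp{B_\i^2} < \infty$ via Wald-type identities.
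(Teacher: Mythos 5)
Your overall route is the same as the paper's: bound the eager busy period under any sub-policy by the busy period of an $M/G/\infty$ queue with arrival rate $\lambda_\i$ and service times $B_\i/c_{\min}$ (which is policy-independent), append the $\mathrm{Exp}(\lambda_\i)$ idle period, and reduce everything to a second-moment bound for the $M/G/\infty$ busy period. Two points, though. First, your domination step as written has a hole: if you run a single auxiliary $M/G/\infty$ system alongside the original forever, one auxiliary busy period can contain several original busy cycles, and the auxiliary busy period containing a given original one is a \emph{length-biased} (not generic) copy; so ``contained in a single auxiliary busy period'' does not yield $\B_j \leqas \tilde\B$ with $\tilde\B$ generic, and a length-biased bound would cost you a third moment of the auxiliary cycle. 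The standard fix (implicit in the paper) is to restart the auxiliary system empty at the start of each original busy period; by memorylessness of the Poisson arrivals and the i.i.d.\ job sizes, the dominating auxiliary busy period then has the generic distribution while still containing the original one pathwise. Second, you defer the actual crux --- finiteness of $\Exp{O^2}$ for the $M/G/\infty$ busy period given $\Exp{B_\i^2}<\infty$ --- to an off-the-shelf fact or a sketched backup. The paper proves this explicitly (Lemma~\ref{lemma:mginftyBPbound}) via Makowski's representation of the forward excess of the busy period as a geometric compound sum of i.i.d.\ variables, which yields the explicit bound $\Exp{O^2}\le e^\theta \Exp{X^2}\left(e^\theta-1+\frac{e^2-1}{2}\right)$ with $\theta=\lambda\Exp{X}$; your backup route (geometric control of the number of arrivals per busy period plus Wald-type identities) is plausible but would need to be carried out in full for the proof to be complete.
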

\begin{proof}
  Note that busy cycle $\B_j\eqdist O_j + I_j,$ where $O_j$ denotes the
  \emph{busy period}, and $I_j$ the idle period between two successive
  busy periods. Moreover, $I_j$ is exponentially distributed with mean
  $1/\lambda_\i.$ Thus, to prove the statement of the lemma, it
  suffices to prove that $\Exp{O_j^2}$ is bounded from above uniformly
  over sub-policies~$j.$ This is proved as follows. The busy period
  $O_j$ under any sub-policy~$j$ can be upper bounded by the busy
  period of an $M/G/\infty$ system with arrival rate~$\lambda_\i$ and
  job size $X :\eqdist B_\i/c_{\min}.$ Indeed, note that since any
  admitted eager job of size $B_\i$ must be served at a minimum rate
  of $c_{\min},$ $X = B_\i/c_{\min}$ is an upper bound on the
  residence time of the job in the system. The statement of the lemma
  now follows from Lemma~\ref{lemma:mginftyBPbound} below, which
  provides an upper bound on the second moment of an $M/G/\infty$ busy
  period.
\end{proof}
 
\begin{lemma}
  \label{lemma:mginftyBPbound}
  Consider an $M/G/\infty$ queue with arrival rate $\lambda,$ with a
  generic job size denoted by $X$. Let $O$ denote a generic busy
  period of this system. Then
  $$\Exp{O^2} \leq e^{\theta} \Exp{X^2} \left(e^{\theta} - 1 + \frac{e^2-1}{2}
  \right),$$ where $\theta := \lambda \Exp{X}.$
\end{lemma}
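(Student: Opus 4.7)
The plan is to condition on the service time $X$ of the customer initiating the busy period, exploiting the branching-like structure of the $M/G/\infty$ system. Given $X = x$, arrivals during $[0,x]$ form a Poisson process of rate $\lambda$, so their count satisfies $N \mid X \sim \text{Poisson}(\lambda X)$, and each arrival at time $T_i \in [0,x]$ initiates a ``sub-busy-period'' $O_i'$ whose marginal distribution coincides with that of $O$. This yields the pathwise identity $O = X \vee \max_{i \le N}(T_i + O_i')$, and a union bound on $\{T_i + O_i' > t\}$ integrated over the uniform law of $T_i$ on $[0,x]$ gives the Volterra-type tail inequality
$$
P(O > t) \le P(X > t) + \lambda \int_0^t P(O > t-s) \, P(X > s) \, ds.
$$

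Multiplying this inequality by $2t$ and integrating over $t \ge 0$, using $\int_0^\infty 2t\, P(X > t)\, dt = E[X^2]$ and evaluating the convolution integral by Fubini, produces a first-order recursion of the form $E[O^2] \le e^\theta E[X^2] + \theta\, E[O^2]$, upon using the classical identity $E[O] = (e^\theta - 1)/\lambda$. For $\theta < 1$, this immediately gives $E[O^2] \le e^\theta E[X^2]/(1-\theta)$, which already matches the stated bound in scaling but has a singularity at $\theta = 1$. To extend the bound to all $\theta \ge 0$, I would iterate the Volterra inequality a finite number of times, expressing $P(O > t)$ as a partial sum of $k$-fold convolutions $\bar F^{*(k+1)}$ of $\bar F = P(X > \cdot)$ plus a residual; the moment integrals $\int_0^\infty 2t\, \bar F^{*(k+1)}(t)\, dt$ evaluate in closed form to $(k+1)(E[X])^k E[X^2]$, contributing $E[X^2]\sum_{k=0}^K (k+1)\theta^k$ to the bound.

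The main obstacle is controlling the residual uniformly in $\theta$ so that no $(1-\theta)^{-1}$-type singularity reappears. The precise prefactor $e^\theta - 1 + (e^2-1)/2$ in the stated bound strongly suggests that the right approach is to choose the truncation $K$ in terms of $\theta$ and bound the residual using the trivial estimate $P(O > t) \le 1$ together with an explicit numerical inequality absorbing the tail into the constant $(e^2-1)/2$. An alternative route I would explore in parallel is the sample-path comparison with the $M/G/1$ queue: since the $M/G/\infty$ busy period is at most the time needed to serve its customers serially, $O \le \sum_{i=1}^{N_O} X_i$ where $N_O$ is the number of customers in the busy period. Combined with Cauchy--Schwarz ($O^2 \le N_O \sum_i X_i^2$) and the compound-Poisson identity $N_O \mid O \sim 1 + \text{Poisson}(\lambda O)$, this expresses $E[O^2]$ in terms of $E[X^2]$ and moments of $O$, potentially yielding the bound directly without iteration. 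I expect the hardest step to be the clean derivation of the explicit constants $(e^\theta - 1)$ and $(e^2-1)/2$; the conceptual content is captured by the branching decomposition, with the rest reducing to careful (but nontrivial) bookkeeping.
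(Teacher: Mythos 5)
Your proposal is a genuinely different route from the paper's, but as it stands it has a real gap, and the gap is precisely at the point you flag as ``the main obstacle.'' The Volterra/union-bound recursion $\prob{O>t}\le \bar F(t)+\lambda\int_0^t \prob{O>t-s}\bar F(s)\,ds$ is the expected-progeny computation for a branching process with mean offspring $\theta=\lambda\Exp{X}$; iterating it to all orders gives $\Exp{O^2}\le \Exp{X^2}\sum_{k\ge 0}(k+1)\theta^k=\Exp{X^2}/(1-\theta)^2$, and for $\theta\ge 1$ this series diverges \emph{term by term}, so no choice of truncation $K$ can rescue it --- the residual after $K$ steps is itself controlled only by tails of a divergent series, and the trivial bound $\prob{O>t}\le 1$ integrated against $2t$ is not integrable. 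Since the lemma is invoked in the paper with $X=B_\i/c_{\min}$ and $\theta=\rho_\i/c_{\min}$, which can certainly exceed $1$, the regime $\theta\ge 1$ is exactly the one that matters. Your second route has a similar problem plus an incorrect intermediate claim: conditioning on the busy-period length $O$ does \emph{not} make the customer count $1+\mathrm{Poisson}(\lambda O)$ (the event that the busy period ends at $O$ biases both the arrival count and the service times, which are size-biased toward large values), and the Cauchy--Schwarz step then requires a joint moment $\Exp{N_O\sum_i X_i^2}$ that you have no handle on.

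The paper sidesteps all of this by quoting an \emph{exact} distributional identity of Makowski for the forward excess (stationary residual) of the $M/G/\infty$ busy period: $O^*\eqdist\sum_{i=1}^M U_i$ with $M$ geometric of parameter $K=1-e^{-\theta}$ and $U$ having tail $\prob{U>x}=K^{-1}e^{-\theta}\bigl(e^{\theta\prob{X^*>x}}-1\bigr)$, where $X^*$ is the forward excess of $X$. Combined with $\Exp{O^*}=\Exp{O^2}/(2\Exp{O})$ and the known formula $\Exp{O}=K/(\lambda(1-K))$, this reduces the lemma to bounding $\Exp{U}$, which is done via Markov's inequality on $\prob{X^*>x}$ and the convexity bound $e^x-1\le\tfrac{e^2-1}{2}x$ on $[0,2]$ --- this is where the constants $e^\theta-1$ and $(e^2-1)/2$ come from. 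Because the representation is exact rather than a union bound, it is valid for all $\theta$, which is the essential feature your approach lacks. If you want to complete a proof along your lines, you would need some input beyond the first-moment recursion --- e.g.\ the exact generating-function/integral-equation characterization of the $M/G/\infty$ busy period --- at which point you are effectively rederiving the ingredient the paper imports from Makowski.
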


\begin{proof}
  Given a non-negative random variable $Y$ having finite expectation,
  let $Y^*$ denote its \emph{forward excess}, i.e., $Y^*$ is a
  non-negative random variable satisfying
  $$\prob{Y^* > x} = \frac{1}{\Exp{Y}} \int_x^{\infty} \prob{Y > t} dt
  \qquad (x \geq 0).$$ Note that
  $\Exp{Y^*} = \frac{\Exp{Y^2}}{2 \Exp{Y}}.$

  Now, returning to the $M/G/\infty$ queue, let $K:= 1-e^{-\theta}.$
Define a  non-negative random variable $U$, which is distributed as below:
$$\prob{U > x} = \frac{1}{K} e^{-\theta}(e^{\theta \prob{X^* > x}} - 1)
\qquad (x \geq 0).$$ Finally, let $M$ denote a geometric random
variable such that $$\prob{M = n} = (1-K) K^{n-1} \qquad (n \in \N.)$$
The following representation for the forward excess of the
$M/G/\infty$ busy period was established by Makowski \cite{Makowski11}:
\begin{equation}
  \label{eq:Makowski_result}
  O^* \eqdist \sum_{i = 1}^M U_i,
\end{equation}
where $\{U_i\}_{i \geq 1}$ is an i.i.d. sequence of random variables
distributed as $U$ independent of $M.$

From \eqref{eq:Makowski_result}, we have
$$\Exp{O^*} = \frac{\Exp{O^2}}{2 \Exp{O}} = \Exp{M} \Exp{U},$$ which
yields
\begin{align}
  \Exp{O^2} &= 2 \Exp{O} \Exp{M} \Exp{U} \nonumber \\
             &= 2 \frac{K}{\lambda(1-K)} \frac{1}{(1-K)} \Exp{U}.
               \label{eq:temp1}
\end{align}
Here, we have used $\Exp{O} = \frac{K}{\lambda(1-K)},$ which is also
proved in \cite{Makowski11}.

We upper bound $\Exp{U}$ as follows. Using the Markov inequality,
$\prob{X^* > x} \leq \frac{\Exp{X^2}}{x \Exp{X}}.$ It then follows
that $\theta \prob{X^* > x} \leq 2$ for
$x \geq \beta := \frac{\lambda \Exp{X^2}}{2}.$
% Moreover, there exists a constant $\gamma > 1$ such that
% $e^y - 1 \leq \gamma y$ for %$y \in [0,2].$
Now,
\begin{align}
  \Exp{U} &= \frac{e^{-\theta}}{K} \int_{0}^{\infty} (e^{\theta \prob{X^* > x}} - 1) dx \nonumber \\ 
          &\leq \frac{e^{-\theta}}{K} \left[\beta (e^{\theta} - 1) + \int_{\beta}^{\infty} (e^{\theta \prob{X^* > x}} - 1) dx\right] \nonumber \\
          &\stackrel{(a)}\leq \frac{e^{-\theta}}{K} \left[\beta (e^{\theta} - 1) + \int_{\beta}^{\infty} \gamma \theta \prob{X^* > x} dx\right] \nonumber \\
          &\leq \frac{e^{-\theta}}{K} \left[\beta (e^{\theta} - 1) + \gamma \theta \frac{\Exp{X^2}}{2\Exp{X}} \right]\nonumber \\
          &= \frac{e^{-\theta} \lambda \Exp{X^2}}{2K} \left[e^{\theta} - 1 + \gamma \right].
            \label{eq:temp2}
\end{align}
The bounding in $(a)$ uses the inequality $e^{x} - 1 \leq \gamma x$
for $x \in [0,2],$ with $\gamma := \frac{e^2-1}{2}$ (by convexity of $e^x$). Finally,
combining \eqref{eq:temp1} and \eqref{eq:temp2} gives us the statement
of the lemma.
\end{proof}

\subsection{\bf Proof of Lemma~\ref{lemma:uniform}}
\label{sec:lemma:uniform}

Recall that, when $\t$-occupancy equals $j$ (for any given $j$), the
backward transition probability of the tolerant birth-death chain (at
scale $\mue$) is given by $q_j^\mue$.  We aim to show the convergence
of these transition probabilities, uniformly over $j$, under the SFJ
limit.
The idea is to develop (for each $j$) a lower and an upper bound on
$q_j^\mue$ which uniformly converge to the same constant given by
Equation~\eqref{eq:q_conv}.
%  Further these convergences are  uniform  over all  $j$, under SFJ limit.
%Firstly, we will develop a lower bound on $q^{\mue}_j$ and show that it
%converges to the limit in \eqref{eq:q_conv} uniformly over $j \in \g.$
%We then repeat with an upper bound, completing the proof.

\revision{Consider the evolution of the tolerant queue, starting from
  an instant when the $\t$ occupancy changed, via an arrival or a
  departure, to~$j.$ Let $A_\t$ denote the time until the next $\t$
  arrival, let $\B_{j,k}^{\mue}$ denote the $k$th $\i$-busy cycle and
  let $\S_{j,k}^{\mue}$ denote the total amount of service available
  for the tolerant class during the $k$th $\i$-busy cycle, at scale
  $\mue$% and $\t$-occupancy $j$.
}

\noindent {\bf Lower bound:} 
Consider any eager sub-policy $j$. Define
$N := \min\{n \geq 1\ |\ \sum_{k = 1}^n \B_{j,k}^{\mue} > A_\t\}.$
\revision{Note that $N$ is the index of the $\i$-busy cycle in which
  the next $\t$ arrival occurs. Since $A_\t$ is exponentially
  distributed, $N$ is a geometric random variable with parameter
  $\prob{\B_j^{\mue} > A_\t}.$} Further, define
$\bar{q}_j := \prob{\sum_{k = 1}^{N-1} \S_{j,k}^{\mue} > B_{\t}}.$
\revision{Note that $\sum_{k = 1}^{N-1} \S_{j,k}^{\mue},$ which is the
  service received by the~$\t$~queue in the first $N-1$ $\i$-busy
  cycles, is a lower bound on the total service received by the~$\t$
  queue until the next~$\t$-arrival. Thus, the event
  $\sum_{k = 1}^{N-1} \S_{j,k}^{\mue} > B_{\t}$ implies that a $\t$-
  departure would precede the next $\t$-arrival; this yields the bound
  $q^{\mue}_j \geq \bar{q}_j.$} Moreover, the lower bound $\bar{q}_j$,
on $q_j^\mue$ can be represented as follows:\footnote{Let
  ${\bar q}_j :=E({\cal E})$, with indicator of the event
  ${\cal E} := 1_{ \sum_{k = 1}^{N-1} \S_{j,k}^{\mue} > B_{\t} }
  $. Condition on the events of first $\i$-busy cycle:
		\begin{itemize}
		\item 	 if $\S_{j,1}^{\mue} > B_\t$ and $\ \B_{j,1}^{\mue} < A_\t$ then ${\cal E} = 1$; otherwise 
		\item   if $\ \B_{j,1}^{\mue} > A_\t$  then ${\cal E} = 0$;  or  
		\item if $\ \B_{j,1}^{\mue} < A_\t$ and
                  $\S_{j,1}^{\mue} < B_\t$, \revision{then the
                    conditional probability of ${\cal E}$ equals its
                    unconditional probability by the memorylessness of
                    $A_\t$ and $B_\t$.}
\end{itemize}			
		And thus, 
		$$
		\bar{q}_j = \prob{\S^{\mue}_{j,1} >
			B_\t,\ \B^{\mue}_{j,1} < A_\t} + \bar{q}_j  P( \B_{j,1}^{\mue}  < A_\t ,  \  \S_{j,1}^{\mue} < B_\t ),
		$$
		which implies \eqref{eq:p_bar} given that
		$$1 - P( \B_{j,1}^{\mue}  < A_\t ,  \  \S_{j,1}^{\mue} < B_\t )  = \prob{\S^{\mue}_{j,1} >
		B_\t,\ \B^{\mue}_{j,1} < A_\t} + \prob{\B^{\mue}_{j,1} > A_\t}.$$}
\begin{equation}
\label{eq:p_bar}
\bar{q}_j =
\frac{\prob{\S^{\mue}_j > B_\t,\ \B^{\mue}_j < A_\t}}{\prob{\S^{\mue}_j >
		B_\t,\ \B^{\mue}_j < A_\t} + \prob{\B^{\mue}_j > A_\t}}.    
\end{equation}
%We will show that, as $\mue \ra \infty$, $\bar{p}_j \ra \frac{\nu_j \mu_\t}{\lambda_\t + \nu_j \mu_\t}$ 
% uniformly over all the sub-policies $j$.
By Lemma \ref{Lemma_Intermediate_result} below, we have uniform
convergence (here, $\varpi(\mue) \to 0$ as $\mue \to \infty$
\textit{u.f.} in $j$):
%  
% \footnote{A sequence $a^\mue \sim f(\mue) $ implies $a^\mue =  f(\mue) + \varpi(1/\mue)$ \textcolor{red}{with $\varpi(1/ \mue) \to \  0 \  as \  \mue \to \infty$}. }
% %
%   
\begin{eqnarray*}
 \left|\prob{\B^{\mue}_j > A_\t} - \frac{\lambda_\t
  \Exp{\B_j}}{\mue} \right| \leq \varpi(\mue), \ \ \mbox{and} \ \  \left|\prob{\S^{\mue}_j > B_\t,\ \B^{\mue}_j < A_\t} -
  \frac{\mu_\t \Exp{\S_j}}{\mue} \right| \leq \varpi(\mue),
\end{eqnarray*} 
where $\B_j$ is a typical $\i$-busy cycle and $\S_j$ is the (typical)
unused service left behind by the eager class in one $\i$-busy cycle,
both with $\mue = 1$ and under sub-policy $j$.
% Note that, in view of the assumption $E[\B_j^2] < \zeta$ for all $j$, the  bounds of the Lemma \ref{Lemma_Intermediate_result}  are independent of the $\t$-occupancy $j$.
Using the above and because\footnote{When renewal reward theorem is
  applied with renewal cycles as $\i$-busy cycles and with rewards
  being the amount of service available to the $\t$-class in each busy
  cycle. }  $\nu_j = \frac{\Exp{\S_j}}{\Exp{\B_j}}$, we get from
equation \eqref{eq:p_bar} that the lower bound on $q_j^\mue$,
\begin{eqnarray*}
\bar{q}_j \stackrel{\mue \to \infty}{\longrightarrow} \frac{\mu_\t \Exp{\S_j}}{\mu_\t \Exp{\S_j} + \lambda_\t
	\Exp{\B_j} } = \frac{\mut \nu_j}{\mut \nu_j + \lambda_\t} \ \ u.f. \mbox{ over} \ \ j.
\end{eqnarray*}
%and hence its moments are constants and thus the  bounds of the Lemma are independent of $j$, the state of $\t$-customer .

{\bf Upper bound:} Next, we will prove the \textit{u.f.} convergence
of an upper bound on the transition probabilities $q^{\mue}_j$, to the
same limit. Since $\B^\mue_j \ge \S^\mue_j$, one can upper bound
$q_j^\mue$ as follows:
% $$p^{\mue}_j \leq \bar{p} + \prob{\B^{\mue} > X_\t\ |\ \B^{\mue}
%	> A_\t}.$$
\begin{eqnarray*}
 q^{\mue}_j \leq \bar{q}_j + \prob{\S^{\mue}_j > B_\t\ |\ \B^{\mue}_j
	> A_\t} \leq \bar{q}_j + \prob{\B^{\mue}_j > B_\t\ |\ \B^{\mue}_j
	> A_\t} .
\end{eqnarray*}
%{\color{red}
% $$\bar{p} \leq  q^{\mue}_j \leq \bar{p} + \prob{\S^{\mue} > \B_\t\ |\ \B^{\mue}
%	> A_\t} \leq \bar{p} + \prob{\B^{\mue} > \B_\t\ |\ \B^{\mue}
%	> A_\t} .
%$$}		
%{\color{red} $X_\t $ to be replaced with $B_\t$??? Also, it should be $q_j^\mue$!!!}
%It suffices now to show the following.
In view of Lemma \ref{Lemma_Intermediate_result_1} below, we have  as $\mue \to \infty$,
\begin{eqnarray*}
\prob{\B^{\mue}_j > B_\t\ |\ \B^{\mue}_j> A_\t} \ra 0 \ \ u.f.  \mbox{ over } \  j.  
\end{eqnarray*}
Therefore the upper bound on $q_j^\mue$, 
\begin{eqnarray*}
  \bar{q}_j + \prob{\B^{\mue}_j > B_\t\ |\ \B^{\mue}_j
	> A_\t} \stackrel{\mue \to \infty}{\longrightarrow} \frac{\mut \nu_j}{\mut \nu_j + \lambda_\t} \ \ u.f. \mbox{ over} \ \ j.  
\end{eqnarray*} 
This proves that the lower and upper bound converge to the same limit
(for any given $j$), uniformly across all $j$. This proves the
statement of the lemma. \qed

\begin{lemma}
	\label{Lemma_Intermediate_result}
We have:  \\  
i) $ |\mue \prob{\B^{\mue}_j > A_\t} - \lambda_\t
\Exp{\B_j}| \leq \frac{\lambda_{\t}^2\Exp{\B^2_j}}{2\mue},$ and \\
ii)   % $\prob{\S^{\mue} > B_\t,\ \B^{\mue} < A_\t} \sim
%	\frac{\mu_\t \Exp{\S}}{\mue},$ and 
$|\mue \prob{\S^{\mue}_j > B_\t,\ \B^{\mue}_j < A_\t} - \mu_\t
\Exp{\S_j}| \leq \frac{(\lambda_\t+
  \mu_\t)^2\Exp{\B^2_j}}{\mue}.$ \\
That is, there exists a function $\varpi (\mue)$ (which is independent
of $j,$ and  is finite  by  Lemma~\ref{lemma:bp_upper_bound}), such that
$\varpi (\mue) \to 0$ as $\mue \to \infty$ and for each $j$,
\begin{eqnarray*}
  \left | \prob{\B_j^{\mue} > A_\t} - \frac{\lambda_\t \Exp{\B_j }}{\mue}  \right | 
  &\le  &  \varpi (\mue),  
          \mbox{  and  }  \\
  \left | 	  \prob{\S_j^{\mue} > B_\t,\
  \B^{\mue}_j < A_\t}  - \frac{  \mu_\t \Exp{\S_j} } {\mue} \right | 
  & \le&   \varpi (\mue) \mbox{ with }  \\ 
  && \hspace{-40mm}
     \varpi (\mue) \  :=  \  \max \left \{    \frac{\lambda_{\t}^2}{2\mue^2} ,    \  \       \frac{(\lambda_\t+
     \mu_\t)^2}{\mue^2}     \right \}  \sup_i  \Exp{\B_i^2}  =  \frac{(\lambda_\t+
     \mu_\t)^2}{\mue^2}      \sup_i  \Exp{\B_i^2}  .
\end{eqnarray*}
\end{lemma}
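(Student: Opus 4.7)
The plan is to reduce both probabilities to expectations of elementary functions of the unscaled eager-cycle variables $\B_j$ and $\S_j$, by conditioning on the eager sample path and exploiting that the tolerant inter-arrival time $A_\t$ is $\mathrm{Exp}(\lambda_\t)$, the tolerant job size $B_\t$ is $\mathrm{Exp}(\mu_\t)$, and both are independent of the eager process. Using the sample-path identities $\B^\mue_j \eqdist \B_j/\mue$ and $\S^\mue_j \eqdist \S_j/\mue$ established in Appendix~\ref{sec:SFJ_scale_details}, this conditioning yields
\begin{align*}
\prob{\B^\mue_j > A_\t} &= \Exp{1 - e^{-\lambda_\t \B_j/\mue}},\\
\prob{\S^\mue_j > B_\t,\ \B^\mue_j < A_\t} &= \Exp{(1 - e^{-\mu_\t \S_j/\mue})\, e^{-\lambda_\t \B_j/\mue}},
\end{align*}
after which both bounds become quantitative first-order Taylor estimates in $1/\mue$.

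For part (i) I would apply the elementary inequality $|1-e^{-x}-x| \leq x^2/2$ (valid for $x \geq 0$) to the integrand with $x = \lambda_\t \B_j/\mue$, which immediately gives an error of at most $\lambda_\t^2 \Exp{\B_j^2}/(2\mue^2)$; multiplying through by $\mue$ yields the stated inequality. For part (ii), setting $u := \mu_\t \S_j/\mue$ and $v := \lambda_\t \B_j/\mue$, I would use the decomposition $(1-e^{-u}) e^{-v} - u = \bigl((1-e^{-u}) - u\bigr) e^{-v} - u(1 - e^{-v})$ and bound the first summand by $u^2/2$ and the second by $u v$. Invoking the pathwise inequality $\S_j \leq \B_j$ (the unused capacity in an $\i$-busy cycle cannot exceed the cycle's duration under unit server speed) collapses both terms into a multiple of $\B_j^2/\mue^2$ with total constant at most $(\lambda_\t + \mu_\t)^2/2$, which after multiplying by $\mue$ yields the claimed bound.

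The uniform-in-$j$ statement is then automatic: Lemma~\ref{lemma:bp_upper_bound} bounds $\Exp{\B_j^2}$ by a constant $\mathcal{M}$ independent of $j$, so $\varpi(\mue)$ as defined in the lemma depends only on $\mue$ and tends to zero as $\mue \to \infty$. The only delicate point is in part (ii), where one must be careful to preserve the $O(1/\mue^2)$ order of the residual when handling the product of the two exponentials, and to route the bound through $\B_j^2$ (rather than the less convenient $\S_j \B_j$ or $\S_j^2$) so that the uniform moment bound from Lemma~\ref{lemma:bp_upper_bound} applies directly; beyond this I do not anticipate any substantive obstacle.
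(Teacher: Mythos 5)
Your proposal is correct and follows essentially the same route as the paper: condition on the eager cycle to express both probabilities as expectations of exponentials of $\B_j/\mue$ and $\S_j/\mue$, apply the second-order bounds $x - x^2/2 \le 1-e^{-x} \le x$, and invoke Lemma~\ref{lemma:bp_upper_bound} for uniformity in $j$. Your algebraic split in part (ii) (bounding $\bigl((1-e^{-u})-u\bigr)e^{-v}$ by $u^2/2$ and $u(1-e^{-v})$ by $uv$, then routing through $\S_j \le \B_j$) is a minor rearrangement of the paper's inclusion--exclusion step $\prob{\S^{\mue}_j > B_\t,\ \B^{\mue}_j < A_\t} = \prob{\S^{\mue}_j > B_\t} - \prob{\S^{\mue}_j > B_\t,\ \B^{\mue}_j > A_\t}$, and in fact yields the marginally sharper constant $(\lambda_\t+\mu_\t)^2/2$.
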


\begin{proof} %{\bf Proof of Lemma \ref{Lemma_Intermediate_result}:}
  Proof of $(i)$: Note that, for any random variable $Y,$ the moment
  generating function (MGF) is given by: $M_Y(s) := \Exp{e^{sY}}.$ We
  use the following bounds in our proof repeatedly. Suppose that the
  random variable $X$ is exponentially distributed with mean
  $1/\gamma,$ and the random $Y$ is non-negative and independent of
  $X.$ Then
  \begin{equation}
    \label{eq:mgf_bounds}
    \gamma \Exp{Y} - \frac{\gamma^2 \Exp{Y^2}}{2} \leq \prob{Y > X} = 1-M_Y(-\gamma) \leq \gamma \Exp{Y}.
  \end{equation}
  The equality above follows by conditioning on $Y,$ and the
  inequalities are a consequence of the following bounds on the MGF:
  for $s \geq 0,$
  $$1-s \Exp{Y} \leq M_Y(-s) \leq 1-s \Exp{Y} + \frac{s^2
    \Exp{Y^2}}{2}.$$

  Now, since $A_\t$ is exponentially distributed and because  
  $\B_j^{\mu_\i} \eqdist \B_j^1/\mu_{\i} = \B_j /\mu_{\i}$  (see \eqref{Eqn_Bmue}), we
  have, using~\eqref{eq:mgf_bounds},
	\begin{align*}
          \prob{\B^{\mue}_j > A_\t} \leq
          \frac{\lambda_\t \Exp{\B_j}}{\mue}. 
	\end{align*}
	Using this inequality, and invoking \eqref{eq:mgf_bounds} again,   
        {\small\begin{align*} \bigg|\mue
            \prob{\B^{\mue}_j > A_\t} - \lambda_\t \Exp{\B_j}\bigg|
            =
	\lambda_\t \Exp{\B_j} - \mue \prob{\B^{\mue}_j > A_\t} 
	& \leq \lambda_\t \Exp{\B_j} - \mue
	\left[\frac{\lambda_\t \Exp{\B_j}}{\mue} -
	\frac{\lambda_{\t}^2 \Exp{\B^2_j}}{2 (\mue)^2} \right] \\
	&=\frac{\lambda_{\t}^2\Exp{\B^2_j}}{2\mue}.
               \end{align*}}
             This completes the proof of Part~$(i)$. \\
	
	\noindent
	{\it Proof of (ii):}
        %	The probability of interest
        % $\prob{\S^{\mue} > B_\t,\ \B^{\mue} < A_\t}$ can be bounded
        % as follows.  (Here we are assuming that $\tau-$job sizes are
        % exponentially distributed)
        Again, since $A_\t$ is exponentially distributed, using
        \eqref{eq:mgf_bounds}, we have
	\begin{align}
	\label{eq:cl2_1}
	\prob{\S^{\mue}_j > B_\t,\ \B^{\mue}_j < A_\t} & \leq \prob{\S^{\mue}_j
		> B_\t} \le  \frac{\mu_\t \Exp{\S_j}}{\mue}. %+ \varpi(\mue).
	\end{align}
	 Since, $\B^\mue_j \ge  \S^\mue_j$, we will have
%	The proof of the last step above follows along the lines of the
%	proof of Claim~1. 
	\begin{align}
	\prob{\S^{\mue}_j > B_\t,\ \B^{\mue}_j < A_\t} & = \prob{\S^{\mue}_j >
		B_\t} - \prob{ \S^{\mue}_j > B_\t,\ \B^{\mue}_j > A_\t} \nonumber \\
	& \geq \prob{\S^{\mue}_j > B_\t} - \prob{\B^{\mue}_j > B_\t,\ \B^{\mue}_j > A_\t
		}. \label{eq:cl2_2}
	\end{align} Because  $A_\t$ and $B_\t$ are exponentially distributed, the second term of the above inequality becomes (using the lower and upper bounds on $M_{\B_j}$ as before):
	\begin{align}
	\prob{\B^{\mue}_j > A_\t,\ \B^{\mue}_j > B_\t} &=\int
	(1-e^{-\lambda_\t x})(1-e^{-\mu_\t x}) dF_{\B^{\mue}_j}(x) \nonumber \\
	&= 1 - M_{\B_j}\left(\frac{-\lambda_\t}{\mue}\right) -
	M_{\B_j}\left(\frac{-\mu_\t}{\mue}\right) +
	M_{\B_j}\left(\frac{-\lambda_\t - \mu_\t}{\mue}\right) \nonumber\\
&\le   1-  \left(1-\frac{\lambda_\t\Exp{\B_j}}{\mue} \right) 
	-  \left(1-\frac{\mu_\t\Exp{\B_j}}{\mue} \right )  \nonumber  \\
	&\quad + \left(1-\frac{(\lambda_\t+ \mu_\t)\Exp{\B_j}}{\mue}  + \frac{(\lambda_\t+ \mu_\t)^2\Exp{\B^2_j}}{2 (\mue)^2}\right) \nonumber \\	
	&=  \frac{(\lambda_\t+ \mu_\t)^2\Exp{\B^2_j}}{2 (\mue)^2} 
	 \label{eq:cl2_3}.
	\end{align}
	The result follows from \eqref{eq:cl2_1}, \eqref{eq:cl2_2}, and
	\eqref{eq:cl2_3}:
	\begin{align*}
	\bigg|\mue \prob{\S^{\mue}_j > B_\t,\ \B^{\mue}_j < A_\t} & - \mu_\t
	\Exp{\S_j}\bigg|  
	= \mu_\t \Exp{\S_j} - \mue \prob{\S^{\mue}_j > B_\t,\
		\B^{\mue}_j < A_\t} \\
	&\leq \mu_\t \Exp{\S_j} - \mue \prob{\S^{\mue}_j > B_\t} + \mue
	\prob{ \B^{\mue}_j > B_\t,\ \B^{\mue}_j > A_\t} \\
	&= \mu_\t \Exp{\S_j} - \mue   E\left [1-  e^{-\mu_\t \S_j  / \mue }  \right ]   + \mue
	\prob{\B^{\mue}_j > A_\t,\ \B^{\mue}_j > B_\t} \\
	&\leq \frac{\mu_\t^2 \Exp{\S^2_j}}{2 \mue} + \mue \prob{\B^{\mue}_j >
		A_\t,\ \B^{\mue}_j > B_\t} \\
%	&= \frac{\mu_\t^2 \Exp{\S^2_j}}{2 \mue} + \mue -
%	\mue M_{\B_j}\left(\frac{-\lambda_\t}{\mue}\right) -
%	\mue M_{\B_j}\left(\frac{-\mu_\t}{\mue}\right) +
%	\mue M_{\B_j}\left(\frac{-\lambda_\t - \mu_\t}{\mue}\right) \\
%	&\leq \frac{\mu_\t^2 \Exp{\S^2_j}}{2 \mue} + \mue - \mue \left(1-\frac{\lambda_\t\Exp{\B_j}}{\mue} \right) 
%	- \mue \left(1-\frac{\mu_\t\Exp{\B_j}}{\mue} \right) \\
%	&\quad + \mue \left(1-\frac{(\lambda_\t+ \mu_\t)\Exp{\B_j}}{\mue}  + \frac{(\lambda_\t+ \mu_\t)^2\Exp{\B^2_j}}{2 (\mue)^2}\right)\\
	&\le \frac{\mu_\t^2 \Exp{\S^2_j}}{2 \mue} + \frac{(\lambda_\t+ \mu_\t)^2\Exp{\B^2_j}}{2 \mue} \\
	&\leq \frac{(\lambda_\t+ \mu_\t)^2\Exp{\B^2_j}}{\mue}.
	\end{align*}
	This completes the proof of lemma.
\end{proof}
\begin{lemma}
	\label{Lemma_Intermediate_result_1}
The probability $\prob{\B^{\mue}_j > B_\t\ |\ \B^{\mue}_j > A_\t}$
converges to 0 u.f. over $j$, under the SFJ limit.	
\end{lemma}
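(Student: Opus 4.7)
The plan is to write the conditional probability as a ratio of two unconditional probabilities,
$$\prob{\B^{\mue}_j > B_\t \mid \B^{\mue}_j > A_\t} \;=\; \frac{\prob{\B^{\mue}_j > A_\t,\ \B^{\mue}_j > B_\t}}{\prob{\B^{\mue}_j > A_\t}},$$
and to show that the numerator is uniformly $O(1/\mue^2)$ in $j$ while the denominator is uniformly $\Omega(1/\mue)$ in $j$. The ratio is then uniformly $O(1/\mue)$ and converges to $0$ as $\mue \to \infty$.

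For the numerator, the computation has essentially already been carried out inside the proof of Lemma~\ref{Lemma_Intermediate_result}: equation (\ref{eq:cl2_3}) establishes
$$\prob{\B^{\mue}_j > A_\t,\ \B^{\mue}_j > B_\t} \;\le\; \frac{(\lambda_\t+\mu_\t)^2 \Exp{\B_j^2}}{2\mue^2}.$$
Invoking Lemma~\ref{lemma:bp_upper_bound}, the second moment $\Exp{\B_j^2}$ is bounded above by $\mathcal{M}$ uniformly over $\i$-sub-policies, so the numerator is bounded above by $(\lambda_\t+\mu_\t)^2 \mathcal{M}/(2\mue^2)$, uniformly in $j$.

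For the denominator, I would use Part (i) of Lemma~\ref{Lemma_Intermediate_result}, which gives
$$\prob{\B^{\mue}_j > A_\t} \;\ge\; \frac{\lambda_\t \Exp{\B_j}}{\mue} - \frac{\lambda_\t^2 \Exp{\B_j^2}}{2\mue^2}.$$
The key step is a uniform positive lower bound on $\Exp{\B_j}$. Decomposing $\B_j = O_j + I_j$ as in the proof of Lemma~\ref{lemma:bp_upper_bound}, where $I_j$ is the idle period of the $\i$-system at scale~$1$, the Poisson arrival assumption makes $I_j$ exponentially distributed with rate $\lambda_\i^{(1)}$ that does not depend on $j$. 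Hence $\Exp{\B_j} \ge \Exp{I_j} = 1/\lambda_\i^{(1)} =: c > 0$ for every $j$. Combining this with the uniform upper bound $\Exp{\B_j^2} \le \mathcal{M}$, we get that for all $\mue$ large enough (say $\mue \ge \lambda_\t \mathcal{M}/c$),
$$\prob{\B^{\mue}_j > A_\t} \;\ge\; \frac{\lambda_\t c}{2\mue}$$
uniformly in $j$.

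Dividing the two bounds gives $\prob{\B^{\mue}_j > B_\t \mid \B^{\mue}_j > A_\t} \le \frac{(\lambda_\t+\mu_\t)^2 \mathcal{M}}{\lambda_\t c\,\mue}$, which tends to zero uniformly in $j$ as $\mue \to \infty$. The only non-routine ingredient is the uniform positive lower bound on $\Exp{\B_j}$; this is where the argument would otherwise be delicate, but it becomes trivial once one observes that the idle-period component of the busy cycle has a distribution independent of the sub-policy $j$.
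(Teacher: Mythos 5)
Your proof is correct and follows essentially the same route as the paper: the same decomposition of the conditional probability into a ratio, the same numerator bound \eqref{eq:cl2_3}, and the same MGF-based lower bound on the denominator $\prob{\B^{\mue}_j > A_\t}$, with Lemma~\ref{lemma:bp_upper_bound} supplying the uniformity in $j$. The only difference is that you make explicit the uniform positive lower bound $\Exp{\B_j} \ge \Exp{I_j} = 1/\rho_\i$ via the idle-period component of the busy cycle --- a step the paper's final bound genuinely needs (to keep its denominator bounded away from zero uniformly in $j$) but leaves implicit when it asserts that the conclusion follows from the uniform bound on $\Exp{\B_j^2}$ alone.
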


\begin{proof}  %Using the definition of conditional probability, it is easy to observe that,
Using arguments similar to those in the proof of Lemma~\ref{Lemma_Intermediate_result} (see  equation (\ref{eq:mgf_bounds})), we can prove that $$\prob{\B^{\mue}_j > A_\t} \geq \frac{\lambda_\t \Exp{\B_j}}{\mue} -
\frac{\lambda_\t^2 \Exp{\B^2_j}}{2 (\mue)^2}.$$ From this and \eqref{eq:cl2_3} we will have,
	\begin{align*}
	\prob{\B^{\mue}_j > B_\t\ |\ \B^{\mue}_j > A_\t} =
	\frac{\prob{\B^{\mue}_j > A_\t,B_\t}}{\prob{\B^{\mue}_j > A_\t}} 
	\displaystyle \leq \frac{\frac{(\lambda_\t + \mu_\t)^2
			\Exp{\B^2_j}}{ 2 (\mue)^2}}{\frac{\lambda_\t \Exp{\B_j}}{\mue} -
		\frac{\lambda_\t^2 \Exp{\B^2_j}}{2 (\mue)^2}} 
	\displaystyle = \frac{(\lambda_\t + \mu_\t)^2
		\Exp{\B^2_j}}{2 \mue \lambda_\t \Exp{\B_j} -
		\lambda_\t^2 \Exp{\B^2_j}}.
	\end{align*} The statement of the lemma now follows by the uniform bound on $\Exp{\B^2_j}$ (Lemma~\ref{lemma:bp_upper_bound}).
\end{proof}

%\begin{lemma}
%		Consider $\g = \infty$.
%		There exists $\bar{\mu} > 0$ such
%		that for $\mue > \bar{\mu},$ the Markov chain $\{X_n\}$ is positive
%		recurrent. \eop
%	\end{lemma}

\subsection{Proof of Lemma~\ref{lemma:TauStability}}
\label{sec:lemma:TauStability}

By Lemma \ref{Lemma_Intermediate_result_2} $(iii)$ below, for any
$\varepsilon > 0$, there exists a $\bar{\mu} > 0$ such that for
$\mue > \bar{\mu},$
$$
\sum_{k = 1}^{\infty} \frac{p_0^{\mu_\epsilon} p_1^{\mu_\epsilon} \cdots p_{k-1}^{\mu_\epsilon}}{q_1^{\mu_\epsilon} q_2^{\mu_\epsilon} \cdots q_k^{\mu_\epsilon}}
	\ \le 
	\sum_{k = 1}^{\infty} \frac{p_0^{\infty} p_1^{\infty } \cdots p_{k-1}^{\infty }}{q_1^{\infty } q_2^{ \infty  } \cdots q_k^{\infty  }}  + \varepsilon  < \infty,
	$$where the last inequality is shown in the proof of  Lemma \ref{Lemma_Intermediate_result_2} $(iv)$.  Thus by standard text book arguments (e.g., \cite{Hoel}) we have the result. \qed

        \ignore{

 it is shown that 
By    uniform convergence of these transition probabilities  as  given by Lemma  \ref{Lem_TP_Convergence},   for any  $\varepsilon > 0$, there exists a  $\bar{\mu} > 0$ such that for $\mue >
		\bar{\mu},$ 
		$$ 
		\sup_j \frac{ p_j^\mue}{q_j^\mue} \   \le \  \sup_j   \frac{\lambda_\t}{  \nu_j \mu_\tau}   + \varepsilon  \  \le   \   \frac{1}{2- \delta_\t}  + \varepsilon,
		$$further using Assumption~{\bf
			B}.4. 

%\textbf{Proof:}
		Pick any  $j \geq 1$. 
		%such that $i \in \mathcal{G}_j.$

		 Recall  $p^{\mue}_j :=
		\prob{X_{n+1} = j +1 \ | \ X_n = j}$, the forward probability.  
	By    uniform convergence of these transition probabilities  as  given by Lemma  \ref{Lem_TP_Convergence},   for any  $\varepsilon > 0$, there exists a  $\bar{\mu} > 0$ such that for $\mue >
		\bar{\mu},$ 
		$$ 
		\sup_j p_j^\mue \   \le \  \sup_j   \frac{\lambda_\t}{\lambda_\tau + \nu_j \mu_\tau}   + \varepsilon  \  \le   \   \frac{1}{2- \delta_\t}  + \varepsilon,
		$$further using Assumption~{\bf
			B}.4. Thus  the birth-death chain $\{X_n\}$ is
		positive recurrent for large enough $\mue$  using standard Lyaponov arguments, by choosing appropriate $\varepsilon$.  		\eop 
		
		{\small \begin{equation}
			\label{eq:tau_stab1e}
			\lim_{\mue \ra \infty} p^{\mue}_j = \frac{\lambda_\t}{\lambda_\t +
				\mu_\t \nu_j} < 1 - \delta,
			% \frac{1}{2} - \frac{\delta}{2},
			\end{equation}}
		where the final inequality is a consequence of Assumption~{\bf
			B}.4. As from previous lemma, we have u.f. convergence over all the sub-policies, inequality \eqref{eq:tau_stab1e} implies that for a suitably small
		$\delta > 0,$ there exists $\bar{\mu} > 0$ such that for $\mue >
		\bar{\mu},$ $p^{\mue}_j < 1- \delta$
		%1/2 - \delta$
		 for all $j.$
	 \eop
	 \\
}

\begin{lemma}
	\label{Lemma_Intermediate_result_2}
	Assume the hypothesis of Lemma \ref{Lem_TP_Convergence}. Then we have following under SFJ limit:\\
	(i) $\frac{1}{p_i^\mue} \stackrel{\mue \to \infty}{\longrightarrow} \frac{1}{p_i^\infty}$,   u.f.  over $i$,\\
	(ii)  $\frac{p_i^\mue}{q_i^\mue} \stackrel{\mue \to \infty}{\longrightarrow} \frac{p_i^\infty}{q_i^\infty}$,   u.f.  over $i$, \\
	(iii) $\sum_{k = 1}^{\infty} \frac{p_0^{\mu_\epsilon} p_1^{\mu_\epsilon} \cdots p_{k-1}^{\mu_\epsilon}}{q_1^{\mu_\epsilon} q_2^{\mu_\epsilon} \cdots q_k^{\mu_\epsilon}}
	\stackrel{\mue \to \infty}{\longrightarrow}
	\sum_{k = 1}^{\infty} \frac{p_0^{\infty} p_1^{\infty } \cdots p_{k-1}^{\infty }}{q_1^{\infty } q_2^{ \infty  } \cdots q_k^{\infty  }}$.
	\\
	(iv) $\left ( \sum_{k = 1}^{\infty} \frac{p_0^{\mu_\epsilon} p_1^{\mu_\epsilon} \cdots p_{k-1}^{\mu_\epsilon}}{q_1^{\mu_\epsilon} q_2^{\mu_\epsilon} \cdots q_k^{\mu_\epsilon}} \right )^{-1}
	\stackrel{\mue \to \infty}{\longrightarrow}
	\left (	\sum_{k = 1}^{\infty} \frac{p_0^{\infty} p_1^{\infty } \cdots p_{k-1}^{\infty }}{q_1^{\infty } q_2^{ \infty  } \cdots q_k^{\infty  }} \right )^{-1}$.
\end{lemma}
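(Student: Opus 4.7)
My plan is to prove the four parts in order, with parts (i) and (ii) being routine consequences of Lemma~\ref{Lem_TP_Convergence} combined with uniform positive lower bounds on the limiting transition probabilities, and parts (iii) and (iv) requiring a dominated-convergence argument based on a telescoping rewriting of the product.

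For part (i), I would observe that since the server operates at unit speed, $\nu_i \leq 1$, and hence $p_i^\infty = \lambda_\t/(\lambda_\t + \nu_i \mu_\t) \geq \lambda_\t/(\lambda_\t + \mu_\t) =: c_1 > 0$ uniformly over $i$. Combined with the u.f.\ convergence $p_i^\mue \to p_i^\infty$ (which follows from Lemma~\ref{Lem_TP_Convergence} via $p_j^\mue = 1 - q_j^\mue$), this ensures $p_i^\mue \geq c_1/2$ for all $i$ and all sufficiently large $\mue$. The identity $|1/p_i^\mue - 1/p_i^\infty| = |p_i^\mue - p_i^\infty|/(p_i^\mue p_i^\infty) \leq (2/c_1^2)\,|p_i^\mue - p_i^\infty|$ then yields u.f.\ convergence of the reciprocals. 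Part (ii) is analogous: Assumption~\textbf{B}.4 gives $\nu_i \geq \lambda_\t/(\mu_\t(1-\delta_\t))$, so $q_i^\infty \geq c_2 > 0$ uniformly over $i$, and the identity
\[
\frac{p_i^\mue}{q_i^\mue} - \frac{p_i^\infty}{q_i^\infty} \;=\; \frac{p_i^\mue q_i^\infty - p_i^\infty q_i^\mue}{q_i^\mue q_i^\infty}
\]
combined with the u.f.\ convergence from Lemma~\ref{Lem_TP_Convergence} and the uniform lower bound on $q_i^\mue$ (available for $\mue$ large) establishes the claim.

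For part (iii), let $a_k^\mue := \prod_{i=0}^{k-1} p_i^\mue/q_{i+1}^\mue$. The key algebraic step is the telescoping rewriting
\[
a_k^\mue \;=\; \left(\prod_{i=0}^{k-1} \frac{p_i^\mue}{q_i^\mue}\right) \cdot \frac{q_0^\mue}{q_k^\mue},
\]
which exchanges the inconvenient offset ratios $p_i/q_{i+1}$ for same-index ratios $p_i/q_i$ (controlled by part (ii) and Assumption~\textbf{B}.4) plus a telescoping remainder. By part (ii), $p_i^\mue/q_i^\mue \to \rho_i = \lambda_\t/(\nu_i\mu_\t)$ u.f.\ over $i$, and by Assumption~\textbf{B}.4 the limit satisfies $\rho_i < 1-\delta_\t$; thus for $\mue$ large, $p_i^\mue/q_i^\mue \leq 1 - \delta_\t/2$ uniformly in $i$. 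Part (ii) also gives $q_i^\mue \geq c_2/2$ for all $i$ and $\mue$ large, so $q_0^\mue/q_k^\mue \leq 2/c_2 =: M$ uniformly in $k$. Combining these yields the dominating bound $a_k^\mue \leq M\,(1-\delta_\t/2)^k$ for all $k$ and all $\mue$ sufficiently large. Since $a_k^\mue \to a_k^\infty$ pointwise in $k$ (being a continuous function of finitely many u.f.-convergent quantities), the dominated convergence theorem applied to counting measure on $\N$ gives $\sum_{k \geq 1} a_k^\mue \to \sum_{k \geq 1} a_k^\infty$. The same geometric bound applied to $a_k^\infty$ also establishes the finiteness of the limit series that is needed for the last inequality cited in the proof of Lemma~\ref{lemma:TauStability}.

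Part (iv) follows as an immediate corollary: both $\sum_k a_k^\mue$ and $\sum_k a_k^\infty$ are bounded below by $a_1^\mue \to a_1^\infty > 0$, hence are uniformly bounded below by a positive constant for $\mue$ large, and then part (iii) combined with continuity of the reciprocal map yields convergence of the reciprocals. The main obstacle is the telescoping step in part (iii): expressed directly, $a_k^\mue$ is a product of offset ratios $p_i^\mue/q_{i+1}^\mue$ for which neither Assumption~\textbf{B}.4 nor Lemma~\ref{Lem_TP_Convergence} supplies a uniform contraction factor, so the crucial insight is to pivot to same-index ratios using the identity above, after which the dominated convergence argument is essentially routine.
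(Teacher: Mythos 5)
Your proof is correct and follows essentially the same route as the paper's: uniform positive lower bounds on $q_i^\infty$ (and $p_i^\infty$) coming from Assumption \textbf{B}.4, a rewriting of $\frac{p_0^\mue\cdots p_{k-1}^\mue}{q_1^\mue\cdots q_k^\mue}$ in terms of the same-index ratios $p_i^\mue/q_i^\mue$ (which are uniformly below $1-\delta_\tau/2$ for large $\mue$ by part (ii)) times a bounded leftover factor $1/q_k^\mue$, yielding a summable geometric dominating bound and hence the dominated convergence theorem for series, with part (iv) then following from positivity and finiteness of the limit. The only cosmetic blemish is that your telescoping identity multiplies and divides by $q_0^\mue$, which is zero/undefined at the reflecting state $0$; the paper avoids this by simply using $p_0^\mue=1$ and starting the same-index ratios at $i=1$, which is what your argument effectively does anyway.
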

\begin{proof}
	By Lemma  \ref{Lem_TP_Convergence}, 
	$$p_i^{\mu_\epsilon} \stackrel{\mu_\epsilon \to \infty}{\longrightarrow}  \frac{\lambda_\tau}{\lambda_\tau + \nu_i \mu_\tau} = p_i^{\infty} \mbox{ and  } q_i^{\mu_\epsilon} \stackrel{\mu_\epsilon \to \infty}{\longrightarrow} \frac{\nu_i \mu_\tau}{\lambda_\tau + \nu_i \mu_\tau} = q_i^\infty,$$ uniformly $(u.f.)$ over all the sub-policies $i$. Also, note that $0 < p_i^{\mu_\epsilon}, q_i^{\mu_\epsilon} <1$.
	
	Therefore, for every $\varepsilon >0, \ \exists \ \bar{\mu}$ such that, for all sub-policies $i$, whenever $\mu_\epsilon > \bar{\mu}$, we have:
	\begin{eqnarray}
	\label{Eqn_q_bounds}
	q_i^\infty - \varepsilon < q_i^\mue < q_i^\infty + \varepsilon \ \ \mbox{and} \ \ p_i^\infty - \varepsilon < p_i^\mue < p_i^\infty + \varepsilon.
	\end{eqnarray}
	
	%	
	%	\textbf{Claim 1:} As $\mu_\epsilon \to \infty$, $\frac{1}{q_i^{\mu_\epsilon}} \to \frac{1}{q_i^{\infty}} := \frac{\lambda_\tau + \nu_i \mu_\tau}{\nu_i \mu_\tau}$ uniformly over the sub-policies $i$.
	
	%\textbf{Proof of claim 1.} 
	{\it  Proof of (i):} 
	Using equation (\ref{Eqn_q_bounds}), for every $\varepsilon > 0$, there exists a $\bar{\mu}$ such that,
	%For $\mu_\epsilon > \bar{\mu}$, 
	\begin{eqnarray}
	\label{Eqn_fraction}
	\bigg| \frac{1}{q_i^{\mu_\epsilon}} - \frac{1}{q_i^\infty} \bigg| &= &\bigg| \frac{q_i^{\mu_\epsilon} - q_i^{\infty}}{q_i^{\mu_\epsilon} q_i^{\infty}}    \bigg|  \ \le \
	\frac{\varepsilon}{ q_i^{\mu_\epsilon} q_i^{\infty} }, \hspace{0.3cm} \mbox{for all} \  \mu_\epsilon \geq \bar{\mu}. 
	\end{eqnarray}
	%Since, $0< q_i^{\mu_\epsilon}, q_i^{\infty} <1$

	%	\newcommand{\beq}{\begin{eqnarray*}}
	%		\newcommand{\eeq}{
	%	\end{eqnarray*}}
	Since, $\t$-system is stable for all $i$, i.e., $\frac{\lambda_\t}{\nu_i\mu_\t} < 1 -\delta_\t$   for all $i$ (Assumption {\bf B}.4), the following is true
	\begin{eqnarray}
	\label{Eqn_q_inf_lower}
	\frac{1}{q_i^\infty} = \frac{\lambda_\tau + \nu_i \mu_\tau}{\nu_i \mu_\tau}
	< 2, \ \ \forall \ \ i \  \mbox{ and } \    \frac{1}{q_i^\mue} < \frac{1}{q_i^\infty - \varepsilon} =   \frac{  \frac{1}{q_i^\infty} }{1- \varepsilon   \frac{1}{q_i^\infty}}  < \frac{2}{1 - 2 \varepsilon}.
	\end{eqnarray}
	Using equation (\ref{Eqn_q_inf_lower}) in   (\ref{Eqn_fraction}) we get, for every $\varepsilon > 0$, there exists a $\bar{\mu}$ such that:
	\begin{eqnarray}
	\sup_i	\bigg| \frac{1}{q_i^\mue} - \frac{1}{q_i^\infty} \bigg| < \frac{4\varepsilon}{1 - 2 \varepsilon},  \ \   \mbox{for all} \  \mu_\epsilon \geq \bar{\mu}. 
	\end{eqnarray}
	This implies, $\frac{1}{q_i^{\mu_\epsilon}} \stackrel{\mue \to \infty }{ \longrightarrow} \frac{1}{q_i^{\infty}}$ uniformly over sub-policies $i$.\\
	
	%	\textbf{Claim 2:} 
	{\it  Proof of (ii):}
	%	 $\frac{p_i^\mue}{q_i^\mue} \stackrel{\mue \to \infty}{\longrightarrow}  \frac{p_i^\infty}{q_i^\infty}$ uniformly over sub-policies $i$.
	%	
	%	\textbf{Proof of claim 2.}
	Using equation (\ref{Eqn_q_bounds}), for every $\varepsilon > 0$, there exists a $\bar{\mu}$ such that, for all $\mue \geq \bar{\mu}$,
	\beq
	\bigg| \frac{p_i^\mue}{q_i^\mue} -   \frac{p_i^\infty}{q_i^\infty} \bigg| 
	= \bigg| \frac{p_i^\mue q_i^\infty - p_i^\infty q_i^\mue}{q_i^\mue q_i^\infty}   \bigg| 
	\leq \frac{ q_i^\infty  \big|p_i^\mue  -  p_i^\infty \big| +  p_i^\infty   \big| q_i^\mue  - q_i^\infty \big|}{q_i^\mue q_i^\infty}
	<  \frac{\varepsilon ( q_i^\infty +  p_i^\infty)}{q_i^\mue q_i^\infty}
	= \frac{\varepsilon}{q_i^\mue q_i^\infty} 
	%<\frac{2 \epsilon}{q_i^\epsilon} 
	<
	\frac{4 \varepsilon}{1 - 2 \varepsilon}.
	\eeq
	In above, the last inequality followed from equation (\ref{Eqn_q_inf_lower}).
	%	\beq
	%	\bigg| \frac{p_i^\mue}{q_i^\mue} -   \frac{p_i^\infty}{q_i^\infty} \bigg| \leq
	%	\frac{ q_i^\infty \epsilon +  p_i^\infty \epsilon}{q_i^\mue q_i^\infty} 
	%	= \frac{\epsilon ( q_i^\infty +  p_i^\infty)}{q_i^\mue q_i^\infty}
	%	= \frac{\epsilon}{q_i^\mue q_i^\infty} 
	%	<\frac{2 \epsilon}{q_i^\epsilon} 
	%	<
	%	\frac{4 \epsilon}{1 - 2 \epsilon} < \bar{\epsilon}.
	%	\eeq
	This implies, $\frac{p_i^\mue}{q_i^\mue} \stackrel{\mue \to \infty}{\longrightarrow}  \frac{p_i^\infty}{q_i^\infty}$ $u.f.$ over $i$.

        \ignore{
	One can easily show that, for finite $k \geq 1$,	
	\begin{eqnarray}
	\label{Eqn_Finite_Prod_conv}
	\frac{p_0^{\mu_\epsilon} p_1^{\mu_\epsilon} \cdots p_{k-1}^{\mu_\epsilon}}{q_1^{\mu_\epsilon} q_2^{\mu_\epsilon} \cdots q_k^{\mu_\epsilon}} \stackrel{\mue \to \infty}{\longrightarrow} \frac{p_0^{\infty} p_1^{\infty} \cdots p_{k-1}^{\infty}}{q_1^{\infty} q_2^{\infty} \cdots q_k^{\infty}}. \hspace{0.3cm}  
	\end{eqnarray}
		} % end ignore
	
	%
	%\textbf{Claim 3}
	
	%	Next, consider the summation in the denominator of $\pi_0^\mue$. 
	%	\beq
	%	\sum_{k = 1}^{\infty} \frac{p_0^{\mu_\epsilon} p_1^{\mu_\epsilon} \cdots p_{k-1}^{\mu_\epsilon}}{q_1^{\mu_\epsilon} q_2^{\mu_\epsilon} \cdots q_k^{\mu_\epsilon}},
	%	\eeq

	%	\beq
	%	\sum_{k = 1}^{\infty} \frac{p_0^{\mu_\epsilon} p_1^{\mu_\epsilon} \cdots p_{k-1}^{\mu_\epsilon}}{q_1^{\mu_\epsilon} q_2^{\mu_\epsilon} \cdots q_k^{\mu_\epsilon}}
	%	\stackrel{\mue \to \infty}{\longrightarrow}
	%	\sum_{k = 1}^{\infty} \frac{p_0^{\infty} p_1^{\infty } \cdots p_{k-1}^{\infty }}{q_1^{\infty } q_2^{ \infty  } \cdots q_k^{\infty  }},
	%	\eeq

	%	Using Eqn (\ref{Eqn_Finite_Prod_conv}), for every $\epsilon>0$, there exists a $\bar{\mu}$, such that for each $\mue > \bar{\mu}$,
	%	
	%	We have seen previously that for each $k$ finite,  as $\mue \to \infty $
	%	\beq
	%	\frac{p_0^{\mu_\epsilon} p_1^{\mu_\epsilon} \cdots p_{k-1}^{\mu_\epsilon}}{q_1^{\mu_\epsilon} q_2^{\mu_\epsilon} \cdots q_k^{\mu_\epsilon}} \to   \frac{p_0^{\infty} p_1^{\infty } \cdots p_{k-1}^{\infty }}{q_1^{\infty } q_2^{ \infty  } \cdots q_k^{\infty  }}.
	%	\eeq
	{\it  Proof of (iii):} From part $(ii)$ and assumption {\bf B}.4, % we have,  %$\frac{p_i^\mue}{q_i^\mue} \stackrel{\mue \to \infty}{\longrightarrow} \frac{p_i^\infty}{q_i^\infty}$. Therefore, 
	for every $\varepsilon > 0$, there exists a $\bar{\mu}$ such that, for each $\mue > \bar{\mu}$
	\beq
	\frac{p_i^\mue}{q_i^\mue} < \varepsilon + \frac{p_i^\infty}{q_i^\infty} 
	= \varepsilon + \frac{\lambda_\t}{ \mut \nu_i}
	= \varepsilon + \rho_i
	< \varepsilon +  1-\delta_\t.
	\eeq %where, $\bar{\rho}$ is the maximum tolerant load corresponding to all eager pilicies $i$.
	By appropriate choice of  $\varepsilon >0$, such that $\varepsilon + 1-\delta_\t < 1$, we have
	%	This implies, for all $k$ finite and $\mue > \bar{\mu}$ 
	(note that $p_0^\mue = 1$),
	\beq
	\frac{p_0^{\mu_\epsilon} p_1^{\mu_\epsilon} \cdots p_{k-1}^{\mu_\epsilon}}{q_1^{\mu_\epsilon} q_2^{\mu_\epsilon} \cdots q_k^{\mu_\epsilon}}
	< (\varepsilon +  1-\delta_\t )^{k-1} \left( \frac{1}{q_k^\infty} + \varepsilon \right)
	%
	%	&=& (\epsilon + \bar{\rho})^{k-1} \left(\frac{\lambda_\t + \nu_k \mut}{\nu_k \mu_\t} + \epsilon \right)\\
	%	%
	= (\varepsilon +  1-\delta_\t )^{k-1} \left(1 + \rho_k + \varepsilon \right)
	%
	%	< (\epsilon + \bar{\rho})^{k-1} \left(1 + \bar{\rho} + \epsilon \right)
	%	%
	< \left( 2 + \varepsilon \right) (\varepsilon + 1-\delta_\t )^{k-1}. 
	\eeq
	Consider the following summation,
	\beq
	\sum_{k = 1}^\infty \left( 2 + \varepsilon \right) (\varepsilon + 1-\delta_\t )^{k-1} = \frac{2 + \varepsilon}{1 - (\varepsilon +  1-\delta_\t) } < \infty.
	\eeq
	Therefore, by dominated convergence theorem (DCT) for series (note each of the terms inside the series converge by part $(ii)$), we have,
	\beq
	\sum_{k = 1}^\infty \frac{p_0^{\mu_\epsilon} p_1^{\mu_\epsilon} \cdots p_{k-1}^{\mu_\epsilon}}{q_1^{\mu_\epsilon} q_2^{\mu_\epsilon} \cdots q_k^{\mu_\epsilon}} &\stackrel{\mue \to \infty}{\longrightarrow}& \sum_{k = 1}^\infty \frac{p_0^{\infty} p_1^{\infty} \cdots p_{k-1}^{\infty}}{q_1^{\infty} q_2^{\infty} \cdots q_k^{\infty}}. 
	%
	%	&=& \sum_{k = 1}^\infty \bigg(\frac{\lambda_\t}{\nu_1 \mut}\bigg) \bigg(\frac{\lambda_\t}{\nu_2 \mut}\bigg) \cdots \bigg(\frac{\lambda_\t}{\nu_{ k-1} \mut}\bigg)\bigg(1 + \frac{\lambda_\t}{\nu_k \mu_\t}\bigg)\\
	%	%
	%	&=& \sum_{k =1}^\infty \rho_1 \rho_2 \cdots \rho_{k-1} \big( 1 + \rho_k \big) \\
	%	%
	%	&<& 2 \sum_{ k = 1}^\infty \rho_1 \rho_2 \cdots \rho_{k-1} \hspace{0.5cm} (\mbox{since} \ \ \rho_k < 1 )\\
	%	%
	%	&<& 2 \sum_{ k = 1}^\infty \bar{\rho}^{k-1}
	%	%
	%	= \frac{2}{1 - \bar{\rho}}.
	\eeq

	{\it  Proof of (iv):} It suffices to show that the limit in part $(iii)$ lies in $\left(0,\infty\right).$ The limit is clearly positive as it is the summation of strictly positive terms. The finiteness of the limit follows from the DCT as in the proof of part $(iii).$	
%	
%	Note that the limit in part (iii) is positive (because $p_0^\infty /q_1^\infty  > 1$ and  the rest are non-negative terms),  it is also a finite quantity as:  
%	$$
%	\sum_{k = 1}^\infty \frac{p_0^{\infty} p_1^{\infty} \cdots p_{k-1}^{\infty}}{q_1^{\infty} q_2^{\infty} \cdots q_k^{\infty}} = 	\sum_{k = 1}^\infty \rho_1 \rho_2 \cdots \rho_{k-1} \frac{\lambda_{\t} + \nu_k \mut}{\nu_k \mut} < 2 \sum_{k = 1}^\infty \bar{\rho}^{k-1} = \frac{2}{1 - \bar{\rho}} < \infty.
%	$$
%	hence the proof follows.  
\end{proof}

%
%
%\begin{lemma}$\left[\textbf{Uniform Convergence of SD}\right]$
%	\label{Lem_SD_Convergence}
%	Under SFJ limit ( as $\mue \to \infty$), the stationary distribution of the tolerant Markov process  given by equation (\ref{Eqn_SD_pre-limit}) and (\ref{Eqn_pi_0_pre-limit}) converges to that of the SDSR-M/M/1 queue with variable service rates. That is,
%	$$
%	\pi_i^\mue \stackrel{\mue \to \infty}{\longrightarrow} \pi_i^\infty \ \mbox{for all} \  i \geq 0 \mbox{ u.f.}
%	$$ 
%\end{lemma}
%
\subsection{Proof of Theorem~\ref{Thm_SD_Convergence}}
\label{sec:Thm_SD_Convergence}

%{\it  Proof of (i)} 
%We are considering different queuing systems, one for each $\mue \ge 1$ (the service rate of $\i$-class),  and under SFJ limit, the service rate $\mue \to \infty$ while maintaining $\rho_\i$ constant. 
%	As there is bijection between the stationary distributions of the Markov process and the embedded Markov chain (EMC) considered at the arrival/departure epochs, it is sufficient to consider the birth-death chain in the pre-limit.
For any  $i \geq 0 $,  $p_i^{\mu_\epsilon} = P (A_\t \le \Upsilon_i^\mue (B_\t) )$  denotes the probability that  $\t$-arrival is before $\t$-departure in the corresponding $\mue$-system, when
 $\t$ occupancy equals $i$.

	From equation (\ref{Eqn_SD_pre-limit}), the stationary distribution of the tolerant embedded BD chain is given by,
	\begin{eqnarray}
	\label{Eqn_SD}
	\tilde{\pi}_i^{\mu_\epsilon} = \frac{p_0^{\mu_\epsilon} p_1^{\mu_\epsilon} p_2^{\mu_\epsilon} \cdots p_{i-1}^{\mu_\epsilon}}{q_1^{\mu_\epsilon} q_2^{\mu_\epsilon} q_3^{\mu_\epsilon} \cdots q_i^{\mu_\epsilon}} \tilde{\pi}_0^{\mu_\epsilon} ; \hspace{0.3cm} i \geq 1 \hspace{0.5cm}\mbox{with} \hspace{0.3cm} \tilde{\pi}_0^\mue = \frac{1}{1 + \sum_{k = 1}^{\infty} \frac{p_0^{\mu_\epsilon} p_1^{\mu_\epsilon} \cdots p_{k-1}^{\mu_\epsilon}}{q_1^{\mu_\epsilon} q_2^{\mu_\epsilon} \cdots q_k^{\mu_\epsilon}}}. 
	\end{eqnarray} 
	By Lemma \ref{Lemma_Intermediate_result_2},  
	\begin{eqnarray}
	\label{Eqn_pi_0_conv}
	\lim \limits_{\mue \to \infty} \tilde{\pi}_0^\mue \to \tilde{\pi}_0^\infty \ \ \mbox{with} \ \ \tilde{\pi}_0^\infty  = 1 \big/ \bigg( 1 + \sum_{k = 1}^{\infty} \frac{p_0^{\infty} p_1^{\infty} \cdots p_{k-1}^{\infty}}{q_1^{\infty} q_2^{\infty} \cdots q_k^{\infty}} \bigg).  
	\end{eqnarray}
	and
	\begin{eqnarray}
	\label{Eqn_tilde_pi_conv}
	\lim \limits_{\mue \to \infty} \tilde{\pi}_i^\mue = \tilde{\pi}_i^\infty,  \ \   \mbox{ with } \ 
	\tilde{\pi}_i^\infty  = \frac{p_0^{\infty} p_1^{\infty} \cdots p_{k-1}^{\infty}}{q_1^{\infty} q_2^{\infty} \cdots q_k^{\infty}} \tilde{\pi}_0^\infty.
	\end{eqnarray}
	Thus as $\mue \to \infty$, the stationary distribution of the
        embedded tolerant BD chain converges to the stationary
        distribution of the embedded chain corresponding to the limit
        SDSR-M/M/1 queue. Thus we have weak convergence of the
        stationary distributions. \qed

        \subsection{Proof of Theorem~\ref{Thm_Lim_SD_Conv}}
        \label{sec:Thm_Lim_SD_Conv}
        
        The statement of Theorem~\ref{Thm_Lim_SD_Conv} follows
        directly from Theorem~\ref{Thm_SD_Convergence} invoking the
        following lemma, which relates the steady state distribution
        of a continuous time queue with the stationary distribution of
        the discrete-time embedded Markov chain obtained by sampling
        the queue-occupancy at arrival and departure epochs.

        \begin{lemma}
          \label{lemma:EMC_dist}
          Consider a stable queueing system with Poisson job arrivals
          and no simultaneous departures (i.e., jobs depart one at a
          time with probability 1), such that the time-average
          distribution of queue occupancy $\pi = \{\pi_i\}_{i \geq 0}$
          is well defined, i.e.,
	$$ 
	\pi_i  = \lim_{t \ra \infty} \frac{1}{t} \int_{0}^t
	\indicator{X(s)=i} ds \quad \forall \ i \quad (a.s.),
	$$ where $X(t)$ denote the queue occupancy at time $t.$ Let $\tilde{\pi} = \{\tilde{\pi}_i\}_{i \geq 1}$ denote the (discrete-time)
	time-average distribution of the queue occupancy sampled just
	following arrival/departure epochs. Then 
	\begin{eqnarray*} 
	\tilde{\pi}_0 = \frac{1}{2} \pi_0,  \mbox{ and }
	\tilde{\pi}_i = \frac{1}{2} (\pi_{i-1} + \pi_i)  \quad (i \geq 1).  \end{eqnarray*}
\end{lemma}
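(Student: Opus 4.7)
My approach is to decompose the embedded (discrete-time) distribution $\tilde{\pi}$ into contributions from arrival epochs and from departure epochs, evaluate each contribution via PASTA and a level-crossing identity, and combine them with equal weights.

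First I would argue that arrival and departure epochs occur at the same long-run rate. Since the queue is stable and the time-average distribution $\pi$ exists in the almost-sure sense stated, the long-run departure rate equals the Poisson arrival rate $\lambda.$ Consequently, in the sequence of merged arrival/departure epochs, a fraction $1/2$ are arrival epochs and a fraction $1/2$ are departure epochs (almost surely). Therefore $\tilde{\pi}$ is the equal-weight average of the empirical distribution of the occupancy sampled just after arrival epochs and of the one sampled just after departure epochs.

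Next I would apply (sample-path) PASTA to the arrival epochs. Since arrivals form a Poisson process and the queue occupancy satisfies the standard non-anticipation assumption, the empirical distribution of the state just \emph{before} an arrival is $\pi$ almost surely. Incrementing by one to account for the arrival, the empirical distribution of the state just \emph{after} an arrival places mass $\pi_{i-1}$ on each $i \ge 1$ and mass $0$ at $i=0.$

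The main step is evaluating the empirical distribution of the state just after a departure. Here I would use a level-crossing (``up-down'') identity: for every $i \ge 0,$ the number of arrival epochs in $[0,t]$ that move the occupancy from $i$ to $i+1$ and the number of departure epochs in $[0,t]$ that move it from $i+1$ to $i$ differ by at most one, because the sample path is piecewise constant with jumps of size $\pm 1$ (no batch arrivals by the Poisson assumption, no batch departures by hypothesis). Dividing both counts by $t,$ by the long-run equality of arrival and departure rates they have the same almost-sure limit, which by PASTA equals $\lambda \pi_i.$ Dividing instead by the total number of departures in $[0,t]$ then shows that the long-run fraction of departure epochs that leave the queue in state $i$ converges to $\pi_i$ almost surely. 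Combining the two halves with weights $1/2$ yields $\tilde{\pi}_0 = \frac{1}{2}\pi_0$ and $\tilde{\pi}_i = \frac{1}{2}(\pi_{i-1}+\pi_i)$ for $i \ge 1,$ as claimed.

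The main obstacle is the careful sample-path justification of the level-crossing identity and of PASTA in the almost-sure (rather than stationary-distribution) formulation adopted here; both are classical (see, e.g., Wolff's lack-of-anticipation PASTA and the standard up-crossing/down-crossing balance for $\pm 1$ step processes), but they must be invoked in the right pathwise form. Once this is in place, the rest of the argument is bookkeeping.
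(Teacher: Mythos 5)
Your proposal is correct and follows essentially the same route as the paper's proof: split the embedded distribution into arrival-epoch and departure-epoch halves (each carrying asymptotic weight $1/2$ by equality of arrival and departure rates), apply PASTA to the arrival half, and use the fact that the distribution left behind by departures equals the distribution seen by arrivals for the departure half. The only difference is cosmetic: you derive the latter fact from the explicit $\pm 1$ up/down-crossing balance, whereas the paper cites it as a standard textbook result.
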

%	%\begin{proof}
	{\bf Proof:} 
	Let $X_n$ denote the queue occupancy just following the $n$th
	arrival/departure epoch. Let $X^A_n$ (respectively, $X^D_n$) denote
	the queue occupancy just before the $n$th arrival (respectively, just following the $n$th departure).
		By PASTA,
		 $$
		 \pi_i = \frac{1}{n} \sum_{j = 1}^n \indicator{X^A_j = i}.
		 $$ Moreover, since the time-average seen by arrivals matches the
	time average seen by departures (see, for
	example, \cite[Chapter~26]{Harchol-Balter}),
	 $$
	 \pi_i = \frac{1}{n} \sum_{j = 1}^n \indicator{X^D_j = i}.
	 $$
	
	Define, for $n \geq 1,$
	\begin{align*}
	a_n &= \indicator{X_{n} = X_{n-1} + 1} \\ %arrival indicator
	A_n &= \sum_{i = 1}^n a_i \\%cumulative arrival indicator
	d_n &= \indicator{X_{n} = X_{n-1} - 1} \\ %departure indicator
	D_n &= \sum_{i = 1}^n d_i %cumulative departure indicator
	\end{align*}
	
	For $i \geq 1,$
	\begin{eqnarray*}
	\tilde{\pi}_i &=& \lim_{n \ra \infty} \frac{1}{n} \sum_{k = 1}^n
	\indicator{X_k = i}\\
	&=& \lim_{n \ra \infty} \left[ \frac{1}{n} \sum_{k = 1}^n \indicator{X_k = i} a_k
	+ \frac{1}{n} \sum_{k = 1}^n \indicator{X_k = i} d_k \right]\\
	&=& \lim_{n \ra \infty} \left[ \frac{A_n}{n} \frac{1}{A_n}\sum_{j = 1}^{A_n} \indicator{X^A_k = i-1} 
	+ \frac{D_n}{n} \frac{1}{D_n} \sum_{j = 1}^{D_n} \indicator{X^D_j = i} \right]\\
	&=& \frac{1}{2} (\pi_{i-1} + \pi_i). 
	\end{eqnarray*}
Also,	
	\begin{align*}
	\tilde{\pi}_0 &= \lim_{n \ra \infty} \frac{1}{n} \sum_{k = 1}^n
	\indicator{X_k = 0}\\
	&= \frac{1}{n} \sum_{k = 1}^n \indicator{X_k = 0} d_k \\
	&= \frac{D_n}{n} \frac{1}{D_n} \sum_{j = 1}^{D_n} \indicator{X^D_j = 0} \\
	&= \frac{1}{2} \pi_0. 
	\end{align*}
	\qed

\subsection{Proof of Theorem~{\ref{Thm_tau}}}
\label{sec:Thm_tau}

Note that, by Lemma \ref{Lemma_Intermediate_result_2}, for every $\varepsilon >0$, there exists a $\bar{\mu}$ such that, for all sub-policies $i$ and $\mue > \bar{\mu}$,  
		$$
	\frac{p_i^\mue}{q_i^\mue} <  \varepsilon +  1-\delta_\t.
	$$
As, for all $i \geq 0$, $\pi_i^\mue \leq 2 \tilde{\pi}_i^\mue$, we will have, for all $\mue > \bar{\mu}$ and $i$,
$$
\big|i \pi_i^\mue \big|
\leq 	\big|2 i \tilde{\pi}_i^\mue \big|
= \bigg| 2  i \frac{p_0^{\mu_\epsilon} p_1^{\mu_\epsilon} \cdots p_{i-1}^{\mu_\epsilon}}{q_1^{\mu_\epsilon} q_2^{\mu_\epsilon} \cdots q_i^{\mu_\epsilon}} \tilde{\pi}_0^\mue \bigg|
< 2 i \left(2 + \varepsilon \right) \left(  \varepsilon +  1-\delta_\t \right) ^{i-1}.
$$
	Note that the following summation is finite, as:
	\begin{equation}
	\sum_{i = 0}^{\infty} 2 i \left(2 + \varepsilon \right) \left( \varepsilon + 1 + \delta_\t \right) ^{i-1} 
	= 2 \left(2 + \varepsilon \right) \sum_{i = 0}^{\infty} i \left( \varepsilon + 1 - \delta_\t \right) ^{i-1}
	% = \left(2 + \epsilon \right) \sum_{i = 0}^{\infty} \frac{d}{d(\epsilon + \bar{\rho})} \left( \epsilon + \bar{\rho}\right)^i = \left(2 + \epsilon \right)  \frac{d}{d(\epsilon + \bar{\rho})} \sum_{i = 0}^{\infty} \left( \epsilon + \bar{\rho}\right)^i \\
	%&=&  \left(2 + \epsilon \right)  \frac{d}{d(\epsilon + \bar{\rho})} \left( \frac{1}{1 - \left(\epsilon + \bar{\rho}\right)} \right)
	%
	=  \frac{2 \left(2 + \epsilon \right) }{\left(  1 - \left(\epsilon + 1 - \delta_\t \right)   \right)^2} < \infty \label{Eqn_up}.
	\end{equation}
	By the  above upper bound, DCT   can be applied   (note by Theorem \ref{Thm_Lim_SD_Conv},  $\pi_i^{\mu_\i} \to \pi_i$ for all $i$),  and   we have:
	% from (\ref{Eqn_Exp_Number_Conv})
	\begin{eqnarray}
	\lim\limits_{\mue \to \infty}  \sum_{i = 0}^{\infty} i {\pi}_i^\mue 
%	= \lim\limits_{\mue \to \infty} \sum_{i=0}^{\infty} i \frac{p_0^{\mu_\epsilon} p_1^{\mu_\epsilon} \cdots p_{i-1}^{\mu_\epsilon}}{q_1^{\mu_\epsilon} q_2^{\mu_\epsilon} \cdots q_i^{\mu_\epsilon}} \tilde{\pi}_0^\mue 
%	%
%	&=& \sum_{i=0}^{\infty}  \lim\limits_{\mue \to \infty} \left( i \frac{p_0^{\mu_\epsilon} p_1^{\mu_\epsilon} \cdots p_{i-1}^{\mu_\epsilon}}{q_1^{\mu_\epsilon} q_2^{\mu_\epsilon} \cdots q_i^{\mu_\epsilon}} \tilde{\pi}_0^\mue \right) \nonumber \\
% &=& \sum_{i=0}^{\infty}  i \frac{p_0^{\infty} p_1^{\infty} \cdots p_{i-1}^{\infty}}{q_1^{\infty} q_2^{\infty} \cdots q_i^{\infty}} \tilde{\pi}_0 \nonumber \\	%
%
=  \sum_{i = 0}^{\infty}  \lim\limits_{\mue \to \infty} i {\pi}_i^\mue
= \sum_{i=0}^{\infty}  i {\pi}_i = \Exp{N} \nonumber.
	\end{eqnarray}	
This completes the proof.\qed
\section{Eager Performance}  
\label{Appendix_eager}

In this appendix we provide the proofs related to eager performance. These proofs depend extensively upon the coupling  details provided in Appendix \ref{Appendix_sample}.

\noindent {\bf Proof of Theorem~ \ref{Thm_PB}}: Let $N_B^\mue (t)$ denote the total number of eager customers that are blocked, in the pre-limit, during the time interval $\left[ 0, t\right]$ and  $N_A^\mue (t)$ be  the total number of eager customers arrived in the same time interval. Moreover, let $N_{B_j}^\mue (t)$ and $N_{A_j}^\mue (t)$ respectively denote the total number of eager customers blocked and arrived respectively before time $t$, in the pre-limit, in the time period  during  which the sub-policy $j$ has been used. % within the time interval $\left[ 0, t\right]$. 
Then the overall pre-limit blocking probability of the eager class is given by the following  (the time limits exist as shown by Theorem \ref{Thm_PB_for_one_subpolicy} and the arguments given below), which can be split as below:
	$$
P_B^\mue =	\lim\limits_{t \to \infty } \frac{N_B^\mue (t)}{N_A^\mue (t)}
	= \lim\limits_{t \to \infty }  \sum_{j=0}^\infty \frac{N_{B_j}^\mue (t)}{N_A^\mue (t)}
	= \lim\limits_{t \to \infty }  \sum_{j=0}^{\bar{g}} \frac{N_{B_j}^\mue (t)}{N_A^\mue (t)}  + \lim\limits_{t \to \infty }  \sum_{j> \bar{g}}^\infty \frac{N_{B_j}^\mue (t)}{N_A^\mue (t)} , \mbox{ for any } {\bar g}. 
	$$
	Consider the following difference,
	\begin{eqnarray}
	\left| \lim\limits_{t \to \infty } \frac{N_B^\mue (t)}{N_A^\mue (t)} - \sum_{j = 0}^{\infty} P_{B_j} {\pi}_j \right| 
	%
%	&=& \left| \lim\limits_{t \to \infty }  \sum_{j=0}^{\bar{g}} \frac{N_{B_j}^\mue (t)}{N_A^\mue (t)}  + \lim\limits_{t \to \infty }  \sum_{j> \bar{g}} \frac{N_{B_j}^\mue (t)}{N_A^\mue (t)} - \sum_{j = 0}^{\infty} P_{B_j} {\pi}_j  \right| \nonumber \\
	% 
	&=& \left| \lim\limits_{t \to \infty }  \sum_{j=0}^{\bar{g}} \frac{N_{B_j}^\mue (t)}{N_A^\mue (t)} - \sum_{j = 0}^{{\bar{g}}} P_{B_j} {\pi}_j  + \lim\limits_{t \to \infty }  \sum_{j> \bar{g}}^\infty \frac{N_{B_j}^\mue (t)}{N_A^\mue (t)} - \sum_{j > \bar{g}} P_{B_j} {\pi}_j  \right|  \nonumber  \\
 && 	\hspace{-15mm} \leq
		  \left| \lim\limits_{t \to \infty }  \sum_{j=0}^{\bar{g}} \frac{N_{B_j}^\mue (t)}{N_A^\mue (t)} - \sum_{j = 0}^{{\bar{g}}} P_{B_j} {\pi}_j \right| + \left| \lim\limits_{t \to \infty }  \sum_{j> \bar{g}}^\infty \frac{N_{B_j}^\mue (t)}{N_A^\mue (t)} - \sum_{j > \bar{g}}^{\infty} P_{B_j} {\pi}_j  \right| , \label{Eqn_Final_Diff} \hspace{6mm}
	\end{eqnarray}
	and we are done if we can show that the above difference tends to zero as $\mu_\i \to \infty$. We first consider the second term (number of losses are upper bounded by number of arrivals, 
	$N_{B_j}^\mue (t) \le N^\mue_{A_j}(t)$):
	\begin{eqnarray}
	 \left| \lim\limits_{t \to \infty }  \sum_{j> \bar{g}}^\infty \frac{N_{B_j}^\mue (t)}{N_A^\mue (t)} - \sum_{j > \bar{g}}^{\infty} P_{B_j} {\pi}_j  \right|
	 % & \leq &  \left| \lim\limits_{t \to \infty }  \sum_{j> \bar{g}}^\infty \frac{N_{B_j}^\mue (t)}{N_A^\mue (t)} - \sum_{j > \bar{g}}^{\infty} P_{B_j} {\pi}_j  \right| \nonumber \\
	%
	&\leq&   \left| \lim\limits_{t \to \infty }  \sum_{j> \bar{g}}^\infty  \frac{N_{B_j}^\mue (t)}{N_A^\mue (t)}  \right| + \left| \sum_{j > \bar{g}}^{\infty} P_{B_j} {\pi}_j  \right| \nonumber \\
	&\leq&   \lim\limits_{t \to \infty }    \frac{\sum_{j> \bar{g}}^\infty N_{A_j}^\mue (t)}{N_A^\mue (t)}  + \sum_{j > \bar{g}}^{\infty} P_{B_j} {\pi}_j  \nonumber \\
	&\stackrel{a}{\leq}&   P \left( Q_\t^\mue > \bar{g}\right) +  \sum_{j > \bar{g}}^{\infty} P_{B_j} {\pi}_j   \nonumber \nonumber\\
	&\leq&    \sum_{j  > \bar{g} } \left( {\pi}^\mue_j -  {\pi}_j  \right) + 2 \sum_{j > \bar{g}}^{\infty}  {\pi}_j  \nonumber \\
	&\leq&    \sum_{j  > \bar{g} } j \left( {\pi}^\mue_j -  {\pi}_j  \right) + 2 \sum_{j > \bar{g}}^{\infty}  {\pi}_j.  \label{Eqn_Final_difference_eager}
	\end{eqnarray} In the above, the inequality `a'  follows from PASTA (Poisson Arrivals See Time Averages). Further, for the limit system
	$$
	\sum_{j > g}  {\pi}_j   \to 0 \mbox{ as } g \to \infty .
	$$
	Therefore, for every $\delta >0$, there exists $\bar{g}$, such that for all $j \geq \bar{g}$,
	\begin{eqnarray}
	\label{Eqn_Pi_Sum_Bound}
	\sum_{j > g}  {\pi}_j   < \delta.
	\end{eqnarray}
	Fix this ${\bar g}$.  From Theorem \ref{Thm_PB_for_one_subpolicy}  we have, 
	$$
	\lim\limits_{t \to \infty }  \sum_{j=0}^{\bar{g}} \frac{N_{B_j}^\mue (t)}{N_A^\mue (t)} \stackrel{\mue \to \infty}{\longrightarrow} \sum_{j = 0}^{{\bar{g}}} P_{B_j} {\pi}_j,
	$$ 
	and hence   there exists a $\bar{\mu_1}$, such that 
	\begin{eqnarray}
	\label{Eqn_Finite_Sum_Bound}
	\left| \lim\limits_{t \to \infty }  \sum_{j=0}^{\bar{g}} \frac{N_{B_j}^\mue (t)}{N_A^\mue (t)} - \sum_{j = 0}^{{\bar{g}}} P_{B_j} {\pi}_j \right| < \delta, \ \ \mbox{for all $\mue > \bar{\mu_1}$}.
	\end{eqnarray}
	%
%	Now, for every $\delta$ there exists a $\bar{g}$ given by (\ref{Eqn_Pi_Sum_Bound}). For this $\bar{g}$ choose a $\bar{\mu_1}$, such that (\ref{Eqn_Finite_Sum_Bound}) holds.
 	% for all $\mue > \bar{\mu_2}$ first and last term is less than $\delta$.
	%
%	\textbf{Aim} is to show that, for $\bar{g}$ found above, there exists an $\bar{\mu_2}$, such that for all $\mue > \bar{\mu_2}$: 
%	$$
%	\sum_{j= 0} ^{\infty } j \left( \tilde{\pi}^\mue_j -  \tilde{\pi}_j  \right) < \delta
%	$$
%	Consider, the following limit,
%	\begin{eqnarray}
%	\nonumber
%	\lim_{\mue \to \infty } \sum_{i = 0}^\infty i \tilde{\pi}_i^\mue = \lim_{\mue \to \infty } \sum_{i = 0}^\infty i \frac{p_0^\mue p_2^\mue \cdots p_{i-1}^\mue }{q_1^\mue q_2^\mue \cdots q_i^\mue} \tilde{\pi}_0^\mue.
%	\end{eqnarray}
	%
%	As for all $i$ and $\mue > \bar{\mu_1}$ 
%	$$
%	\frac{p_i^\mue}{q_i^\mue} < \epsilon + \bar{\rho}.
%	$$
%	Thus, for all $\mue > \bar{\mu_1}$ and $i$,
%	$$
%	\bigg|  i \frac{p_0^{\mu_\epsilon} p_1^{\mu_\epsilon} \cdots p_{i-1}^{\mu_\epsilon}}{q_1^{\mu_\epsilon} q_2^{\mu_\epsilon} \cdots q_i^{\mu_\epsilon}} \pi_0^\mue \bigg| < i \left(2 + \epsilon \right) \left( \epsilon + \bar{\rho} \right) ^{i-1}.
%	$$
%	Consider the  summation on RHS of above,
%	%
%	\begin{eqnarray*}
%		\sum_{i = 1}^{\infty} i \left(2 + \epsilon \right) \left( \epsilon + \bar{\rho} \right) ^{i-1} &=& \left(2 + \epsilon \right) \sum_{i = 1}^{\infty} i \left( \epsilon + \bar{\rho} \right) ^{i-1} = \frac{2 + \epsilon }{\left(  1 - \left(\epsilon + \bar{\rho}\right)   \right)^2} < \infty.
%	\end{eqnarray*}
 Further from Theorem~\ref{Thm_tau},  
	$$
	\lim_{\mue \to \infty } \sum_{i = 0}^\infty i {\pi}_i^\mue 
	%=  \sum_{i = 0}^\infty i \lim_{\mue \to %\infty } \pi_i^\mue
	 = \sum_{i = 0}^\infty i {\pi}_i,
	$$ and thus   there exists a $\bar{\mu_2}$, such that for all $\mue \geq \bar{\mu_2}$,
	\begin{eqnarray}
	\label{Eqn_Sum_ipi}
	\sum_{j = 0}^\infty i \left( {\pi}_j^\mue - {\pi}_j \right) < \delta.  
	\end{eqnarray}
	Choose, $\bar{\mu} = \max \left(\mu_1, \mu_2\right)$. First using, (\ref{Eqn_Pi_Sum_Bound}) and (\ref{Eqn_Sum_ipi}) in (\ref{Eqn_Final_difference_eager}) and then  (\ref{Eqn_Final_difference_eager}) and (\ref{Eqn_Finite_Sum_Bound}) in (\ref{Eqn_Final_Diff}), we get, for every $\delta > 0 $, there exists $\bar{\mu}$ such that,
	\begin{eqnarray*}
		\hspace{20mm}
		\left| \lim\limits_{t \to \infty } \frac{N_B^\mue (t)}{N_A^\mue (t)} - \sum_{j = 0}^{\infty} P_{B_j} {\pi}_j \right| < \delta + \delta + 2 \delta = 4 \delta \ \ \mbox{for all} \ \ \mue \geq \bar{\mu}.
		\hspace{20mm}  \mbox{ \eop}
	\end{eqnarray*}
%	The above is true for an arbitrary $\delta > 0$, thus
%	$$
%	\lim\limits_{t \to \infty } \frac{N_B^\mue (t)}{N_A^\mue (t)}  \stackrel{\mu_\i \to \infty}{ \to}   \sum_{j = 0}^{\infty} P_{B_j} {\pi}_j \mbox{ almost surely. }
%	$$
%	Hence, the proof of the theorem.
%\eop
%
%
%\section{Proof of Theorem~3}
%\label{sec:appendix_without_mdp}
%\onecolumn
%\newpage

%\input{new_proof}

\begin{thm}
	\label{Thm_PB_for_one_subpolicy}
	Fix a sub-policy $j$, then the time  limit  $ \lim_{t \to \infty} \frac{N_{B_j}^\mue (t)}{N_A^\mue (t)} $  exists almost surely for all $\mu_\i \ge {\bar \mu}$, where    $ {\bar \mu}$ is given in Lemma \ref{lemma:TauStability}. Further we have
	\begin{eqnarray*}
	\lim\limits_{\mue \to \infty } \lim\limits_{t \to \infty }   \frac{N_{B_j}^\mue (t)}{N_A^\mue (t)} = P_{B_j} {\pi}_j \mbox{ almost surely. }
	 \end{eqnarray*}
\end{thm}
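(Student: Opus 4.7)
My plan is to decompose $\frac{N_{B_j}^\mue(t)}{N_A^\mue(t)} = \frac{N_{B_j}^\mue(t)/t}{N_A^\mue(t)/t}$, take $t\to\infty$ at fixed $\mue$ by renewal reward, and then pass $\mue\to\infty$ using the SFJ time scaling of the eager system. Fix $\mue \ge \bar\mu$, so that the $\t$-occupancy is positive recurrent by Lemma~\ref{lemma:TauStability}. By the SLLN for the Poisson $\i$-arrivals, $N_A^\mue(t)/t\to\lambda_\i$ almost surely. For the numerator, successive returns of the $\t$-occupancy to state $j$ form a renewal process; assumption \textbf{A}.1 flushes the eager state at every $\t$-transition, which makes the cycles i.i.d.\ and, in particular, makes the per-cycle count of eager blocks accrued during the state-$j$ portion of the cycle i.i.d.\ across cycles. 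Writing $V_j^\mue$ for the duration of a single visit to state $j$, $R_j^\mue$ for the inter-return time, and $B_j^\mue$ for the block count over one visit, renewal reward gives $N_{B_j}^\mue(t)/t \to E[B_j^\mue]/E[R_j^\mue]$ a.s., and hence, using $\pi_j^\mue = E[V_j^\mue]/E[R_j^\mue]$,
$$\lim_{t\to\infty} \frac{N_{B_j}^\mue(t)}{N_A^\mue(t)} \;=\; \pi_j^\mue \cdot \frac{E[B_j^\mue]}{\lambda_\i\, E[V_j^\mue]} \quad\text{a.s.},$$
establishing existence of the time limit.

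For the SFJ limit, conditional on $V_j^\mue = v$, assumption \textbf{A}.1 makes the $\i$-system during the visit a fresh $\t$-static run of sub-policy $j$, and the sample-path coupling of Appendix~\ref{Appendix_sample} identifies this with the scale-$1$ standalone run of sub-policy $j$ over intrinsic time $\mue v$. Let $M_j(\tau)$ denote the expected number of blocks of the scale-$1$ run over $[0,\tau]$ starting empty. Ergodicity of the (positive recurrent, by \textbf{A}.3--4) standalone sub-policy-$j$ chain gives $M_j(\tau)/(\rho_\i \tau) \to P_{B_j}$ as $\tau\to\infty$, so with $h^\mue(v) := M_j(\mue v)/(\rho_\i \mue v) \in [0,1]$ one has $h^\mue(v) \to P_{B_j}$ pointwise for $v>0$, and
$$\frac{E[B_j^\mue]}{\lambda_\i\, E[V_j^\mue]} \;=\; \frac{E[h^\mue(V_j^\mue)\, V_j^\mue]}{E[V_j^\mue]}.$$
The visit duration $V_j^\mue$ is stochastically dominated by $A_\t \sim \mathrm{Exp}(\lambda_\t)$ (visits end at the first $\t$-transition, no later than the next $\t$-arrival), yielding uniform integrability; the analysis behind Lemma~\ref{Lem_TP_Convergence} further gives $V_j^\mue \Rightarrow V_j^\infty$ (exponential of rate $\lambda_\t + \mu_\t \nu_j$ for $j\ge 1$, of rate $\lambda_\t$ for $j=0$) with $E[V_j^\mue]\to E[V_j^\infty] > 0$. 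Splitting $E[h^\mue(V_j^\mue) V_j^\mue]$ at a threshold $\delta > 0$, the $\{V_j^\mue \le \delta\}$ contribution is at most $\delta$, while on $\{V_j^\mue > \delta\}$ the convergence $h^\mue \to P_{B_j}$ is uniform in $v$. Letting first $\mue\to\infty$ and then $\delta\to 0$ gives $E[B_j^\mue]/(\lambda_\i E[V_j^\mue]) \to P_{B_j}$, and together with $\pi_j^\mue \to \pi_j$ from Theorem~\ref{Thm_Lim_SD_Conv} this yields the claimed $P_{B_j}\pi_j$.

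\emph{Main obstacle.} The renewal-reward step and the per-visit identification are routine given \textbf{A}.1, Lemma~\ref{lemma:TauStability}, and the SFJ coupling of Appendix~\ref{Appendix_sample}. The substantive point is the interchange of limit and expectation over $V_j^\mue$ in the SFJ limit: the visit-duration law itself depends on $\mue$, so convergence of $h^\mue(V_j^\mue)\, V_j^\mue$ must be driven by (i) the stochastic domination $V_j^\mue \le A_\t$, which supplies uniform integrability, together with (ii) the pointwise convergence of $h^\mue$ to $P_{B_j}$ on $(0,\infty)$. The small-visit regime, where $h^\mue$ need not be close to $P_{B_j}$, is controlled by the truncation argument above, and is the single genuinely delicate step.
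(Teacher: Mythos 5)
Your overall architecture differs from the paper's: you regenerate at entries of the $\t$-occupancy into state $j$ and treat each visit as one renewal reward, whereas the paper decomposes the time spent in state $j$ into full $\i$-busy cycles $\Lambda_j(t)$, residual partial cycles $\Psi_j(t)$, and the flush counts $N_{\partial_j}(t)$, applying the renewal reward theorem at the level of the (uninterrupted) $\i$-busy cycles. Your first step (existence of the time limit, and the identity $\lim_t N_{B_j}^\mue(t)/N_A^\mue(t) = \pi_j^\mue\, E[B_j^\mue]/(\lambda_\i E[V_j^\mue])$) is sound, as is the truncation/uniform-integrability argument you flag as delicate. The genuine gap is elsewhere.

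The step that fails is the claim that, conditional on $V_j^\mue = v$, the $\i$-system during the visit is a fresh \emph{unconditioned} $\t$-static run of sub-policy $j$, so that $E[B_j^\mue] = E[M_j(\mue V_j^\mue)]$ with $M_j$ the unconditional mean block count of the standalone run. For $j \ge 1$ the visit ends at $V_j^\mue = \min\bigl(A_\t,\ \Upsilon_j^\mue(B_\t)\bigr)$, and $\Upsilon_j^\mue(B_\t)$ is a functional of the eager sample path (through the unused capacity $\Omega_j^\mue(\cdot)$). Conditioning on the visit length therefore biases the eager path: sample paths on which the eager system is busier both produce more blocks per unit time and delay the tolerant departure, so $V_j^\mue$ and the block-counting process are positively correlated and the factorization $E[B_j^\mue \mid V_j^\mue = v] = M_j(\mue v)$ does not hold at finite $\mue$. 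This is precisely the correlation the paper spends most of its proof controlling: it restricts the renewal reward computation to $\i$-busy cycles not interrupted by a $\t$-transition, makes the conditioning explicit through the indicator ${\mathcal I}^\mue = \indc_{\{{\mathcal B}_j^\mue < A_\t\} \cap \{\S_j^\mue < B_\t\}}$, and then uses ${\mathcal I}^\mue \to 1$ a.s.\ together with dominated convergence (domination supplied by {\bf A}.4) to show the conditional ratios converge to the unconditional ones, while Lemma~\ref{Lemma_Psij} kills the residual-cycle contribution. Your argument needs an analogous de-conditioning step --- the bias does vanish as $\mue \to \infty$ because $\Omega_j^\mue(t) \to \nu_j t$ uniformly, but that must be proved, not assumed. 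A secondary omission: the eager customers flushed at $\t$-transitions under {\bf A}.1 (the paper's $N_{\partial_j}(t)$, bounded via ${\mathcal K}$ and $M(t)/t \le 2\lambda_\t$) are not accounted for in your per-visit block count; this is easily repaired but should be stated.
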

\begin{proof}
For the purpose of almost sure comparison we construct the
$\i$-process influenced by dynamic decisions of the scheduling policy
depending upon the $\t$-state in the following manner. This procedure
is the natural extension of the procedure used in \cite{ANOR}, which is also summarized in Appendix \ref{sec:SFJ_scale_details}.  % For a fix $\t$ state, say $j$,

\begin{itemize}
\item For every  sub-policy $j$ construct one sample path of $\epsilon$-evolution (basically sequence of $\i$-inter-arrival times, and $\i$-service times)  that lasts  for complete time axis   for $\mu_\i  = 1$ and $\lambda_\i = \rho_\i$ and 
define the process for any  general $\mu_\i < \infty$   and for the same sub-policy,  using the constructed sample path as explained  in  Appendix~\ref{sec:SFJ_scale_details}. 
%This allows almost sure comparison. 
	\item Say we begin with the system in $\t$-state equal  to $j$  and $\i$-state equal to 0 (by assumption {\bf A}.1, $\i$-state is reset to 0   at any  $\t$-transition  epoch).  Then start  $\i$-process of the system  with the initial part of $\i$-process corresponding to  sub-policy $j$. % scheme to be used initially based on the $\t$-state $j$;
	
	\item We refer the time duration from the start of an $\i$-idle period
	and the end of the consecutive $\epsilon$-busy period as one
	$\epsilon$-cycle (note that this $\i$-cycle is stochastically same as the $\i$-busy cycle defined in Appendix~\ref{sec:SFJ_scale_details}).  The $\i$-process of the system starts with first full $\i$-cycle corresponding to sub-policy $j$ and will use some initial number of full cycles (the number can be greater than or equal to zero) and  it may also  use partially  the  cycle  following those full cycles (till the instance of first $\t$-transition).  
	
%	All $\i$-process start with a full $\i$-cycle,
%	which implies the initial $\i$-state is 0.
	
	\item  Once the sub-policy has to switch, due to $\t$-transitions (to state $j-1$ or $j+1$), start using a required part of the $\i$-process corresponding to the new sub-policy. Always start from the first new $\i$-cycle that is previously not used. This process is appropriate in view of assumption {\bf A.}1:  the existing $\i$-customers are dropped at every $\t$-transition and the $\i$-process starts afresh.  
	
	\item Note that   the end part of any interrupted $\i$-cycle  (a partial $\i$-cycle), interrupted by a  $\t$-transition,  is not used for any further construction.   
	Thus this full $\i$-cycle can be used to upper bound the partial   $\i$-cycle just before the $\t$-transition,    when required. Further, this upper bound  (cycle) is independent of the other $\i$-cycles.

\end{itemize}  
Recall 
by the way of construction, when full $\i$-cycle is used (notations as in Appendix \ref{Appendix_sample} and as in Theorem \ref{Thm_PB}):
\begin{eqnarray*} 
N_A^\mue ({\mathcal B}_j^\mue  ) = N_A^1 ( {\mathcal B}_j^1 )   \   \mbox{and}   \    N_{B_j}^\mue ({\mathcal B}_j^\mue  ) = N_{B_j}^1 ( {\mathcal B}_j^1 )  \mbox{ etc., sample path wise. }
\end{eqnarray*}
%where $N_A^\mue (\cdot)$ represents the number of $\i$-arrivals in the specified time interval and  $N_{B_j}^\mue(\cdot)$   represents the number of  drops or the $\i$-customers that exit the system without service in the specified time interval when sub-policy $j$ is used. % in $\t$ static manner.  
%Let   $N_{B_j}^{\mue}(t)$, $N_A^\mue(t)$   briefly  represent $N_{B_j}^{\mue}([0,t])$  and  $N_A^\mue([0, t])$ respectively. 
%

We consider positive recurrent systems, i.e., systems with  $\mu_\i \ge {\bar \mu}$, the threshold beyond which the system is stable as given by Lemma \ref{lemma:TauStability}. 
For such systems,
all the time limits mentioned in this proof exist   and  this existence is proved  along with the rest of the arguments, in the following. 

Refer   
the  time interval from start of state $j$ and  before the forward or backward transition (to state $j-1$ or $j+1$), by $\tau$-state process $X_\tau (\cdot)$,  as   
a  $j$-cycle. 
%We    refer the time duration from  the start of an $\i$-idle  period and the end of the consecutive  $\epsilon$-busy  period  as one  $\epsilon$-cycle.
Let $\Lambda_{j}(t)$ represent the total time consisting of full $\epsilon$-cycles during time  period $[0, t]$, such that the $\tau$-state is $j$. 
This is composed  of portions of all  $j$-cycles that intersect with $[0, t]$, and,  obtained after removing the partial $\epsilon$-cycles at the end of each $\t$-state transition (occurring in those $j$-cycles). Note  by assumption {\bf A}.1 a new $\i$-cycle starts at every $\t$-transition epoch.  
For $ j \geq 1 $,  let 
$$
F_{j } (t): =  \int_0^t \indc_{\{ X_\tau (s) = j\}} ds \ \ \mbox{  }  
$$represent the  portion of time spent in $j$-cycles before $t$ and   define $\Psi_{j} (t) := F_{j} (t) - \Lambda_{j}(t)$. 
The portion $\Psi_{j} (t)$  is  basically made up of  the  residual   $\i$-cycles, that are ongoing  at  the end  of $\t$-state transitions in all $j$-cycles before $t$.
Thus\footnote{All the limits mentioned below exist for all $\mu_\i \ge {\bar \mu}$  (the threshold of Lemma \ref{lemma:TauStability}), because  the process for any such  $\mu_\i \ge {\bar \mu}$  is positive recurrent.
This is because all the time averages  can be written as the time averages in an appropriate renewal process and renewal reward theorem (RRT) can be applied, thanks to assumption {\bf A}.4. 
For example time limit of  $N_{\partial_j} (t) / { t}  $  exists almost surely,   using  RRT with 
 $j$-cycles as renewal periods and with the reward in a cycle as the expected number of $\i$-customers dropped  at $\t$-transition epochs with $\t$-state  equal to $j$. 
%Or one can use Law of Large numbers and then the dominated convergence theorem. 
%For example,   the time average of number of drops can be written as,
%$$
%\frac{  N_{\partial} (t)  }{ t}   =   \frac{\sum_{j'}  \sum_ {i=1}^{M_{j'} (t)} N^{j'}_i  }{ t }  = \sum_{j'} \frac { \sum_ {i=1}^{M_{j'} (t)} N^{j'}_i  }{ M^{j'} (t)  }  \frac{M^{j'} (t) }{t}  \left ( \mbox{and note this term }  \le K \frac{M(t)}{t} \right ),
%$$where $N^{j'}_i$ is the number of drops at $i$-th transition epoch when $j'$-sub policy is used and  $M_{j'}(t)$ number of  $\t$-transitions while using  $j'$-sub policy etc. 
%Now the term for each $j'$ converges (a.s.) to an appropriate limit; $ \frac { \sum_ {i=1}^{M_{j'} (t)} N^{j'}_i  }{ M^{j'} (t)  }  $  converges  by Law of large numbers and is upper bound pre-limit by $K$, and $ \frac{M^{j'} (t) }{t}$  converges because the process is recurrent and the final convergence is true because $\sum_{j'}  \frac{M^{j'} (t) }{t} = \frac{M(t) }{ t}$ the total number of transitions converges as $t \to \infty$ almost surely. 
},
\begin{eqnarray}
\lim _{t \to \infty } \frac{N_{B_j}^{\mue}(t)}{N_A^{\mue}(t)}
= \lim _{t \to \infty } \frac{  N_{B_j}^{\mue}(\Lambda_{j}(t)) + N_{B_j}^{\mue}( \Psi_{j} (t))  + N_{\partial_j} (t) }{N_A^{\mue}(t)} ,
\label{Eqn_split_NB_By_NA}
\end{eqnarray}where $N_{\partial_j} (t)$ is the number of $\i$-jobs dropped at  $\t$-transition epochs with $\t$-state equal to $j$ (see   assumption {\bf A}.1). 
We evaluate each of these components one after the other.  
{Note that we are considering the case: $j \geq 1$. The proof follows in exactly the similar way even for $j = 0$ state, but would need  obvious changes ($\t$-job sizes should not be considered).}
 
{\bf First fraction   in  RHS  (right hand side) of  (\ref{Eqn_split_NB_By_NA})}. By memoryless property of Poisson process representing the $\epsilon$-arrival process, 
\begin{eqnarray}
\label{Eqn_split_1}
\frac{ N_{B_j}^{\mue}(\Lambda_{j}(t))}{N_A^{\mue}(t)}  = \frac{ N_{B_j}^{\mue}(\Lambda_{j}(t))}{N_{A}^{\mue}(\Lambda_{j}(t))} \
\frac{ N_{A}^{\mue}(\Lambda_{j}(t))}{ \Lambda_{j}(t)}  \   \frac{ \Lambda_{j}(t)}{t}  \frac{t}{N_A^{\mue}(t)}.
\end{eqnarray}
%By Lemma  \ref{Lemma_Ren} and memoryless property, $\Lambda_{j} (t) \convas \infty \ $ as $t \to \infty$.
On keen observation, it is easy to realize that $\Lambda_{j}(t) $ is made up of i.i.d. $\epsilon$-cycles, and hence is a renewal process. Thus one can analyze it using Renewal Reward theorem (RRT). These $\i$-cycles are the special cycles,   in that these\footnote{By memoryless property of exponential service times and arrival processes.}  are the   
cycles in which the $\tau$-state has not changed, i.e.,  the $\tau$-service is not completed and during which a $\tau$-arrival did not occur.  Let 
$U$ represent      one such  special $\i$-cycle, i.e.:
\begin{eqnarray}
\label{Eqn_special_cycle}
U :=  {\mathcal  B}_j^{\mu_\epsilon}  \Big |   
{\mathcal  B}_j^{\mu_\epsilon}  <  A_\tau  \ \cap \ \S^{\mu_\epsilon}_j  <   B_\tau,  
\end{eqnarray} 
where 
${\mathcal  B}_j^{\mu_\epsilon} = {\mathcal  B}_j^1 / \mu_\epsilon$ represents  (see (\ref{Eqn_Bmue})) 
an $\i$-cycle,    
$
\S^{\mu_\epsilon}_j  =  \S^1_j / {\mu_\epsilon} 
$ (see (\ref{Eqn_Smue})) represents the server time available to the $\tau$-customers during this period and  $A_\tau$, $ B_\tau $  respectively represent the inter-arrival time before next $\tau$-arrival and the service time of current $\tau$-job (the residual ones equal the actual ones by memoryless property).  Let ${\mathcal I}^{\mu_\epsilon}$ represent the following indicator random variable:
\begin{eqnarray}
\label{Eqn_indicators}
{\mathcal I}^{\mu_\epsilon}   =  \indc_{ \{ {\mathcal  B}_j^{\mu_\epsilon}  <  A_\tau  \ \cap \ \S^{\mu_\epsilon}_j  <   B_\tau \} }  = 
\indc_{ \{  {\mathcal  B}^1_j  <{\mu_\epsilon}    A_\tau  \ \cap \ \S^1_j   <  {\mu_\epsilon}   B_\tau  \} }  .
\end{eqnarray}
It is clear that the expected length of the renewal cycle (by   conditioning  on $({\mathcal B}_j^\mue ,     \S^\mue_j )$ and using (\ref{Eqn_Compare_mu_epsilon_schemes}), (\ref{Eqn_Bmue})), 
\beq
E[U] = \frac { E[ {\mathcal  B}_j^{\mu_\epsilon}    {\mathcal I}^{\mu_\epsilon}   ]}{ E[ {\mathcal I}^{\mu_\epsilon}  ] } >   E[ {\mathcal  B}_j^{\mu_\epsilon}  {\mathcal I}^{\mu_\epsilon}   ] 
&=& E \left [ {\mathcal  B}_j^{\mu_\epsilon}    (1- \exp(-\lambda_\t {\mathcal B}_j^\mue ) (1- \exp (- \mu_\tau  \S_j^\mue ) ) \right ] \\
& = & \frac{1}{\mue}  E \left [ {\mathcal  B}_j^1      (1- \exp(-\lambda_\t {\mathcal B}_j^1 / \mue ) (1- \exp (- \mu_\tau  \S_j^1/ \mue ) ) \right ]  > 0
\eeq  for any given $j$ and for any $\mue < \infty$. {\bf A}.4, we also have  $E[U]  < \infty.$
Thus by resorting to RRT two times we have,  as $t \to \infty$,
$$
\frac{ N_{B_j}^{\mue}(\Lambda_{j}(t))}{N_{A}^{\mue}(\Lambda_{j}(t))}  
= \frac{ N_{B_j}^{\mue}(\Lambda_{j}(t))}{ \Lambda_{j}(t))  }   \frac{  \Lambda_{j}(t))  }  {N_{A}^{\mue}(\Lambda_{j}(t))}  \to  
\frac{ E[N_{B_j}^{\mu_\epsilon} (U) ] }{ E[  U  ] } \frac{ E[    U     ] }{ E[N_A^{\mu_\epsilon} (U) ] } \   a.s.
$$ 
Therefore,
\beq
\lim_{t \to \infty} \frac{ N_{B_j}^{\mue}(\Lambda_{j}(t))}{N_{A}^{\mue}(\Lambda_{j}(t))}  
&=& \frac{ E \big [   N_{B_j}^{\mu_\epsilon} ({\mathcal  B}_j^{\mu_\epsilon}  ) \ \big |   {\mathcal I}^{\mu_\epsilon}  \big ]  }
{  E \big [N_A^{\mu_\epsilon} ({\mathcal  B}_j^{\mu_\epsilon}  )  \  \big |  {\mathcal I}^{\mu_\epsilon}   \big ] }   \\
&=& 
\frac{ E \big [   N_{B_j}^{\mu_\epsilon} ({\mathcal  B}_j^{\mu_\epsilon}  )    \     {\mathcal I}^{\mu_\epsilon}   \big ]  }
{  E \big [N_A^{\mu_\epsilon} ({\mathcal  B}_j^{\mu_\epsilon}  )    {\mathcal I}^{\mu_\epsilon} \big ] }   \\
&=& 
\frac{ E \big [   N_{B_j}^1 ({\mathcal  B}_j^1 )    \     {\mathcal I}^{\mu_\epsilon}   \big ]  }
{  E \big [N_A^1  ({\mathcal  B}_j^1   )    {\mathcal I}^{\mu_\epsilon} \big ] } \mbox { a.s. } 
\eeq
The last equality follows   by way of construction (details in  (\ref{Eqn_Compare_mu_epsilon_schemes}) and more details are in \cite{ANOR}). 
It is clear from (\ref{Eqn_indicators}) that the indicators  ${\mathcal I}^{\mu_\epsilon} \to 1 $ almost surely as $\mu_\epsilon \to \infty$,   
$$ N_{B_j}^1  ({\mathcal  B}_j^{1}  )    \     {\mathcal I}^{\mu_\epsilon}  \le N_{B_j}^1 ({\mathcal  B}_j^{1}  )  \  a.s.  \mbox{ for all }  \mu_\epsilon,
$$   and $ N_{B_j}^1  ({\mathcal  B}_j^{1}  )$ is integrable (by {\bf A}.4).  Similar conclusions follow for number of arrivals $  N_A^1 $.  Thus by Dominated convergence theorem   as $\mu_\epsilon \to \infty$,
\beq
\lim_{t \to \infty} \frac{ N_{B_j}^{\mue}(\Lambda_{j}(t))}{N_{A}^{\mue}(\Lambda_{j}(t))}   \to  
\frac{ E \big [   N_{B_j}^1 ({\mathcal  B}_j^1 )    \      \big ]  }
{  E \big [N_A^1  ({\mathcal  B}_j^1   )     \big ] }   =  {P_B}_j  \mbox { a.s. }, \eeq
the last equality follows by applying RRT to a system that implements $j$-th sub-policy for $\i$-class   in $\t$-static manner.   

Now we consider the second and last/fourth term of the RHS of (\ref{Eqn_split_1}).  Using similar logic  (RRT, DCT) as before
and by the way of construction as in equation (\ref{Eqn_Compare_mu_epsilon_schemes}):
\beq
\lim_{t \to \infty } \frac{ N_{A}^{\mue}(\Lambda_{j}(t))}{ \Lambda_{j}(t)}  \     \frac{t}{N_A^{\epsilon}(t)} 
&=& 
\frac{  \frac{ E\left[  N_{A}^{1}  ({\mathcal  B}_j^{1}  ) {\mathcal I}^{\mu_\epsilon} \right] } {E\left[ {\mathcal I}^{\mu_\epsilon}\right] }  }{ \frac{ E\left[    \frac {  { \mathcal  B}_j^{1}  } { \mu_\epsilon  }  {\mathcal I}^{\mu_\epsilon}     \right]  }  {E \left[ {\mathcal I}^{\mu_\epsilon}\right] }}  \frac{1}{\lambda_\epsilon} \\
&=&
\frac{1}{ \rho_\epsilon } \frac{ E[  N_{A}^{1}  ({\mathcal  B}_j^{1}  ) ] }{ E[     { \mathcal  B}_j^{1}       {\mathcal I}^{\mu_\epsilon}     ]  }  \\
&
\to &  
\frac{1}{ \rho_\epsilon } \frac{ E[  N_{A}^{1}  ({\mathcal  B}_j^{1}  ) ] } { E[       { \mathcal  B}_j^{1}         ]  }  \mbox{, as } \mu_\epsilon \to \infty \\
&& =  \frac{1}{ \rho_\epsilon }  \rho_\epsilon = 1 \mbox { a.s. }
\eeq
%\eeq
%}

The last equality follows because the rate of arrivals in $\mu_\epsilon = 1$ system equals $\rho_\epsilon$. 

Consider the remaining term (third term) of the RHS of (\ref{Eqn_split_1}). 
The fraction of time spent by $\tau$-customer in $j$-cycle is $ F_j (t) = \Lambda_{j} (t) + \Psi
_j(t)$, therefore
\begin{eqnarray}
\label{Eqn_Lamda_t_act}
\lim_{t \to \infty} \frac{ \Lambda_j(t)}{t} = \frac{F_j (t) -  \Psi_j(t) }{t}.
%= \frac{F_{<L}^{\tau}(t) }{t} - \frac{2\Theta * Number \hspace{0.2cm} of \hspace{0.2cm} times \hspace{0.2cm} rates\hspace{0.2cm} change\hspace{0.2cm} in\hspace{0.2cm} time\hspace{0.2cm} t}{t}  \nonumber
\end{eqnarray}

For all $\mu_\epsilon \ge {\bar \mu}$ we have:  
\begin{eqnarray*}
	\lim_{t \to \infty} \frac{F_j(t)}{t} =    \pi_{j}^{\mu_\epsilon}   \ \ \ \mbox{ a.s.}  ,  
\end{eqnarray*}
where $\pi_{j}^{\mu_\epsilon} $ is the stationary probability that $\t$-queue is in state $j$ with  $\mu_\epsilon$ system. 
By Lemma \ref{Lemma_Psij},  as $\mue \to \infty$ 
$$
\lim_{t \to \infty}   \frac{\Psi_j(t) }{t}  \to 0. 
$$
Therefore, from Equation (\ref{Eqn_Lamda_t_act}),
\begin{eqnarray*}
	\lim_{\mu_\epsilon \to \infty} \lim_{t \to \infty} \frac{ \Lambda_j(t)}{t} = \lim_{\mu_\epsilon \to \infty} \pi_{j}^{\mu_\epsilon} = \pi_{j}.
\end{eqnarray*}

Overall, we get from Equation (\ref{Eqn_split_1}),
\begin{eqnarray*}
	\lim_{\mu_\epsilon \to \infty} \lim_{t \to \infty} \frac{ N_{B_j}^{\mue}(\Lambda_{j}(t))}{N_A^{\mue}(t)} = {P_B}_j \pi_{j}.
\end{eqnarray*}

{\bf Second fraction of Equation (\ref{Eqn_split_NB_By_NA}):}
One can split the second fraction\footnote{The time limit of the LHS exists, because time limit of $\Psi_j(t)/t$ exists as discussed in the proof of Lemma \ref{Lemma_Psij}. } as below:
\begin{eqnarray}
\label{Eqn_split_2nd_fraction}
\lim_{t \to \infty} \frac{N_{B_j}^{\mue}( \Psi_{j} (t))}{N_A^{\mue}(t)} \le 
\lim_{t \to \infty} \frac{N_{B_j}^{\mue}( {\tilde \Psi}_{j} (t))}{N_A^{\mue}(t)} 
= \lim_{t \to \infty} \frac{N_{B_j}^{\mue}( {\tilde \Psi}_{j} (t))}{{\tilde \Psi}_{j} (t)} \frac{{\tilde \Psi}_{j} (t)}{t} \frac{t}{N_A^{\mue}(t)}.
\end{eqnarray} 
%where ${\tilde \Psi}_{j} (t)$ is obtained by the upper bounding full cycles as in  Lemma \ref{Lemma_Psij}.
%
In the above  the end residual $\i$-cycles (which form $\Psi(t)$), belonging to the same $j$-group  are i.i.d., can by upper bounded by the corresponding full i.i.d. $\i$-cycles (which form  ${\tilde \Psi}_{j} (t)$ as explained  in  the proof of  Lemma \ref{Lemma_Psij}). 
We refer any    typical (upper bounding)
  full $\i$-cycle by ${\tilde \B}_j$.
Then
by  RRT, we  obtain the following for the first and last terms of the RHS of equation (\ref{Eqn_split_2nd_fraction}):  

\vspace{-4mm}
{\small\begin{eqnarray*}
		\lim_{t \to \infty} \frac{N_{B_j}^{\mue}( {\tilde \Psi}_{j} (t))}{{\tilde \Psi}_{j} (t)} \frac{t}{N_A^{\mue}(t)} &\le &
				\lim_{t \to \infty} \frac{N_{A}^{\mue}( {\tilde \Psi}_{j} (t))}{{\tilde \Psi}_{j} (t)} \frac{t}{N_A^{\mue}(t)}
	\ =  \   \frac{E\big(  N^{\mue}_A(\tilde{\B_j})  \big)}{E\big(\tilde{\B_j} \big)} \frac{1}{\lambda_\epsilon} \\  &=&  
		\frac{E \left[ N^{\mue}_A \big( \mathcal{ B}_j^{\mu_\epsilon} \big) \ ; \  \mathcal{ B}_j^{\mu_\epsilon} > A_\tau \cup \  \S_j^{\mu_\epsilon}  >   B_\tau  \right]}{E \left[ \mathcal{ B}_j^{\mu_\epsilon} ;  \mathcal{ B}_j^{\mu_\epsilon} > A_\tau \cup \  \S_j^{\mu_\epsilon}  >   B_\tau  \right] } \frac{1}{\lambda_\epsilon}\\
		&\leq & \frac{E \left[ N^{\mue}_A \big( \mathcal{ B}_j^{\mu_\epsilon} \big) \ ; \  \mathcal{ B}_j^{\mu_\epsilon} > A_\tau \right] + E \left[  N^{\mue}_A \big( \mathcal{ B}_j^{\mu_\epsilon} \big) \ ; \  \S_j^{\mu_\epsilon}  >   B_\tau  \right]}{E \bigg(  \mathcal{ B}_j^{\mu_\epsilon} ;   \mathcal{ B}_j^{\mu_\epsilon} > A_\tau \bigg) \lambda_\epsilon } .
\end{eqnarray*}}
By construction  we have:
$
N^{\mue}_A (   \mathcal{B}_j^{\mu_\epsilon} ) =  N^1_A ( {\mathcal B}_j^1)
$ (see (\ref{Eqn_Compare_mu_epsilon_schemes}))
and hence:
\begin{eqnarray*}
	\lim_{t \to \infty} \frac{N_{B_j}^{\mue}({\tilde  \Psi}_{j} (t))}{{\tilde \Psi}_{j} (t)} \frac{t}{N_A^{\mue}(t)}  \hspace{-2mm} &&
	\\
	&\leq&
	\frac{E \left[ N^1_A \big( \mathcal{B}_j^1\big) \ ; \  \mathcal B^{\mu_\epsilon}_j > A_\tau \right] + E \left[  N^1_A \big( \mathcal{B}_j^1\big) \ ; \  \S_j^{\mu_\epsilon}  >   B_\tau  \right]}{E \bigg(  \mathcal B_j^{\mu_\epsilon}; \mathcal B^{\mu_\epsilon}_j > A_\tau \bigg)}  \frac{1}{\lambda_\epsilon}\\
	&=& \frac{E \left[ N^1_A \big(  \mathcal{B}_j^1\big) \ ; \  \mathcal{B}_j^{\mu_\epsilon} > A_\tau \right] + E \left[  N^1_A \big(  \mathcal{B}_j^1\big) \ ; \  \S_j^{\mu_\epsilon}  >   B_\tau  \right]}{ \rho_\epsilon \ E \bigg(  \mathcal{ B}_j^{1}  ;  \mathcal B_j^{\mu_\epsilon} > A_\tau \bigg)} \\
	&=&  \frac{E \left[ N^1_A \big(  \mathcal{B}_j^1\big) \big( 1 - \exp( - \lambda_\tau  \frac{\mathcal B_j^1}{\mu_\epsilon}) \big) \right] + E \left[  N^1_A \big(  \mathcal{B}_j^1\big) \left ( 1 - \exp \left ( - \mu_\tau  \frac{\S_j^1}{\mu_\epsilon} \right ) \right )  \right]}{ \rho_\epsilon \ E \bigg[   \mathcal{ B}_j^{1}  \bigg( 1 - \exp \left ( - \lambda_\tau  \frac{\mathcal B_j^1}{\mu_\epsilon} \right ) \bigg ) \bigg] }.
\end{eqnarray*}
As $\mu_\epsilon \to \infty$, using the L'Hopitals rule
\begin{eqnarray*}
	\lim_{\mu_\epsilon \to \infty} \lim_{t \to \infty} \frac{N_{B_j}^{\mue}( {\tilde \Psi}_{j} (t))}{{\tilde \Psi}_{j} (t)} \frac{t}{N_A^{\mue}(t)} \hspace{-30mm} \\
	&\leq& \lim_{\mu_\epsilon \to \infty} 
	\frac{E \left[ N^1_A \big(  \mathcal{B}_j^1\big) \bigg(  \frac{\lambda_\tau \mathcal B_j^1}{\mu_\epsilon^2} \exp( - \lambda_\tau  \frac{\mathcal B_j^1}{\mu_\epsilon}) \bigg) \right] + E \left[  N^1_A \big( \mathcal{B}_j^1\big) \bigg( \frac{\mu_\tau \S_j^1}{\mu_\epsilon^2} \exp( - \mu_\tau  \frac{\S_j^1}{\mu_\epsilon}) \bigg)  \right]}{ \rho_\epsilon \ E  \left[   \frac{\lambda_\tau \left ( \mathcal B_j^1 \right )^2 }{\mu_\epsilon^2} \exp( - \lambda_\tau  \frac{\mathcal B_j^1}{\mu_\epsilon})  \right]}\\
	&=&\lim_{\mu_\epsilon \to \infty}
	\frac{E \left[ N^1_A \big(  \mathcal{B}_j^1\big) \bigg(  \lambda_\tau \mathcal B_j^1 \exp( - \lambda_\tau  \frac{\mathcal B_j^1}{\mu_\epsilon}) \bigg) \right] + E \left[  N^1_A \big( \mathcal{B}_j^1\big) \bigg( \mu_\tau \S_j^1 \exp( - \mu_\tau  \frac{\S_j^1}{\mu_\epsilon}) \bigg)  \right]}{ \rho_\epsilon \ E \left[ \lambda_\tau \left (\mathcal B_j^1 \right )^2 \exp( - \lambda_\tau  \frac{\mathcal B_j^1}{\mu_\epsilon}) \right]}\\
	&< & \infty,  
\end{eqnarray*}as the limit is  a finite constant, by {\bf A}.4 (note $\S_j^1 \le \mathcal{B}_j^1$ a.s., $ N^1_{B_j}  \le N^1_A$ etc).
Using above and equation (\ref{Eqn_split_2nd_fraction}) and Lemma \ref{Lemma_Psij}, we get,
\begin{eqnarray}
\lim_{\mu_\epsilon \to \infty} \lim_{t \to \infty} \frac{N_{B_j}^{\mue}( \Psi_{j} (t))}{N_A^{\mue}(t)} =  0 \mbox{ a.s. }
\end{eqnarray}
\kcmnt{Need to mention that all these moments exist for the $(\mue = 1)$ case.}

{\bf Third fraction of RHS of (\ref{Eqn_split_NB_By_NA}):} By assumption {\bf A}.3, one can serve (say) at maximum $\mathcal{ K}$
%\textcolor{red}{\textit{(we didn't define} $c_{min}$, \textit{so it should be}$ K := 1/c_{min}$)}
number of $\i$-customers at any time. Let $M(t)$ represent the total number of $\tau$-transitions during time $t$. Then one can upper bound the number of losses, $N_{\partial_j} (t)$, during $\t$-state transition  as: 
$$
N_{\partial_j} (t) \le  \sum^{M(t)}_{i=1} \mathcal{ K} = M(t) \mathcal{ K},
$$since one serves at maximum $\mathcal{ K} $ $\i$-customers at any time and hence the maximum drops per transition equals $\mathcal{ K}$.  As in the proof of Lemma \ref{Lemma_Psij},  $\lim_{t \to \infty } M(t) / t  \le 2\lambda_\t $. 
Thus 
\beq
\frac{N_{\partial_j} (t)}{N_A^{\mu_\epsilon} (t) }  &\le&   \mathcal{ K} \frac{M(t)}{ t }  \frac{t}{N_A^{\mu_\epsilon} (t)}  \mbox{ and thus } \\
\lim_{t \to \infty} \frac{N_{\partial_j} (t)}{N_A^{\mu_\epsilon} (t) }  & \le &  \frac{ 2\lambda_\t \mathcal{ K}  }{\lambda_\epsilon }  \to 0, \  \mbox{ as } \mu_\epsilon \to \infty, \mbox{ under SFJ limit}. 
\eeq
Combining the three fractions of  RHS of (\ref{Eqn_split_NB_By_NA}) we have the result:  $$
 \lim _{t \to \infty } \frac{N_{B_j}^{\mue}(t)}{N_A^{\mue}(t)} = \lim _{t \to \infty } \frac{ \left( N_{B_j}^{\mue}(\Lambda_{j}(t)) + N_{B_j}^{\mue}( \Psi_{j} (t)) \right) + N_{\partial_j} (t) }{N_A^{\mue}(t)}   \convas P_{B_j}  \pi_{j},  \mbox{ as } \mu_\epsilon \to \infty. 
$$
Hence the proof of the lemma.
	
\end{proof}

\begin{lemma}
	\label{Lemma_Psij}
	Fix any $j$. Then 
	$$
	\lim_{t \to \infty}   \frac{\Psi_j(t) }{t}   \le  \frac{ 2 \lambda_\tau  }{\mue}   
	\frac{E \left[ \mathcal B_j^1  \bigg(1 - \exp \big(-\lambda_\tau \frac{\mathcal B_j^1}{\mu_\epsilon} \big)\bigg) \ \right] + E \left[ \ \mathcal B_j^1  \bigg(1 - \exp \big( -\mu_\tau \frac{\S_j^1}{\mu_\epsilon} \big)\bigg) \ \right]}{E\left[ 1 - \exp \big(-\lambda\tau \frac{\mathcal B_j^1}{\mu_\epsilon} \big)\right]} .
	\mbox {  a.s.  }
	$$

	Further, as  $\mu_\epsilon \to \infty,$
	$$
	\lim_{t \to \infty}   \frac{\Psi_j(t) }{t}  \to 0 \mbox {  a.s.  }
	$$
\end{lemma}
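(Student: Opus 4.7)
The plan is to decompose $\Psi_j(t)$ as a sum of residual (interrupted) $\i$-cycles, one at the end of each completed $j$-cycle, and then bound both the rate at which such residuals appear and their expected length. By Assumption~\textbf{A}.1 and the sample-path construction in Appendix~\ref{Appendix_sample}, the residual $\i$-cycle terminating any given $j$-cycle is obtained by truncating a freshly started full cycle $\B_j^\mue$ of sub-policy $j$ at the first $\t$-transition epoch. Pathwise, the length of this residual is at most $\B_j^\mue$, and it is non-trivial only on the event
$\mathcal{E}^\mue := \{\B_j^\mue > A_\t\} \cup \{\S_j^\mue > B_\t\}$, where $A_\t$ is the next $\t$-inter-arrival time and $B_\t$ is the (residual) size of the current $\t$-job in service, both drawn fresh by memorylessness. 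Across distinct $j$-cycles the dominating triples $(\B_j^\mue,\S_j^\mue,A_\t,B_\t)$ are i.i.d.

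Let $M_j(t)$ denote the number of completed $j$-cycles in $[0,t]$ and let $M(t)\ge M_j(t)$ be the total number of $\t$-transitions in $[0,t]$. Since $\t$-arrivals form a Poisson process of rate $\lambda_\t$ and $\t$-departures (in the stable system of Lemma~\ref{lemma:TauStability}) also occur at long-run rate at most $\lambda_\t$, we have $\limsup_{t\to\infty} M(t)/t \le 2\lambda_\t$ a.s. Combined with the strong law of large numbers applied to the i.i.d. upper-bounding cycles, this gives
\[
\lim_{t\to\infty}\frac{\Psi_j(t)}{t} \;\le\; 2\lambda_\t \,\cdot\, \frac{\Exp{\B_j^\mue\,;\,\mathcal{E}^\mue}}{\prob{\mathcal{E}^\mue}}.
\]
Upper bounding the numerator via a union, $\Exp{\B_j^\mue;\mathcal{E}^\mue}\le \Exp{\B_j^\mue;\B_j^\mue>A_\t}+\Exp{\B_j^\mue;\S_j^\mue>B_\t}$, lower bounding the denominator by $\prob{\B_j^\mue>A_\t}$, substituting $\B_j^\mue=\B_j^1/\mue$ and $\S_j^\mue=\S_j^1/\mue$, and integrating out the exponentials $A_\t\sim\mathrm{Exp}(\lambda_\t)$ and $B_\t\sim\mathrm{Exp}(\mu_\t)$ yield exactly the closed-form bound stated in the lemma.

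For the second claim, I would use the Taylor expansion $1-e^{-x}=x+O(x^2)$ as $x\downarrow 0$: both the numerator and denominator of the ratio are $\Theta(1/\mue)$ (their leading coefficients are $\lambda_\t\Exp{(\B_j^1)^2}+\mu_\t\Exp{\B_j^1 \S_j^1}$ and $\lambda_\t\Exp{\B_j^1}$, respectively, all finite by Assumption~\textbf{A}.4), so the ratio remains bounded while the $2\lambda_\t/\mue$ prefactor drives the whole expression to $0$. Alternatively, L'H\^opital's rule may be invoked directly, mirroring the computation already used in the proof of Theorem~\ref{Thm_PB_for_one_subpolicy}. The main technical obstacle is the rigorous justification of the stochastic domination and the i.i.d. structure of the residual cycles, since the length of a $j$-cycle and the interrupting $\t$-event are in principle entangled; this is resolved by the fresh-construction convention of Appendix~\ref{Appendix_sample} (each new sub-policy invocation starts a new $\i$-cycle that has not been used before) together with the memorylessness of $A_\t$ and $B_\t$.
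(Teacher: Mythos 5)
Your proposal is correct and follows essentially the same route as the paper's proof: bound each residual cycle by the full interrupted $\i$-cycle conditioned on the event $\{\B_j^\mue > A_\t\} \cup \{\S_j^\mue > B_\t\}$, apply the LLN together with $M(t)/t \le 2\lambda_\t$, use the union bound on the numerator and the lower bound $\prob{\B_j^\mue > A_\t}$ on the denominator, and integrate out the exponentials to get the closed form. Your Taylor-expansion argument for the vanishing limit is just a rephrasing of the paper's L'H\^opital computation (which you also cite as an alternative), so there is no substantive difference.
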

{\bf Proof:} Let $M_j(t)$ represent the total number of transitions to the state $j$ during time $[0, t]$. Then,
$$
\frac{\Psi_j(t)}{t} =  \sum_{i=1}^ {M_j(t)}   \frac{  {\check \B}_{j,i} } {M_j(t)} \frac{M_j(t)}{t} ,
$$
 with ${\check \B}_{j,i}$ being  {\it the   residual $\epsilon$-cycle} before the end of the  $i$-th $\t$-transition,  occurred during $j$-cycles. By Lemma \ref{lemma:TauStability} there exists ${\bar \mu} < \infty$ and the process is stable for all $\mu_\i \ge {\bar \mu} $ and it is easy to see that limit $t \to \infty$ of the above term exists  ($\{{\check \B}_{j,i}\}_i$ are i.i.d. for any given $j$)  almost surely for all such $\mu_\i.$
Further one can upper bound as below:
\begin{eqnarray}
\label{Eqn_Psi_Up}
	0 \le \frac{\Psi_j(t)}{t} \le \sum_{i=1}^ {M_j(t)}   \frac{  \B_{j,i} } {M_j(t)} \frac{M_j(t)}{t} ,
\end{eqnarray} with $\B_{j,i}$ being  {\it the full $\i$-cycle upper-bounding the residual $\epsilon$-cycle} before the end of the  $i$-th $\t$-transition (as discussed in the beginning of the
proof of Theorem \ref{Thm_PB_for_one_subpolicy}),  occurred during $j$-cycles.  {\it Suffices to prove that the upper bound converges to zero, we rename $\Psi_j(t)$ to represent the upper bound itself.}
Note that $\{\B_{j,i} \}_i$ belonging to a particular $j$-cycle are i.i.d..
If $M_j(t)$ converges to  a finite constant as $t \to \infty$,  then clearly the proof of the  Lemma is done  for such sample paths. 

Now consider the sample paths in which, as $t \to \infty$,
$M_j(t) \to \infty$.  Then by 
using LLN, as $t \to \infty$, we get,
\begin{eqnarray}
\label{Eqn_Psi_Split}
\frac{\Psi_j(t)}{t}  \convas E^{\mu_\epsilon}  \left [ U \right]  \lim_{t \to \infty} \frac{M_j(t)}{t} \le E^{\mu_\epsilon}  \left [ \B_j \right]  \lim_{t \to \infty} \frac{M(t)}{t},
\end{eqnarray}
where  $M(t)$ represent the the total number of $\t$-transitions during time $t$, i.e., $M(t) = \sum_j M_j(t)$. 
With $N_{A}^\tau (t)$  representing the number of $\t$-arrivals during interval $[0, t]$ and by noting that the number of $\t$-departures at maximum equal the number of arrivals, we clearly have:
\begin{eqnarray}
\label{Eqn_M(t)}
\frac{M (t)}{t}  \le 2 \frac{N_{A}^\tau (t)}{t}  \stackrel{t \to \infty }{ \longrightarrow} 2 \lambda_\tau \mbox{ a.s. } 
\end{eqnarray}

Any $\B_{j,i}$ in equation~(\ref{Eqn_Psi_Up}) is a special $\i$-cycle,  which is interrupted by a $\t$-transition, either arrival or departure. Thus (with the rest of the notations
as in (\ref{Eqn_special_cycle})): 
\begin{eqnarray*}
	E^{\mu_\epsilon} \left[  \B_j \right] &=& E \left[ \mathcal B_j^{\mu_\epsilon} \ \big | \  \mathcal B_j^{\mu_\epsilon}  > A_\tau \cup \  \S_j^{\mu_\epsilon}  >   B_\tau  \right] 
	= \frac{E \left[\mathcal B_j^{\mu_\epsilon}  \ ; \  \mathcal B_j^{\mu_\epsilon}  > A_\tau \cup \  \S_j^{\mu_\epsilon}  >   B_\tau  \right]}{E \left[\mathcal B_j^{\mu_\epsilon} > A_\tau \cup \  \S_j^{\mu_\epsilon}  >   B_\tau  \right]}
	\\
	&\leq & \frac{E \left[ \mathcal{ B}_j^{\mu_\epsilon}  \  ; \  \mathcal{B}_j^{\mu_\epsilon}  > A_\tau \right] + E \left[   \mathcal{B}_j^{\mu_\epsilon}  \ ; \ \S_j^{\mu_\epsilon}  >   B_\tau  \right]}{E \left[ \mathcal{B}_j^{\mu_\epsilon} > A_\tau \right]}.
\end{eqnarray*}Using the tower property of conditional expectation  (e.g., $E[X] = E[E[X|Y] ]$), by conditioning on $\B_j^{\mu_\epsilon}$ and $ \S_j^{\mu_\epsilon}$,  and because $A_\tau$, $B_\tau$
are exponential random variables with respective parameters $\lambda_\tau$ and $\mu_\tau$:  
\begin{eqnarray*} 
	E^{\mu_\epsilon} \left[ \  \B_j \right] &\le &
	\frac{E \left[ \mathcal B_j^{\mu_\epsilon} \bigg(1 - \exp \big(- \lambda_\tau \mathcal B_j^{\mu_\epsilon} \big) \bigg) \ \right] + E \left[ \ \mathcal B_j^{\mu_\epsilon}  \bigg(1 - \exp \big(- \mu_\tau \S_j^{\mu_\epsilon} \big) \bigg) \ \right]}{E\left[ 1 - \exp \big(- \lambda\tau  \mathcal B_j^{\mu_\epsilon} \big) \right]} \\
	&=& \frac{E \left[ \frac{\mathcal B_j^1}{\mu_\epsilon} \bigg(1 - \exp \big(-\lambda_\tau \frac{\mathcal B_j^1}{\mu_\epsilon} \big)\bigg) \ \right] + E \left[ \ \frac{\mathcal B_j^1}{\mu_\epsilon}  \bigg(1 - \exp \big( -\mu_\tau \frac{\S_j^1}{\mu_\epsilon} \big)\bigg) \ \right]}{E\left[ 1 - \exp \big(-\lambda\tau \frac{\mathcal B_j^1}{\mu_\epsilon} \big)\right]} .
\end{eqnarray*}
By substituting the above inequality in (\ref{Eqn_Psi_Split}) and further referring to  (\ref{Eqn_M(t)}), we have the proof of the first inequality of Lemma.
For the last equality,   recall $\mathcal B_j^{\mu_\epsilon}  = \mathcal B_j^1 / \mu_\epsilon$ and $\S_j^{\mu_\epsilon} = \S_j^1 / \mu_\epsilon$.

%\textcolor{red}{Need to argue  limit exchange / use that MGF is analytical}.
By L'Hopitals rule, and using the differentiability properties of the
moment generating function, we get:
\begin{eqnarray*} 
	\lim_{{\mu_\epsilon} \to  \infty} \mu_\i E^{\mu_\epsilon} \left[ \  \B_j  \right] \hspace{-20mm} \\
	& \le  &
	\lim_{{\mu_\epsilon} \to  \infty}  \frac{E \left[  \mathcal B_j^1 \bigg(1 - \exp \big(-\lambda_\tau \frac{\mathcal{B}_j^1}{\mu_\epsilon} \big)\bigg) \ \right] + E \left[ \  {\mathcal B_j^1}   \bigg(1 - \exp \big(-\mu_\tau \frac{\S_j^1}{\mu_\epsilon} \big)\bigg) \ \right]}{E\left[ 1 - \exp \big(-\lambda\tau \frac{\mathcal{B}_j^1}{\mu_\epsilon} \big)\right]} \\
	&=& \lim_{{\mu_\epsilon} \to  \infty}  \frac{E \left[ {  \mathcal  B_j^1}  \bigg( \frac{ \lambda_\tau \mathcal  B_j^1 }{\mu_\epsilon^2} \exp \big(-\lambda_\tau \frac{\mathcal B_j^1}{\mu_\epsilon} \big) \bigg) \ \right] + E \left[  { \mathcal B_j^1}  \bigg( \frac{\mu_\tau \S_j^1 }{\mu_\epsilon^2} \exp \big(-\mu_\tau \frac{\S_j^1}{\mu_\epsilon} \big) \bigg) \ \right]}{E\left[ \frac{ \lambda_\tau  \mathcal  B_j^1}{\mu_\epsilon^2} \exp \big(-\lambda_\tau \frac{ \mathcal  B_j^1}{\mu_\epsilon} \big)\right]} \\
	&=& \lim_{{\mu_\epsilon} \to  \infty}  \frac{E \left[    \bigg(  \lambda_\tau ({\mathcal  B}_j^1)^2  \exp \big(-\lambda_\tau \frac{\mathcal B_j^1}{\mu_\epsilon} \big) \bigg) \ \right] + E \left[    \bigg(  \mu_\tau \S_j^1 { \mathcal B_j^1}  \exp \big(-\mu_\tau \frac{\S_j^1}{\mu_\epsilon} \big) \bigg) \ \right]}
	{E\left[ \  \lambda_\tau  {\mathcal  B}_j^1  \exp \big(-\lambda_\tau \frac{ \mathcal  B_j^1}{\mu_\epsilon} \big)\right]} < \infty .
	\eeq
	Thus 
	\beq
	\lim_{\mue \to \infty}  E^\mue [U] \leq \lim_{\mue \to \infty}   \frac{1}{\mue}  \mu_\i E^{\mu_\epsilon} \left[ \  \B_j  \right] = 0.
	\eeq
	Thus, we get from Equation (\ref{Eqn_Psi_Split}),
	\begin{eqnarray*}
		\lim_{\mu_\epsilon \to \infty} \lim_{t \to \infty} \frac{\Psi_j(t)}{t}   = 0.
	\end{eqnarray*} \eop
	\\
	%
	%
	%\begin{theorem}
	%	\label{Thm_eager_Perf}
	%Assume $\bm{A}$.1-4 and $\bm{B}$.1-4.\\
	%	%
	%	Let $\{ \pi_i^\infty \}_{i\geq 0}$ denotes the stationary distribution of  SDSR- M/M/1 limit tolerant system, given by Theorem (\ref{Thm_Tau_Perf}). Then the steady state blocking probability of eager jobs in SFJ limit  is given by:
	%	%
	%	\begin{eqnarray}
	%	P_B^\mue \stackrel{\mue \to \infty}{\longrightarrow} \sum_{j=1}^{\infty} P_{B_j} \pi_j^\infty := P_B^\infty.
	%	\end{eqnarray}
	%\end{theorem}
	%

\section{\revision{Proof of Theorem \ref{Thm_withou_A1}}} 
\label{app:withoutA1}

\revision{
In this appendix, we describe the changes required in the proofs of
Theorems~\ref{Thm_Lim_SD_Conv}-\ref{Thm_PB} to establish
Theorem~\ref{Thm_withou_A1}. Note that we no longer make Assumption
{\bf A}.1, but simply assume that eager job sizes are exponentially
distributed.

Without Assumption {\bf A}.1, one can not describe the pre-limit
system using a one-dimensional Markov process. However with
exponential service times for both eager and tolerant customers, one
can describe the system evolution via a two-dimensional Markov chain
$(X_\tau (t), X_\i (t)),$ capturing the number of tolerant and eager
customers in the system. Note that this Markov chain undergoes
transitions due to $\i$ as well as $\t$ customers. As noted before,
when the occupancy of the tolerant changes, a new eager sub-policy
comes into effect immediately, potentially during an ongoing eager
busy period. This new sub-policy might result in scheduling
rearrangements, based on the number of eager customers in the system.

The main difference without Assumption {\bf A}.1 is that once the
$\tau$ occupancy changes, the first $\i$-busy cycle can start with
(random) $c$ number of customers, where
$0 \le c \le {\cal K} := \lfloor 1/c_{min} \rfloor < \infty. $ Let the
duration of this partial first $\i$-busy cycle be represented by
$\B_{j}^{\mue, c} $, when the system is operating under $j$-th
sub-policy; basically, this is the time period that starts immediately
after the $\t$-system changes to state $j$, and lasts till the
$\i$-system becomes empty (or till the $\t$-state changes again,
whichever happens first).\footnote{Recall that \emph{complete}
  $\i$-busy cycles start when the $\i$-system becomes empty and last
  till the end of the next $\i$-busy period.
  % that started with a single $\i$-customer.
} Aside from this first partial $\i$-busy-cycle, subsequent $\i$-busy
cycles are referred to using the same notations as before.
% For the quantities that correspond to the
% case in which the first busy cycle starts with $0$
% $\i$-customers\footnote{ Recall here that these $\i$-busy cycles start
%   when the $\i$-system becomes empty and last till the end of the
%   $\i$-busy period, that started with one $\i$-customer.  } (as with
% assumption {\bf A}.1) will be referred by same notations as before.
% We use special notations only for the case when the first
% \textcolor{green}{$\i$-}busy cycle starts with $c$ $\i$-customers.

While the two-dimensional Markovian description
$(X_\tau (t), X_\i (t)),$ with a random number~$c$ of eager customers
present at times of $\t$-transitions is more challenging to analyse,
we show below that the first dimension of this process can be upper
and lower bounded by one-dimensional Markov processes (as in
\cite{ANOR}). These bounding processes are described next.
% Thus, the main challenge in analysing the two dimensional process is
% because of $c$, one can still analyze the original process by using
% one dimensional processes that upper and lower bound the original
% process as in \cite{ANOR}.

% The stationary distribution exists and the time limit exists because
% one can dominate it ... and then equivalent of Lemma 2 shows
% stationry of dominant chain which in turn implies staitonary of
% original chains ? Or we may have to go through Maximum theorem or
% some other functional analysis tools.. Or do we have some queueing
% results such that the staitonary distreibution occurs for general
% queueing systems, need to use them.}

{\bf Upper and Lower bounding systems:} Now we construct two systems
for each $\mue$, which sandwich the actual $\t$-system pre-limit. In
both the bounding systems we consider that the eager system,
immediately after each tolerant change, starts with maximum possible
(partial) eager busy cycle: generate one busy cycle $\B_{j}^{\mue, c}$
starting with each $c$ and with coupled eager arrivals, job-sizes
etc., for the given sub-policy (say sub-policy~$j$) and let
$\B_{j}^{\mue, *}:= \max_{0 \le c \le {\cal K}} \B_{j}^{\mue, c}$ be
duration for which the first (hypothetical) eager busy cycle lasts in
the two systems. It is immediate that this max-busy cycle upper bounds
those starting with any $c$-number (as in original system), when they
are served with the same sub-policy.
% , using certain coupling arguments; every such busy period
% eventually hits $c$ number of eager customers and that residual busy
% cycle can be coupled with the one starting with $c$-number and this
% is possible due to exponential random quantities.  {\color{blue} We
% may generate one busy cycle starting with each $c$ and with coupled
% eager arrivals, job-sizes etc., and consider that both the bounding
% systems start with maximum of these ${\cal K}$ number of busy
% cycles.}

In the upper bounding system, immediately after the $\t$-occupancy
changes to~$j$, the tolerant queue receives zero service for duration
$\B_{j}^{\mue, *}$. After this period, service of the tolerant queue
in subsequent $\i$ busy cycles continues exactly as in the original
system.
% the $\i$-system starts with corresponding max-busy cycle, during
% which the $\t$-customer(s) are not served; the service of the
% $\t$-customer(s) starts with the second $\i$-busy cycle and
% continues exactly as in the original system.
The lower bounding system is almost similar to the upper bounding
system, except that the tolerant queue is served at the maximum
possible rate (1, given the unit server speed assumed) over the first
hypothetical partial busy cycle $\B_{j}^{\mue, *}.$
% , the $\t$-customers are served with maximum possible rate during
% the first $\i$ max-busy cycle.
We then use appropriate coupling rules as in \cite{ANOR}: \\

\noindent $\bullet$ Tolerant inter-arrival times and job-sizes are
always coupled in all three systems. However, the eager quantities are
coupled in the original and any bounding system, only when the two
systems are in the same $\t$-state, as explained below.\\

\noindent $\bullet$ The original and the two bounding systems start
with the same initial system state;\\

\noindent $\bullet$ the number of $\t$-customers in the original
system is less than or equal to that in upper system, and greater than
or equal to that in the lower system, as time progresses (because of
the service differences for the tolerant queue in the first partial
eager busy-cycles across the three systems);\\

\noindent $\bullet$ if at any time point the upper system
meets\footnote{The tolerant number in upper system can become strictly
  bigger than that in original systems, but further increase may be
  slower in upper system due to Markov policies, and then at some
  tolerant departure the $\t$-number in upper system can equal that in
  the original system.}  the original system, we couple the two
systems by using eager random variables (inter-arrival times, service
times, etc.) that define the original system in that $\t$-cycle to
define the upper system $\t$-cycle;\\
  
\noindent $\bullet$ specifically, we start with the maximum partial
busy cycle in upper system corresponding to the sub-policy dictated by
the new tolerant state, after coupling (also) the eager inter-arrival
times and job sizes; observe that under this construction, the first
$\i$-busy cycle in the original system (in the $\t$-cycles in which
upper system meets the original system) is smaller than or equal to
the corresponding maximum partial busy cycle in upper system.\\

These details are similar to those in \cite{ANOR}, where one can also
find more details. Similar coupling ideas can be used to construct the
lower bounding system.

It is easy to observe that the tolerant queue evolution in the two
bounding systems can be modeled as M/G/1 systems with state dependent
birth-death transition probabilities, as the system considered in the
paper with Assumption~{\bf A.}1.  That is, they can be represented by
one-dimensional Markov processes.  All the results can be extended to
the two bounding systems and we would have equivalents of Theorems
\ref{Thm_Lim_SD_Conv}-\ref{Thm_tau} and Lemma~\ref{lemma:TauStability}
for the two systems, if we can show that the impact of the first
maximum partial busy cycle at the start of every tolerant change
vanishes as $\mue \ra \infty$ under the SFJ limit. We show below that
this is the case.

{\bf Convergence of the bounding systems:} This proof goes through
along similar lines as with {\bf A}.1, except for few important
differences. We only mention the differences here.  We first show that
the max busy cycles $\B_{j}^{\mue, *}$ converge to zero, as
$\mu_\i \to \infty$, the moments are upper bounded by a common uniform
bound, etc. For extending Lemmas
\ref{lemma:bp_upper_bound}-\ref{lemma:mginftyBPbound}, we observe that
the max busy cycles can be uniformly upper-bounded (uniformly over
sub-policy $j$) by that in the M/G/$\infty$ queue of
Lemma~\ref{lemma:mginftyBPbound}, which starts with ${\cal K}$
customers.
  
For extending Lemma~\ref{lemma:uniform}, the transition probabilities
also depend upon max busy cycles. Towards this, first define
$q_j^{\mue, *}$ to represent the backward transition probability of
tolerant birth-death process when one starts it with corresponding max
busy cycle.  Then one can lower bound the probability $q_j^{\mue, *}$
using
$$\bar{q}^*_j := \prob{\S_{j,1}^{\mue,*} +\sum_{k = 2}^{N-1} \S_{j,k}^{\mue} > B_{\t}} = (1-r_*) {\bar q}_j + r_*, \  \    r_* := \prob { \S_{j,1}^{\mue,*}  > B_\tau},   $$  
with ${\bar q}_j$ as defined previously and where $\S_{j,1}^{\mue,*} $
equals the service available to the tolerant queue during the first
$\i$ max busy cycle $\B_{j}^{\mue, *}$ (which exactly equals 0 for
upper system and $\B_{j}^{\mue, *}$ in the lower system for any
$\mue$). It is clear that $r_*$ converges to zero (uniformly in $j$)
and the remaining proof goes through, as before.  The proof for the
upper bound of Lemma~\ref{lemma:uniform} can be handled similarly.
Observe here that Lemmas
\ref{Lemma_Intermediate_result}-\ref{Lemma_Intermediate_result_1} are
used by Lemma~\ref{lemma:uniform} only for the subsequent $\i$-busy
cycles, which are same as that with Assumption~{\bf A}.1, and hence
would not require any changes.  The proof of
Lemma~\ref{Lemma_Intermediate_result_2} holds also for the two
bounding systems, because the convergence in Lemma~\ref{lemma:uniform}
is ensured and same is the case with Lemma~\ref{lemma:TauStability}.

Thus, by the extended Lemma~\ref{lemma:TauStability} there exists
${\bar \mu}$ such that the upper and lower bounding systems are stable
for all $\mue \ge {\bar \mu}$.  Further the upper and lower stationary
distributions converge to the same quantity as $\mue \to \infty$ (by
extended versions of Theorems~\ref{Thm_Lim_SD_Conv}-\ref{Thm_tau}).

{\bf Existence of stationary distribution of original system:} We
derive the existence using embedded chains, the discrete time Markov
chains obtained by considering values of the continuous time Markov
process, immediately after the transition epochs.

By extended Lemma~\ref{lemma:TauStability} there exists a ${\bar \mu}$
such that the upper bounding system is stable for all
$\mue \ge {\bar \mu}$. For all such $\mue$, the embedded tolerant
chain in the upper system visits the~$0$ state infinitely often (with
probability one) and within integrable number of epochs (by positive
recurrence and existence of stationary distribution). This implies (by
the bounding of the tolerant occupancy between the original and upper
system) that the embedded tolerant chain in the original system also
visits $0$ state infinitely often (with probability one) and within an
integrable number of epochs. This in turn implies the original
(two-dimensional) embedded Markov chain visits the set
$ {\cal O}:= \{ (0, c), \mbox{ for some } c \le {\cal K} \}$
infinitely often (w.p.1) and within integrable number of epochs.  By
irreducibility, this implies the same for the $(0,0)$ state, as every
time the chain visits any state of the form $(0,c),$ it has a positive
probability of visiting state $(0, 0)$ (uniformly bounded away from 0
across all $c$) and hence visits state $(0,0)$ after at most a
geometric number of visits to the set~${\cal O}.$ This establishes the
existence of a unique stationary distribution
% (uniqueness by aperiodicity)
for the original two-dimensional embedded Markov chain, which in turn
implies the existence of stationary distribution for the two
dimensional continuous time Markov process; observe that the
transition rates between all the state pairs can be uniformly upper
bounded by a finite value for any given $\mue.$
 
By regular renewal arguments, this implies almost sure existence of
the time limits:
 \begin{eqnarray}
\label{Eqn_SD_marginal}
\lim_{T \to \infty}  \frac{1}{T} \int_0^T  \indc_{ \{X_\tau(t) = j \}}  =  \pi_j^\mue, \mbox{ for any } j, 
\end{eqnarray}
where $\pi_j^\mue$ is the marginal of the stationary distribution
corresponding to tolerant queue in original (two-dimensional) system.

{\bf Convergence of stationary distribution as $\mue\to \infty$:} The
limit of the $\t$-performance of the two bounding systems, as
$\mue \to \infty$ is exactly the same, because by extended
Lemma~\ref{lemma:uniform} both the sets of transition probabilities
converge to the same limit transition probabilities uniformly.
Furthermore, time averages corresponding to the $\t$-occupancy in the
original system can be bounded by the corresponding time averages in
the upper system (occupancy denoted by $X_\t^U (\cdot)$) and the lower
system (occupancy denoted by $X_\t^L (\cdot)$) as follows:
%by bounding arguments
%($X_\t^U (.)$, $X_\t^L(.)$ correspond respectively to upper and lower
%systems),
$$
\lim_{T \to \infty} \frac{1}{T} \int_0^T \indc_{ \{X^{L}_\tau(t) \le j
  \}} \le \lim_{T \to \infty} \frac{1}{T} \int_0^T \indc_{ \{X_\tau(t)
  \le j \}} \le \lim_{T \to \infty} \frac{1}{T} \int_0^T \indc_{
  \{X^U_\tau(t) \le j \}} \mbox{ for any } j,
$$
and hence the marginal stationary distribution (corresponding to
tolerant occupancy) of the original system, converges to the common
limit of the two bounding systems. This implies the statement related
to the tolerant performance in Theorem \ref{Thm_Lim_SD_Conv}, because
(for each $j$):
\begin{eqnarray*}
\sum_{i\le j} \pi_i  = \lim_{\mue \to \infty } \lim_{T \to \infty}  \frac{1}{T} \int_0^T  \indc_{ \{X^{L}_\tau(t) \le  j \}}  \le  \lim_{\mue \to \infty }    \lim_{T \to \infty}  \frac{1}{T} \int_0^T  \indc_{ \{X_\tau(t) \le  j \}}  \hspace{30mm}  \\
 \le
 \lim_{\mue \to \infty }   \lim_{T \to \infty}  \frac{1}{T} \int_0^T  \indc_{ \{X^U_\tau(t) \le  j \}} =  \sum_{i \le j} \pi_i, \  \    \mbox{ w.p.1. }
\end{eqnarray*}
Using similar bounding arguments, and common limit of the two bounding
systems, the statement of Theorem \ref{Thm_tau} also gets extended.

{\bf Changes required for results related to eager performance:} There
are no changes in the main proof of Theorem \ref{Thm_PB}, we only need
to discuss the changes required in the proof of
Theorem~\ref{Thm_PB_for_one_subpolicy}.

For extending the proof of Theorem \ref{Thm_PB_for_one_subpolicy}, one
needs to construct additional sample paths for coupling arguments.
One first needs to construct ${\cal K}+1$ additional sample paths of
$\i$-busy cycles, with $\mue = 1$, one for each sub-policy~$j,$ and
one for each $0 \le c \le {\cal K}$.  Using these ${\cal K}+1$ sample
paths, we need to construct a $({\cal K}+2)$th sample path of max busy
cycles, one for each $j$.  Note again that each such max busy cycle
upper bounds the corresponding busy cycle starting with $c$ number of
$\i$-customers, for any $0 \le c \le {\cal K}$. Whenever the
$\t$-state changes, the first $\i$-busy cycle is coupled using one of
the above ${\cal K}+1$ sample paths, depending upon the random $c$,
that represents the number of eager customers at the time of the
$\t$-transition. Each such first partial busy cycle is is upper
bounded by the corresponding max busy cycle, irrespective of the
random~$c$; this ensures the required coupling (and uniform upper bounding across various $c$) across various $\mue$,
even without Assumption~{\bf A}.1.

The process $\Psi_{j} (t)$ for each $j$-sub-policy now includes the
starting as well as the ending partial busy cycles.  But all the
starting partial $\i$-busy cycles can be upper bounded by the
corresponding max busy cycles and one can show the influence of them
to converge to zero, in the same manner as is done with the ending
busy cycles; this is possible because we have already extended the
proofs of the required uniform bounds (converging to zero or second moments bounded
as required) on all the required moments of the max busy cycles, as
already discussed in the initial parts of this section. All the bounds
that were shown to converge to zero, should now be made up of the
corresponding max busy cycles, to handle the part of $\Psi_{j} (t)$
coming from the starting partial $\i$-busy cycles.  The same is true
for the bounds and arguments considered in the proof of
Lemma~\ref{Lemma_Psij}.

If there any drops due to change in eager sub-policy at tolerant transition epochs, they can be upper bounded by  $N_{\partial_j} (t) $ and can be handled as in original proof. 
%Finally, we will not have $N_{\partial_j} (t) $, the losses due to
%drops at $\t$-transitions (without {\bf A}.1). 
\eop

}

%\input{appendix_proof_without_A1}
%Kiran: Populate this file with the proofs of the performance of both
%classes for general SM policies.
%
 
\section{Proofs of Limit Pareto Frontier}  
\label{Appendix_pareto}

\noindent \textbf{Proof of Theorem \ref{Thm_Opt_Policy}:}  We prove this theorem by proving the following steps:
 
(a)  The optimizing problem defined by  (\ref{Eqn_Pareto_optimal}) is equivalent to the following modified optimizing problem in terms of blocking probabilities $\{ d_i \}_{i \geq 0}$:

\begin{eqnarray*}    
\max_{ \phi = \{ d_i \} } f(\phi),  &\mbox{ with }&   f(\phi)  := \sum_i \Pi_{j = 0}^i  \ \rho^\phi_j ,\ \  \mbox{ such that } \\
g(\phi)  \le 0 ,&\mbox{ with }&  g(\phi) :=
\sum_i   (i - C) \ \ \ \Pi_{j = 0}^i \  \rho^\phi_j  \leq 0.  
\end{eqnarray*}
  
%\begin{eqnarray*}
%\max_{ \{\pi_j \} }    \sum_j \pi_j  \mbox{ such that } \\
%  \sum_j   (j - C)  \pi_j  \leq 0  
%\end{eqnarray*}

(b) Let $\phi^*$ be an optimizing policy. Then, constraint is satisfied with equality at $\phi^*$. \\
(c) Consider any policy $\phi = (d_0, d_1 \cdots)$.  For any $i$,   if $d_i > \dmin$ and $d_{i+1}  < \dmax$, then there exists a policy ${\tilde \phi}$ which strictly improves upon 
$\phi$, i.e., 
$$
f({\tilde \phi} ) >  f(\phi)  \mbox{ and }   g({\tilde \phi} ) \le 0.
$$
d) From part (c), it is clear that the optimal policy $\phi^*$ is  monotone, i.e.,  $d_i^* \le  d_{i+1}^*$ for any $i$. In fact from part (c), optimal policy  $\phi^* = \{ d_1^*, d_2^* \cdots \}$ is of the type as given below: 
\begin{eqnarray}
\label{Eqn_pareto_optimal_policy_appendix}
d_i^* 
= \left \{ \begin{array}{llll}
\dmin   &\mbox{ if }  i < L \\
d          & \mbox{ if }   i = L \\
\dmax  & \mbox{ if }  i > L,
\end{array} \right .
\end{eqnarray}for some   $1 \le L < \infty$ and $\dmin \le  d \le \dmax. $  We then identify $(L, d)$ for the given $C$ in the last step.  
 
 We now provide the proof of all the steps, one after the other.

\textbf{Proof of Part (a):}
Consider the SM policy $\phi = \left( d_0, d_1, \cdots, d_n, \cdots \right)$.  We rewrite the optimization problem  (\ref{Eqn_Pareto_optimal}), here,  for convenience: 
\begin{eqnarray}
\label{Eqn_optimal_h}
\nonumber
\min_{\phi} &   P_B  (\phi) &\mbox{ such that } \ \ \   E[N] (\phi) \le C \mbox{, i.e., } \\
\min_{\phi  }  & \sum_{i=0}^\infty d_i \frac{\h_i^\phi}{ \sum_{l \geq 0} \h_l^\phi}  &  \mbox{ such that }  \\
&  \sum_{i=0}^\infty i \frac{\h_i^\phi}{ \sum_{l \geq 0} \h_l^\phi}   \le C,
\end{eqnarray}
\begin{eqnarray}
 \mbox{ with } \ \  \h_0^{\phi}  = 1, \ \mbox{ and for $i \geq 1$,  }\h_i^{\phi} = \prod_{1 \le j \le i}  \rho_j^{\phi} \ \mbox{,  } \ \ \rho_j^{\phi}  :=  \frac{\lambda_\tau }{ c + \mu_\tau \rho_\epsilon d_j} \ \mbox{and} \ \ c = \mu_\tau \left( 1 - \rho_\epsilon \right).
\nonumber
\end{eqnarray}

In other words, we have to optimize: 
\begin{eqnarray*}
	\min_\phi    \sum_i d_i \frac {\h_i^{\phi} }{ \sum_k \h_k^{\phi} } \mbox{ such that }  
	\sum_i   i \frac {\h_i^{\phi} }{ \sum_k \h_k^{\phi} }  \le C.
\end{eqnarray*} 
As $\rho_\epsilon,$ $ \mut$ and $c$ are constants, optimizing above is equivalent to 
\begin{eqnarray*}
	\min_\phi   \sum_{i=0}^\infty (c + \mut \rho_\epsilon d_i)  \frac {\h_i^{\phi} }{ \sum_{k=0}^\infty \h_k^{\phi} }  \ \ \mbox{ such that } \\
	\sum_{i=0}^\infty   i \frac {\h_i^{\phi} }{ \sum_{k=0}^\infty \h_k^{\phi}  }  \le C.
\end{eqnarray*} 
For all $i \ge 1$, as $c + \mut \rho_\epsilon d_i = \lambda_\tau / \rho_i^\phi =  \lambda_\tau  \h_{i-1}^{\phi} / \h_i^{\phi} $ , minimizing  the above problem is equivalent to minimize the following problem (note that $\lambda_\tau$ is a constant and $\h_0^\phi = 1$), 
\begin{eqnarray*}
	\min_\phi    \left (   \frac{c + \mut \rho_\epsilon d_0}{ \sum_{k=0}^\infty \h_k^{\phi}  } +   \sum_{i=1}^\infty   \frac {\lambda_\tau \h_{i-1}^{\phi} }{ \sum_{k=0}^\infty \h_k^{\phi} } \right ) \mbox{ such that } \\
	\sum_i   i \frac {\h_i^{\phi} }{ \sum_{k=0}^\infty \h_k^{\phi} }  \le C.
\end{eqnarray*} 
It is immediate to show that $d_0^* = \dmin$ (only $P_B$ depends upon $d_0$ and $E[N]$ is independent of it) , thus equivalently we consider the following:
\begin{eqnarray*}
	\min_\phi    \left (   \frac{c + \mut \rho_\epsilon \dmin }{  \sum_{k=0}^\infty \h_k ^{\phi} } + \lambda_\tau \right ) \mbox{ such that } \\
	\sum_i   i  \h_i^{\phi}   \le C  \sum_{k=0}^\infty \h_k^{\phi}.
\end{eqnarray*} 
The above is equivalent to the following simplified version: 
\begin{eqnarray}
\max_\phi    f(\phi)  &\mbox{ with } f(\phi) := \sum_i \h_i^{\phi}  \mbox{ such that } \nonumber \\
g(\phi) \le 0,  &\mbox{ with } g(\phi) :=  \sum_{i=0}^\infty    (i-C)  \h_i^{\phi}  .
\label{Final_optimal}
\end{eqnarray} 
This proves the first part.\\
 \textbf{Proof of Part (b)}
Consider any SM policy $\phi \coloneqq \left(\dmin, d_1, \cdots, d_i, \cdots \right )$ (note $d_0 = \dmin$ is fixed at $\dmin$ in view of Part(a)). Since there is one-one onto relation between $\rho_j^\phi$ and $d_j$ (as $\rho_j^\phi =  \lambda_{\tau} / (c + \mu_\tau \rho_\epsilon d_{j})$) for every $j$, we consider optimization with respect to $\rho_j$'s	 and  redefine the policy as, $\phi \coloneqq \left(\rmax, \rho_1^\phi, \cdots, \rho_i^\phi, \cdots \right )$, where $\rmax$ corresponds minimum blocking $\dmin$ (given by (\ref{Eqn_rhomax_min})). 
This redefinition is only for this proof.   Thus one can re-write the optimization problem as (since $\rho_0 = \rmax$ is fixed):
\begin{eqnarray*}    
	\max_{ \phi = ( \rho_1^\phi, \rho_2^\phi, \cdots, \rho_i^\phi, \cdots  ) } f (\phi) \ \  \mbox{ such that }  
	g(\phi) \leq 0   \mbox{,  where, } \\
	f(\phi) :=    \sum_i  \prod_{j=1}^i \rho_j^\phi \ \ \mbox{ and }  \ \  g(\phi) :=\sum_i   (i - C)   \prod_{j=1}^i \rho_j^\phi    .  
\end{eqnarray*}
Let $\phi^* = \left(\rho_1^*, \rho_2^*, \cdots, \rho_i^*, \cdots \right )$  represent the optimal policy of the optimizing problem defined above. We want to show that constraint of the optimization problem is satisfied at $\phi^*$.
Let us assume on contrary that constraint is not satisfied at $\phi^*$, that means $g(\phi^*) < 0$. Then there exists an $i$ such that $\rho_i^* < \rmax$, as $C < \rmax / (1-\rmax)$.

First, consider the case when there exists one such $i$, and further  $i \geq C$. By Lemma (\ref{Lemma_epsilon}), we can get a policy $\phi^\epsilon$ in the feasible region which $f(\phi) < f(\phi^\epsilon)$. This contradicts the fact that $\phi^*$ is an optimal policy. The contradiction arises because of the wrong assumption that constraint is not satisfied at $\phi^*$. \\
Next, consider the case when $i <C$. Define $\bar{i}$ as
$$
\bar{i} = \min \{ i: \rho_j^* = \rmax \ \ \forall j \geq i \}.
$$
Clearly, $0 < \bar{i} \leq C$ and note $\rho_{\bar{i} - 1} < \rho_{\bar{i}}$. Using Lemma (\ref{Lemma_swap}), we can get an improved policy by swapping  $\rho_{\bar{i} - 1}$ and $\rho_{\bar{i}}$. This again leads to the contradiction that $\phi^*$ is an optimal policy and therefore, constraint is satisfied at the optimal policy.\\

{\bf Proof of Part (c) and (d):} 
Consider the case  where  $\rmin < \rho^\phi_{i+1}$  and $\rho^\phi_i  < \rmax$ for some $i$.   Then we claim that, there exists an alternate policy, say $\tilde{\phi}$, which is better than $\phi$, i.e., such that $f(\phi) <  f(\tilde{\phi})$ and $g(\phi) \ge   g(\tilde{\phi})$.\\
The alternate policy, say  ${\tilde  \phi } := \{ \tilde{\rho_i}\}_{i \geq 0}$, would defer only in $i$ and $i+1$ components, i.e.,  such that $\tilde{\rho_j} = \rho^\phi_j, \ \ \forall j \neq i, i+1$ and $\tilde{\rho_i} = \rho^\phi_i + \epsilon, \tilde{\rho_{i+1}} = \rho^\phi_{i+1} - \tilde{\epsilon}$,  for some positive $\epsilon$ and $\tilde{\epsilon}$.    And we would need the following to make it a valid policy:
\begin{eqnarray}
\label{Eqn_conditions}
0 < \epsilon <  \rmax - \rho^\phi_i   \mbox{ and }  0 < {\tilde \epsilon} <    \rho^\phi_{i+1} - \rmin.
\end{eqnarray}
First, consider the difference in the objective functions under the two policies: 
\begin{eqnarray}
\label{Eqn_star}
f(\tilde{\phi}) - f(\phi) &=&  \sum_j \tilde{\h_j}^\phi - \sum_j \h_j^\phi = \h_{i - 1}^\phi   \left ( ( \tilde{\rho_i} - \rho^\phi_i) + \sum_{j \geq i+1}^{\infty} \left  (\prod_{k = i}^j \tilde{\rho_k} - \prod_{k = i}^j  \rho^\phi_k \right ) \right )  \nonumber\\ \nonumber
&=& \h_{i - 1}^\phi \bigg( (\tilde{\rho_i} - \rho^\phi_i) + (\tilde{\rho_i}\tilde{\rho_{i+1}} - \rho^\phi_i \rho^\phi_{i+1})   +  \sum_{j \geq i+2}^{\infty} (\tilde{\rho_i}\tilde{\rho_{i+1}} - \rho^\phi_i \rho^\phi_{i+1})  \prod_{k = i+2}^j  \rho^\phi_k \bigg) \nonumber \\
&=& \h_{i - 1}^\phi \bigg( (\tilde{\rho_i} - \rho^\phi_i) +  (\tilde{\rho_i}\tilde{\rho_{i+1}} - \rho^\phi_i \rho^\phi_{i+1})\bigg(1 + \sum_{j \geq i+2}^{\infty} \prod_{k = i+2}^j  \rho^\phi_k \bigg) \bigg).
\end{eqnarray}
In a similar way, the difference in the constraints is:
\begin{eqnarray}
\label{Eqn_diff_constraints}
g(\tilde{\phi}) - g(\phi)  &=& \sum_j \tilde{\h_j}^\phi (j - C) \  - \ \sum_j \h_j^\phi (j - C)= \sum_j (\tilde{\h_j}^\phi - \h_j^\phi) (j - C)\nonumber  \\
&& \hspace{-22mm} = \ \h_{i-1}^\phi \bigg( (\tilde{\rho_i} - \rho^\phi_i)(i - C) +    (\tilde{\rho_i}\tilde{\rho_{i+1}} - \rho^\phi_i \rho^\phi_{i+1}) \bigg( (i+1-C) + \sum_{j \geq i+2} (j - C) \prod_{k = i+2}^{j} \rho^\phi_k    \bigg) \bigg). \hspace{10mm}
\end{eqnarray}
The above difference is equal to 0, i.e., the constraint function remains the same under  $\tilde{\phi}$,  if the following holds,
\beq
\tilde{\rho_i}\tilde{\rho_{i+1}} - \rho^\phi_i \rho^\phi_{i+1} = \frac{-(\tilde{\rho_i} - \rho^\phi_i)(i - C)}{(i+1-C) + \sum_{j \geq i+2} (j - C) \prod_{k = i+2}^{j} \rho^\phi_k}  .
\eeq
With such a policy the difference in objective functions (see equation (\ref{Eqn_star})) equal: 
\begin{eqnarray*}
	f(\tilde{\phi}) - f(\phi)   
	&=& \h_{i - 1}^\phi \bigg( (\tilde{\rho_i} - \rho^\phi_i) - \frac{(\tilde{\rho_i} - \rho^\phi_i)(i - C)\big(1 + \sum_{j \geq i+2}^{\infty} \prod_{k = i+2}^j  \rho^\phi_k \big)}{(i+1-C) + \sum_{j \geq i+2} (j - C) \prod_{k = i+2}^{j} \rho^\phi_k}  \bigg) \\ 
	&& \hspace{-10mm} = \
	\frac{\h_{i - 1}^\phi (\tilde{\rho_i} - \rho^\phi_i)}{(i+1-C) + \sum_{j \geq i+2} (j - C) \prod_{k = i+2}^{j} \rho^\phi_k}\bigg( (i+1-C) + \sum_{j \geq i+2} (j - C) \prod_{k = i+2}^{j} \rho^\phi_k \  \\
	&& \hspace{60mm} - \  (i - C) \  - \ \sum_{j \geq i+2} (i - C) \prod_{k = i+2}^{j} \rho^\phi_k   \bigg)\\
	&& \hspace{-10mm} = \ \frac{\h_{i - 1}^\phi (\tilde{\rho_i} - \rho^\phi_i)}{(i+1-C) + \sum_{j \geq i+2} (j - C) \prod_{k = i+2}^{j} \rho^\phi_k} \bigg( 1 + \sum_{j \geq i+2} (j - i) \prod_{k = i+2}^{j} \rho^\phi_k  \bigg ).
\end{eqnarray*}
{\bf Case with $i > C$:}  From the above equation, it is clear that we need $\tilde{\rho_i} - \rho^\phi_i > 0$,   for improvement in the objective function. That is, we want $\epsilon > 0$ (this is obviously true). 
Now, for  all $\epsilon > 0$, we get an $\tilde{\epsilon} $ strictly positive as, 
\begin{eqnarray}
\tilde{\rho_i}\tilde{\rho_{i+1}} - \rho^\phi_i \rho^\phi_{i+1} &=& \epsilon\rho^\phi_{i+1} - \tilde{\epsilon}\rho^\phi_i - \epsilon \tilde{\epsilon} =  \frac{-\epsilon(i - C)}{(i+1-C) + \sum_{j \geq i+2} (j - C) \prod_{k = i+2}^{j} \rho^\phi_k} \mbox{ and thus}  \nonumber  \\
\tilde{\epsilon}(\rho^\phi_i + \epsilon) &=& \epsilon\rho^\phi_{i+1}  + \frac{\epsilon(i - C)}{(i+1-C) + \sum_{j \geq i+2} (j - C) \prod_{k = i+2}^{j} \rho^\phi_k} \nonumber \\
\Rightarrow \tilde{\epsilon}  &=& \frac{\epsilon }{\rho^\phi_i + \epsilon}  \left ( \rho^\phi_{i+1 } + \frac{ (i - C)}{  \left((i+1-C) + \sum_{j \geq i+2} (j - C) \prod_{k = i+2}^{j} \rho^\phi_k \right)} \right ) > 0. 
\label{Eqn_Condition2}
\end{eqnarray}
If $i > C$ (more specifically if the last term in the RHS is positive), the above ${\tilde \phi}$ provides the required improvement policy as one can chose $\epsilon, {\tilde{\epsilon}} > 0$
sufficiently small to satisfy (\ref{Eqn_conditions}). \\
{\bf Case with  $i < C$:}  Further, if $\rho^\phi_i < \rho^\phi_{i+1}$, then by Lemma \ref{Lemma_swap} we can get the required improved policy.
We are left with the case when $i < C$ and    $\rho^\phi_i = \rho^\phi_{i+1}$  with $\rmin < \rho^\phi_i < \rmax$.   
If now   the numerator of (\ref{Eqn_Condition2}) is negative, i.e., if
$$
(i - C) +    \rho^\phi_{i+1}  \bigg( (i+1-C) + \sum_{j \geq i+2} (j - C) \prod_{k = i+2}^{j} \rho^\phi_k    \bigg) \le  0,
$$
one can improve $\phi$ by considering a ${\tilde \phi}$ that differs from $\phi$ only in $i$-th component,  and with  ${\tilde \rho}_i = \rmax$.  With such a ${\tilde \phi}$, \     a) the constraint remains satisfied, as  (by substituting ${\tilde \rho}_{i+1} = \rho^\phi_{i+1}$ in (\ref{Eqn_diff_constraints}))
\begin{eqnarray}
\label{Eqn_diff_constraints2}
g(\tilde{\phi}) - g(\phi)   
= \h_{i-1}^\phi (\tilde{\rho_i} - \rho^\phi_i)  \bigg( (i - C) +     \rho^\phi_{i+1} \bigg( (i+1-C) + \sum_{j \geq i+2} (j - C) \prod_{k = i+2}^{j} \rho^\phi_k    \bigg) \bigg)  \le  0 ;
\end{eqnarray} and 
b) the difference in the objective function becomes positive (by substituting ${\tilde \rho}_{i+1} = \rho^\phi_{i+1}$ in (\ref{Eqn_star})):
\begin{eqnarray*}
	f(\tilde{\phi}) - f(\phi)   
	&=& \h_{i - 1}^\phi \bigg( (\tilde{\rho_i} - \rho^\phi_i) - \frac{(\tilde{\rho_i} - \rho^\phi_i)(i - C)\big(1 + \sum_{j \geq i+2}^{\infty} \prod_{k = i+2}^j  \rho^\phi_k \big)}{(i+1-C) + \sum_{j \geq i+2} (j - C) \prod_{k = i+2}^{j} \rho^\phi_k}  \bigg) \\
	&=& \h_{i - 1}^\phi   (\tilde{\rho_i} - \rho^\phi_i)   \left (1 + \rho^\phi_{i+1} \big(1 + \sum_{j \geq i+2}^{\infty} \prod_{k = i+2}^j  \rho^\phi_k \big) \right ).
\end{eqnarray*}  Thus, we get an improved policy.
From (\ref{Eqn_Condition2}),  it further suffices to consider the case when,
\begin{eqnarray}
\label{Eqn_left}
\left ( \rho^\phi_{i+1 } + \frac{ (i - C)}{  \left((i+1-C) + \sum_{j \geq i+2} (j - C) \prod_{k = i+2}^{j} \rho^\phi_k \right)} \right )  < 0 
\mbox{ and }  \\ 
 \bigg( (i - C) +     \rho^\phi_{i+1} \bigg( (i+1-C) + \sum_{j \geq i+2} (j - C) \prod_{k = i+2}^{j} \rho^\phi_k    \bigg) \bigg) > 0. 
\end{eqnarray}
Note that the second term in the above is the numerator of the first term and hence for the conditions to hold the denominator of the first term should be negative, i.e., 
$$
\bigg( (i+1-C) + \sum_{j \geq i+2} (j - C) \prod_{k = i+2}^{j} \rho^\phi_k    \bigg)   < 0.
$$
But with $i < C$, both the components of the second term of  (\ref{Eqn_left}) is negative implying the second term should be negative. Thus such a case does not exist.  
Thus in all cases, we have   an alternate   policy which strictly improves $\phi$. 
\subsection*{Optimal Policy}
In view of the above result it is clear that the optimal policy is of the type (\ref{Eqn_pareto_optimal_policy}). 
Further it is easy to solve the given optimization problem for any $C \le \rmax/(1-\rmax)$  as given below:
Define $L^* = L^* (C)$ in the following manner
\begin{eqnarray}
L^* &=& \max \left  \{ i :    g( \phi^{sp,i} )  < 0 \right  \},  \mbox{ with } \phi_j^{sp, i}  = \rmax \indc_{j \le i} + \rmin \indc_{j  > i}  \nonumber \\
&=&  \max \left  \{ i :  \frac {  \sum_{j \le i}  \rmax^j j +    \rmax^i    \sum_{j > i} j \rmin^{j-i} }{
	\sum_{j \le i} \rmax^j  +    \rmax^i    \sum_{j > i}  \rmin^{j-i}
} < C \right  \}, \mbox{ and then $\rho^*$  satisfies with }  i = L^* \nonumber  \\
0 &=&  C -   \frac {  \sum_{j \le i}  \rmax^j j +     \rmax^i  \rho^* \left  (i+1 +   \sum_{j > i+1} j \rmin^{j-i-1}  \right ) }{
	\sum_{j \le i} \rmax^j  +  \rho^*\rmax^i   \left  (1+     \sum_{j > i+1}  \rmin^{j-i-1} \right ) } \mbox{ thus }  \nonumber \\
\rho^* &=&   \frac {  \sum_{j \le i}  \rmax^j j - C \sum_{j \le i} \rmax^j   }{
	C \rmax^i   \left  (1+     \sum_{j > i+1}  \rmin^{j-i-1} \right ) -  \rmax^i    \left  (i+1 +   \sum_{j > i+1} j \rmin^{j-i-1}  \right )  } \\
&=& \frac {  \sum_{j \le i}  \rmax^j ( j - C)    }{
	-  \rmax^i    \left  (i+1 - C +   \sum_{j > i+1} (j -C )\rmin^{j-i-1}  \right )  }. \nonumber
\end{eqnarray}
\eop

\begin{lemma} \label{Lemma_swap} \textbf{ Improvement  Through  Swapping }\\
For $ i < C$, if $\rho_i^\phi < \rho_{i+1}^\phi$ in a SM policy $\phi$ for the optimization problem defined in part (a) of Theorem \ref{Thm_Opt_Policy}, then we can get an policy, say $\phi^{swap}$, by swapping the $i$ and $i+1$-th components in $\phi$, such that:
\begin{eqnarray*}
f(\phi^{swap}) > f(\phi) \ \ \mbox{and} \ \ g(\phi^{swap}) < g(\phi)
\end{eqnarray*}

That means, we can improve the solution of the optimization problem by defining~ $\phi^{swap} = \left( \rho^{swap}_1, \rho^{swap}_2, \cdots, \rho^{swap}_i, \cdots \right)$ as follows:
\beq
\rho^{swap}_j = \rho^\phi_j \ \ \forall j \neq i, i+1 \ \ and \ \ \rho^{swap}_i = \rho^\phi_{i+1}, \ \  \rho^{swap}_{i+1} = \rho^\phi_i.  
\eeq
\end{lemma}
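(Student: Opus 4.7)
The plan is to exploit the fact that a swap of two adjacent $\rho$-coordinates leaves all but one of the partial products $\h_k^\phi := \prod_{j=1}^k \rho_j^\phi$ unchanged, and to track the sign of that single perturbation in both $f$ and $g$.

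First I would observe that $\h_k^{swap} = \h_k^\phi$ for every $k \neq i$. Indeed, for $k < i$ neither $\rho_i$ nor $\rho_{i+1}$ appears in the product $\prod_{j=1}^k \rho_j^{swap}$, so it coincides with $\prod_{j=1}^k \rho_j^\phi$; for $k \geq i+1$, both $\rho^{swap}_i = \rho^\phi_{i+1}$ and $\rho^{swap}_{i+1} = \rho^\phi_i$ appear as factors together with the same $\rho^\phi_j$ ($j \neq i, i+1$), so by commutativity of multiplication $\prod_{j=1}^k \rho_j^{swap} = \prod_{j=1}^k \rho_j^\phi = \h_k^\phi$. Only at $k = i$ do the two products differ:
\begin{equation*}
\h_i^{swap} - \h_i^\phi \;=\; \h_{i-1}^\phi\bigl(\rho^\phi_{i+1} - \rho^\phi_i\bigr) \;>\; 0,
\end{equation*}
where the strict inequality uses the hypothesis $\rho_i^\phi < \rho_{i+1}^\phi$ and the fact that $\h_{i-1}^\phi > 0$.

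From this single-coordinate perturbation both conclusions follow immediately by term-by-term subtraction in the series defining $f$ and $g$:
\begin{equation*}
f(\phi^{swap}) - f(\phi) \;=\; \sum_{k \geq 0} \bigl(\h_k^{swap} - \h_k^\phi\bigr) \;=\; \h_i^{swap} - \h_i^\phi \;>\; 0,
\end{equation*}
\begin{equation*}
g(\phi^{swap}) - g(\phi) \;=\; \sum_{k \geq 0} (k-C)\bigl(\h_k^{swap} - \h_k^\phi\bigr) \;=\; (i-C)\bigl(\h_i^{swap} - \h_i^\phi\bigr) \;<\; 0,
\end{equation*}
where the sign of the last expression uses the hypothesis $i < C$, making $(i-C) < 0$, together with the positivity of $\h_i^{swap} - \h_i^\phi$ established above.

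There is no real obstacle here: the statement is essentially an adjacent-transposition/rearrangement identity, and the only work is in verifying that (i) the partial-product structure cancels for all indices except $k=i$, and (ii) the condition $i<C$ is exactly what is needed to flip the sign of the constraint contribution while preserving the sign of the objective contribution. The mildly subtle point worth writing out carefully is the invariance $\h_k^{swap} = \h_k^\phi$ for $k \geq i+1$, since this is what makes the whole argument collapse to a single nontrivial term.
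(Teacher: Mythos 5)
Your proposal is correct and follows essentially the same route as the paper's proof: both arguments reduce the swap to the single perturbed partial product at index $i$, compute $\h_i^{swap}-\h_i^\phi=\h_{i-1}^\phi(\rho^\phi_{i+1}-\rho^\phi_i)>0$, and read off the signs of $f(\phi^{swap})-f(\phi)$ and $g(\phi^{swap})-g(\phi)=(i-C)(\h_i^{swap}-\h_i^\phi)$ using $i<C$. Your explicit verification that $\h_k^{swap}=\h_k^\phi$ for $k\geq i+1$ via commutativity is just a more careful write-up of the same step the paper states directly.
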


\begin{proof}
The  term $\h^{\phi^{swap}}_j$ corresponding to the policy $\phi^{swap}$, will be given as:
\begin{eqnarray*}
\h^{\phi^{swap}}_j = \left \{  \begin{array}{llll}
\Pi_{ 1 \le k \leq j} \ \rho^\phi_k  = \h_j^{\phi} \ \ \ \forall \ j \leq i-1 \\
\h_{i-1}^{\phi} \rho^\phi_{i+1}   \ \ \ \ \ \ \ \ \ \ j = i \\\
\h_j^{\phi}  \ \ \ \ \ \ \ \hspace{0.9cm} \forall \ j \geq i+1.
\end{array} \right . 
\end{eqnarray*}
First, consider the difference between the objective function corresponding to these two policies is given by:
\beq
f(\phi^{swap}) - f(\phi) = \sum_j (\h^{\phi^{swap}}_j - \h^\phi_j) = \h^\phi_{i-1} (\rho^\phi_{i+1} - \rho^\phi_{i}).
\eeq
Since, $\rho^\phi_i < \rho^\phi_{i+1}$ it is clear from above equation that, $f(\phi^{swap}) > f(\phi)$. Therefore, the objective function improves.

Next, consider the difference between the constraints,
\beq
g(\phi^{swap}) - g(\phi) = \sum_j (\h^{\phi^{swap}}_j - \h^\phi_j)(j-c) = \h_{i-1}^{\phi} (\rho^\phi_{i+1} - \rho^\phi_{i}) (i-C).
\eeq
From above equation, using $\rho^\phi_i < \rho^\phi_{i+1}$ and $i <C$, we conclude that $g(\phi^{swap}) - g(\phi) < 0 $. That means $g(\phi^{swap}) < g(\phi)$ and hence we get an improved solution.
\end{proof}

\begin{lemma} \label{Lemma_epsilon} \textbf{Improvement Through Adding $\epsilon$}\\
Consider a SM policy $\phi$ for the optimization problem  defined in part (a) of Theorem \ref{Thm_Opt_Policy}. For $i \geq C$, if there exists an $\rho^\phi_i < \rmax$, then we can get a policy, say $\phi^\epsilon$, by adding $\epsilon > 0$ to the $i$-th component of policy $\phi$, such that:
\beq
f(\phi^\epsilon) > f(\phi) \ \ \mbox{and} \ \ g(\phi^\epsilon) \leq 0.
\eeq
\end{lemma}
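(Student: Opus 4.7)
The plan is to construct $\phi^\epsilon$ by a single-coordinate perturbation: set $\rho_i^{\phi^\epsilon} := \rho_i^\phi + \epsilon$ for some $\epsilon \in (0, \rmax - \rho_i^\phi)$ and leave $\rho_j^{\phi^\epsilon} := \rho_j^\phi$ for every $j \neq i$. Because $\h_j^\phi = \prod_{k=1}^j \rho_k^\phi$, the perturbation leaves $\h_j^{\phi^\epsilon}=\h_j^\phi$ unchanged for all $j<i$ and multiplies $\h_j^\phi$ by the common factor $1 + \epsilon/\rho_i^\phi$ for every $j \geq i$. Consequently,
\begin{eqnarray*}
f(\phi^\epsilon) - f(\phi) &=& \frac{\epsilon}{\rho_i^\phi}\sum_{j \geq i} \h_j^\phi, \\
g(\phi^\epsilon) - g(\phi) &=& \frac{\epsilon}{\rho_i^\phi}\sum_{j \geq i} (j-C)\,\h_j^\phi.
\end{eqnarray*}

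Since $\h_j^\phi > 0$, the first line is strictly positive, establishing $f(\phi^\epsilon) > f(\phi)$. For the second line, the hypothesis $i \geq C$ implies $j - C \geq i - C \geq 0$ for every $j \geq i$, so each summand is non-negative and the series is bounded above by $\sum_{j \geq 0} j\,\h_j^\phi$, which is finite (feasibility of $\phi$ together with $C<\infty$ forces $E^\phi[N] = \sum_j j\h_j^\phi/\sum_j \h_j^\phi$ finite, and hence $\sum_j j\h_j^\phi < \infty$). Thus the right-hand side above is a finite non-negative quantity that is linear in $\epsilon$. Invoking the strict slack $g(\phi) < 0$ available in the context of part~(b) where the lemma is applied, I would then choose $\epsilon > 0$ sufficiently small so that the finite increment $(\epsilon/\rho_i^\phi)\sum_{j \geq i}(j-C)\h_j^\phi$ does not exceed $|g(\phi)|$, ensuring $g(\phi^\epsilon) \leq 0$; the constraint $\epsilon < \rmax - \rho_i^\phi$ keeps $\phi^\epsilon$ a valid policy.

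The only delicate point is the convergence of $\sum_{j \geq i}(j-C)\h_j^\phi$, which must be finite for the small-$\epsilon$ argument to work. This is not entirely automatic from $\phi$ being a policy, but it follows from feasibility: if $g(\phi) \leq 0$ is to have meaning and $E^\phi[N]$ is to be bounded by $C$, the tail $\sum j\h_j^\phi$ must converge. Once this is in hand, the perturbation construction above delivers $f(\phi^\epsilon) > f(\phi)$ with $g(\phi^\epsilon) \leq 0$, as required.
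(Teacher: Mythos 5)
Your proposal is correct and follows essentially the same route as the paper: the identical single-coordinate perturbation $\rho_i^\phi \mapsto \rho_i^\phi + \epsilon$, the same computation of the increments to $f$ and $g$ (your common-factor form $1+\epsilon/\rho_i^\phi$ is just the paper's expression $\h_{i-1}^\phi\,\epsilon\,\bigl(1+\sum_{j\ge i+1}\prod_{k=i+1}^{j}\rho_k^\phi\bigr)$ rewritten), and the same choice of $\epsilon$ small enough to preserve feasibility. If anything, you are slightly more careful than the paper in flagging that the tail series must be finite and that the argument really uses the strict slack $g(\phi)<0$ available where the lemma is invoked in part~(b).
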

\begin{proof}
Since $C < C_{max} = \frac{\rmax}{1 - \rmax}$, there exists at least one $i$ such that $\rho^\phi_i < \rmax$. Let $\rho^\phi_i < \rmax$. Define a policy $\phi^\epsilon = \left( \rho^{\phi^\epsilon}_1, \rho^{\phi^\epsilon}_2, \cdots, \rho^{\phi^\epsilon}_i, \cdots \right) $ as follows:
\beq
\rho_j^{\phi^\epsilon} = \rho^\phi_j \ \  \forall j \neq i \ \ \mbox{and} \ \ \rho_i^{\phi^\epsilon} = \rho^\phi_i + \epsilon.
\eeq
%Let $\{ \pi^{\epsilon}(j) \}_{j \geq 0} $ represent the stationary distribution corresponding to the policy $\phi^\epsilon$.
Define, $\h_0^{\phi^\epsilon}  = 1,$ and for $i \geq 1$,  $\h_i^{\phi^\epsilon} = \prod_{1 \le j \le i}  \rho_j^{\phi^\epsilon}
$ .
First consider the difference in the objective function under the two policies: 
\begin{eqnarray}
   f(\phi^\epsilon) - f(\phi) &=&  \sum_j \h_j^{\phi^\epsilon} - \sum_j \h_j^{\phi} = \h_{i - 1}^{\phi}   \left ( ( \rho_i^{\phi^\epsilon} - \rho^\phi_i) + \sum_{j \geq i+1}^{\infty} \left  (\prod_{k = i}^j \rho_k^{\phi^\epsilon} - \prod_{k = i}^j  \rho^\phi_k \right ) \right )  \nonumber\\ \nonumber
&=& \h_{i - 1}^\phi (\rho_i^{\phi^\epsilon} - \rho^\phi_i) \bigg( 1  + \sum_{j \geq i+1}^{\infty} \prod_{k = i+1}^j  \rho^\phi_k  \bigg).
\end{eqnarray}

Objective function will improve, if the above difference between the objective function is positive. Clearly, the first and last term in the above equation is positive. So, to make the difference positive we require the middle term positive, i.e., $(\rho_i^{\phi^\epsilon} - \rho^\phi_i) > 0$. This is obviously true as $(\rho_i^{\phi^\epsilon} - \rho^\phi_i) = \epsilon$, which is positive.

Now we claim that the policy $\phi^\epsilon$ is in feasible region, as
\beq
g(\phi^\epsilon)  &=& \sum_j \h_j^{\phi^\epsilon} (j - C) \\
&=& g(\phi) + \h_{i-1}^{\phi}\epsilon \bigg( i-C + \sum_{j \geq i+1} (j-C)\ \Pi_{k = i+1}^j \rho^\phi_k \bigg).
\eeq
$g(\phi) \leq 0$ and we can choose $\epsilon >0$ small enough such that the overall term in the RHS of above equation will remain negative.

Thus, we get that the policy $\phi^\epsilon$ in the feasible region which also improve the objective function.
\end{proof}

\end{document}